\documentclass[11pt,letterpaper]{article}
\usepackage[left=1in, right=1in, top=1in, bottom=1in]{geometry}
\usepackage{booktabs} 
\usepackage[ruled,noend]{algorithm2e} 
\usepackage[utf8]{inputenc}

\SetAlFnt{\small}
\SetAlCapFnt{\small}
\SetAlCapNameFnt{\small}
\SetAlCapHSkip{0pt}
\IncMargin{-\parindent}




\usepackage[]{local}

\author{
Zohar Barak\thanks{Tel Aviv University, \url{zoharbarak@mail.tau.ac.il}} 
\and
Anupam Gupta\thanks{New York University and Google Research, \url{anupam.g@nyu.edu}}
\and
Inbal Talgam-Cohen\thanks{Tel Aviv University, \url{inbaltalgam@gmail.com}}
}
\date{}

\begin{document}


\title{MAC Advice for Facility Location Mechanism Design} 

\maketitle



\begin{abstract}

Algorithms with predictions have attracted much attention in the last years across various domains, including variants of facility location, as a way to surpass traditional worst-case analyses. We study the $k$-facility location mechanism design problem,
where the $n$ agents are strategic and might misreport their
location.

Unlike previous models, where predictions are for the $k$ optimal facility locations, we receive $n$ predictions for the
locations of each of the agents.
However, these predictions are only ``mostly'' and ``approximately''
correct (or $\mac$ for short) --- i.e., some $\delta$-fraction of the predicted locations are
allowed to be arbitrarily incorrect, and the remainder of the
predictions are allowed to be correct up to an $\eps$-error. 
We make no assumption on the independence of the errors.
Can such predictions allow us to beat the current best bounds for strategyproof
facility location?

We show that the $1$-median 
(geometric median)
of a set of points is
naturally robust under corruptions, which leads to an algorithm
for single-facility location with MAC predictions. We extend the robustness result to a ``balanced'' variant of the $k$ facilities case. 
Without balancedness, we show that robustness completely breaks down, even for the setting of $k=2$ facilities on a line.
For this ``unbalanced'' setting, we devise a truthful random mechanism that
outperforms the best known result of Lu et al.~[2010],
which does not use
predictions. En route, we introduce the problem of ``second'' facility
location (when the first facility's location is already fixed).  Our
findings on the robustness of the $1$-median and more generally $k$-medians may
be of independent interest, as quantitative versions of classic breakdown-point results in robust statistics.

\end{abstract}


\section{Introduction}
\label{sec:intro}
Algorithms with predictions is a popular field of study in recent years, falling within the paradigm of beyond the worst case analysis of algorithms (see~\cite{DBLP:journals/cacm/MitzenmacherV22} for a comprehensive survey). 
Motivated by developments in machine learning (ML), this area of study assumes that algorithms can access predictions regarding the input or solution, thus leveraging predictability in typical computational problems. 
Recently, \citet{agrawal2022learning} and \citet{ijcai2022p81} proposed to study predictions in the context of mechanism design, where they have the potential to transform existing designs by adding information about agents' private knowledge (see~\cite{BalkanskiGT23} for a summary of recent research). 

Our focus in this work is on the canonical problem of facility location.
The (utilitarian) $k$-facility location problem is as follows: 
Consider a multi-set of $n$ points $X \subset \mathbb{R}^d$ and $k$ facility locations; the cost of point $x\in X$ is the minimum distance between $x$ and any of the $k$ locations. 
The utilitarian (or social) cost is the sum of costs of all points in $X$. 
The goal is to compute $k$ locations that minimize the utilitarian cost. 
In the context of mechanism design, the points $X$ are the privately-known true locations of strategic agents, and the mechanism receives location reports 
(see \cite{ijcai2021p596} for a recent survey). 
Facility location mechanism design with predictions has been studied by~\citet{agrawal2022learning,ijcai2022p81}.%
\footnote{Predictions have also been studied for the online non-strategic version of facility location --- see Appendix~\ref{sub:more-related}.} 
In this work, we assume a prediction for each point --- that is, the advice $X'$ consists of a prediction $x'_i$ for every $x_i\in X$. 

\paragraph{Worst-case prediction error.} 
In the majority of the literature, the goal of (algorithm or mechanism) design with predictions centers around two seemingly-conflicting properties: \emph{robustness} to erroneous predictions, and \emph{consistency} with predictions that are non-erroneous. 
A fairly common way of achieving both consistency and robustness is interpolating between the approaches of completely following the predictions, and applying the best worst-case algorithm while disregarding the predictions.
An ideal design is one that gracefully transitions, as the \emph{prediction error} increases, from optimal performance when the prediction is correct, to the best-known worst-case performance ~(\cite{lykouris2021competitive,purohit2018improving, agrawal2022learning}). 
By definition, achieving such graceful degradation hinges on how the prediction error is measured.

Arguably the most common measure of prediction error is the distance between the predicted and actual values: 
If $X = \{ x_1, \ldots, x_n \}$ contains the actual values and $X' = \{ x'_1, \ldots, x'_n \}$ is the prediction, the error is defined as $\eta = \ell_p(X - X')$, where $\ell_p$ represents either the $\ell_1$ norm ($\ell_1(t) = \sum_i |t_i|$) or the $\ell_\infty$ norm ($\ell_\infty (t) = \max_i |t_i|$).
This is in some sense a ``worst case'' measure of prediction error, since a single deviation in one of the $n$ entries can significantly inflate the error (the $\ell_1$ and $\ell_\infty$ norms are sensitive to large changes in individual coordinates).
Prediction errors measured in this way appear, e.g., in the context of ski rental (~\cite{purohit2018improving}), online caching ~(\cite{lykouris2021competitive,rohatgi2020near}), the secretary problem ~(\cite{antoniadis2020secretary}), 
single-item auctions~(\cite{ijcai2022p81}), and many other problems.%
\footnote{In some of these works, the prediction is a single value characterizing $X$, which is sensitive to a single change in~$X$.} We address this type of prediction error as the ``worst case'' error model.

\paragraph{Alternative measures of prediction accuracy.}

Consider the following $k$-facility location instance: 

\begin{example}[$2$-facility location sensitivity]
\label{example:non-robustness-of-two-median}
Let $k=2$, and assume the $n$ agents' actual locations $X$ are divided evenly between 0 and 1 on the real line. Let $X'$ be the prediction, with all coordinates predicted accurately except for a single coordinate $i$, which is predicted erroneously to be located at $x'_i=M\gg n$. While the optimal solution for $X$ is to place the $k=2$ facilities at $0$ and $1$ for a total cost of zero, in the optimal solution for $X'$ one of the facilities moves to $M$. 
\end{example}

In this example, a single error causes the optimal solution for prediction $X'$ to perform quite badly for the actual dataset $X$. It thus seems reasonable to deem the prediction error high, and seek a robust design that essentially ignores $X'$. 
However, in many ML contexts, a large prediction error is to be expected --- even excellent predictors will err on a certain fraction of the dataset~(\cite{gupta2022augmenting}).
Moreover, we would expect that if this fraction is not too large, the prediction will still be valuable despite the high ``worst case'' prediction error. In other words, a high prediction error does not always justify ``throwing out'' the prediction and settling for the best worst-case guarantee. 
This motivates alternative measures of prediction accuracy.
Such alternatives seem especially promising for mechanism design, in which in addition to the prediction, we have the reports of the agents to help us disambiguate correct predictions from errors.
A concrete question in this vein is the following: 

\begin{question}[Mechanism design for $2$-facility location on a line]
\label{qu:example}
    In settings such as Example~\ref{example:non-robustness-of-two-median}, can we design a mechanism that 
    uses agent reports to get a good solution for $X$ despite the large worst case prediction error in $X'$? 
\end{question}
In this work we give a positive answer to this question.


\subsection{Our Model} 

We study $\mac$ predictions, where $\mac$ stands for Mostly Approximately Correct. 
The $\mac$ model is a formulation that captures the essential features of some closely related previous models:
e.g., the model of~\citet{azar2022online} introduced in the context of online metric problems,
and the PMAC model of~\citet{balcan2018submodular} 
introduced in the context of learning submodular functions.

A vector of predictions $X'$ is $\mac(\eps, \delta)$ with respect to dataset $X$ if at least a $(1 - \delta)$-fraction (i.e., most) of the predictions are approximately correct up to an additive $\eps$-error. 
Intuitively, if we do not think it is likely that most of our predictions will be close to being correct ---we would typically refrain from relying on them in decision-making processes.
Throughout, $\delta \in [0, 0.5)$ is assumed to be small (but can still be a constant).
Crucially, when a prediction is not approximately correct (i.e., it belongs to the $\delta$-fraction of incorrect predictions), the error is allowed to be arbitrarily large. Also, the $\delta$-fraction of wrong predictions is allowed to be arbitrary among all possible subsets of that size. 
We show that despite their adversarial nature, and the fact that the prediction error is unbounded, such predictions can be used, in combination with strategic reports, to produce a solution with better cost compared to the best no-prediction solution.

The $\mac$ prediction model may be suitable for capturing errors from a variety of sources, mainly:

\begin{itemize}[nosep]

    \item {\it Machine learning}. 
    In practice, ML models produce predictions which are mostly accurate, and the accuracy parameters $\epsilon,\delta$ can be estimated through testing. 
    For example, we may have a predictor with the guarantee that each prediction 
    (independently) will be $\eps$-correct with probability at least $1 - \delta$. 
    Thus, through concentration bounds such as Chernoff, 
    with high probability the number of predictions which are $\eps$-incorrect is at most $\approx \delta n$. This was the motivation in~\cite{balcan2018submodular}.
        
    Note that unlike this example, the $\mac$ model does not require that the predictor's errors are independent.
    
    \item {\it Corruptions}.
    In a setting in which the algorithm's input itself might contain errors, we can treat the corrupted input as a prediction of the true input and apply the $\mac$ model. Corruptions can be a result of adversarial changes to the algorithm's input, e.g., in a malicious attempt to affect the output --- in which case the adversary may change only a $\delta$-fraction of the input to avoid getting caught. Corruptions can also capture \emph{outliers} in the data, e.g., as a result of rare but arbitrary measurement errors. This was the motivation in~\cite{azar2022online}.
    As another example, the input data may have been collected at a certain point in time, while the ground truth continues to evolve 
    --- in the context of facility location, a population census may form the prediction, with errors occurring since a small fraction of the population has relocated since being surveyed. 
    
\end{itemize}

The accuracy parameters $\eps,\delta$ of the $\mac$ model can be viewed as \emph{confidence parameters}. The use of confidence parameters is prevalent in the algorithms with predictions literature (see, e.g.,~\cite{agrawal2022learning}). Arguably, $\eps,\delta$ are highly \emph{explainable} compared to most confidence measures in the literature. 
It is also quite natural to estimate them from data.

\subsection{Our Results}


We study mechanism design with $\mac$ predictions, designing both deterministic and randomized mechanisms for facility location problems.
A preliminary simple observation is that in the context of facility location, it suffices to consider mostly-correct (i.e., $\mac(0,\delta)$) predictions, 
since handling a nonzero approximation parameter $\eps$ follows directly
(see \cref{subsec:the-effect-of-eps}). 
After developing useful tools in Section~\ref{sec:robustness}, 
in Sections~\ref{sec:deterministic}-\ref{sec:rand-two-fac} we design and analyze the following strategyproof anonymous mechanisms with $\mac(0,\delta)$ predictions for the $n$ agent locations: 

\begin{enumerate}
    \item For \emph{single-facility location in $\R^d$}, we design a deterministic mechanism that guarantees an approximation ratio of $\min\{1 + \frac{4\delta}{1-2\delta}, \sqrt{d}\}$ (see \cref{alg:best_choice_single_facility_loc} and \cref{thm:single_fac_loc}).
    For sufficiently small (but still constant) $\delta$, this improves upon the no-prediction guarantee of $\sqrt{d}$ by~\citet{meir2019strategyproof}, which is tight.
    Our mechanism decides, based on the value of $\delta$, whether to use the 
    mechanism of~\citet{meir2019strategyproof} on the location reports, or the $\geomed$ of the predictions as defined and analyzed in Section~\ref{sec:geom-median}.

    \item As our main technical result, for \emph{$2$-facility location on a line}, we design a randomized mechanism that guarantees an expected approximation ratio of $3.6 + O(\delta)$ 
    (see \cref{alg:predict-and-choose-proportionally} and \cref{thm:rand-two-fac}).
    For sufficiently small (but still constant) $\delta$, this improves upon the best-known no-prediction expected guarantee of $4$ by \citet{lu2010asymptotically}.%
    \footnote{The mechanism of~\citet{lu2010asymptotically} works for any metric space.}
    This result provides a positive answer to \cref{qu:example}. We give a brief description of the mechanism in Section~\ref{sub:techniques}. 
    Note that $2$-facility location on a line was studied by~\citet{procaccia2013approximate}, and they provided a \emph{deterministic} mechanism with an $(n-2)$ approximation ratio, which is tight~\cite{lu2010asymptotically,fotakis2014power}. 
    
    
    \item For \emph{$\beta$-balanced $k$-facility location} with constant $k$ (in any metric space), which is a natural extension of facility location 
    where at least a $\beta$-fraction of the $n$ points must be assigned to each of the $k$ facilities (for cost-sharing or load-balancing reasons), 
    we design a simple deterministic strategyproof mechanism. 
    Our mechanism guarantees an approximation ratio of $c + O(\frac{\delta}{\beta})$ where $c$ is a constant (see Algorithm~\ref{alg:balanced_k_facility_loc} and Theorem~\ref{thm:blanaced-k-fac}). For a sufficiently small (but still constant)~$\delta$, this greatly improves upon the best-known no-prediction guarantee of $\nf{n}{2} - 1$ by \citet{aziz2020facility} (we remark that \citet{aziz2020facility} study a variant called \emph{capacitated}, to which our result applies).%
    \footnote{$\beta$-balanced means every cluster is of size at least $\beta n$. This implies a capacity of $n - (k-1)\beta n$ on the cluster size.}

\end{enumerate}

While we have not optimized the constants, our conceptual contribution is in showing that even if the worst case error of the predictions is unbounded, it can still provide useful information for improving the performance of facility location mechanisms when coupled with strategic reports. 
In other words, the $\mac$ model allows achieving robustness to outliers while still beating the no-prediction performance. 
Our single-facility result demonstrates how the accuracy parameter of the model, $\delta$, can serve as a trust parameter which has natural explainability.
Another interesting feature of our results is that utilizing $\mac$ predictions seems to suggest a richer family of mechanisms than standard interpolations of no-prediction and complete trust of the predictions.

\subsection{Our Techniques}
\label{sub:techniques}

Our main result requires a combination of techniques as we now describe.

\paragraph{Robust statistics.} 
Of possible independent interest, we develop quantitative versions of classic results in robust statistics on \emph{breakdown points}.
Consider a \emph{location estimator}, a function that gets $n$ points in a metric space and returns $k$ locations. 
Its breakdown point is the proportion $\delta$ of adversarially-modified points it can handle before returning an arbitrarily far result from the original estimator. It is well-known that the mean has a breakdown point of $\delta=\nf{1}{n}$ (consider changing one point to an arbitrarily large value)
and that the median is robust with a breakdown point of $\delta=\nf{1}{2}$. 
However, the notion of breakdown point is qualitative, and does not reveal the relation between the fraction $\delta$ of modified points, and how much the estimator changes. 
In \cref{sub:notions} we define two measures of $\delta$-robustness applying to location estimators: distance robustness (\cref{def:dist-robust}) and approximation robustness (\cref{def:approx_robust}). 
Distance robustness upper-bounds the distance between the original and modified location estimator as a function of $\delta$, while approximation robustness is with respect to a cost function (e.g., the distances sum of the points from the estimator), and upper-bounds the change in cost assigned to the location estimator. We also show a connection between the two notions (Lemma~\ref{lem:switch}).

\paragraph{The robustness of the median and its generalizations.} 

In Section~\ref{sec:geom-median} we consider the robustness of $1$-median and $k$-medians when applied to a dataset $X\in V^n$ in a metric space $(V,d)$ (where $k$-medians is $k$ locations that minimize the total distance of all points in $X$ to their nearest center among these $k$ locations).
For $1$-median (also known as the geometric median) we show good distance robustness of $O_\delta(1) \cdot \MAD(X)$ where $O_\delta(1)$ is a small $\delta$-dependent constant, and $MAD$ stands for Mean Average Deviation, a known notion of variance defined as $\MAD(X) := \frac{1}{n}\sum_{x_i \in X}{d(x_i, \geomed(X))}$. An interpretation of our result is that given a small $\delta$, the robustness of the geometric median depends on the $\MAD$ of the points,%
\footnote{Intuitively, for data with high $\MAD$ (or variance), there is no single point that represents it in a good way.}
and the quantification of this relation is given by \cref{thm:one-med-dist-robust} and \cref{1_median_approx_robustness_results}. There is some notion of robustness in the robust statistics literature for the geometric median: Given a dataset in $\R^d$ generated by a Gaussian distribution with $\delta$ fraction of outliers, the distance between the mean of the distribution and the geometric median of the dataset is $\Omega(\delta \sqrt{d})$ ~(\cite{lai2016agnostic}). However, we are not aware of other known results such as ours.
The robustness of $1$-median is used for mechanism design with $\mac$ predictions for $1$-facility location, as well as for our main result.

In \cref{sub:k-medians} we show that for $k > 1$ there is no hope for robustness of $k$-medians. We thus turn to $\beta$-balanced $k$-medians, which are $k$ medians that induce clusters of size at least $\beta n$, where $n$ is the total number of points (see \cref{def:center-induced}). Clustering with the additional constraint of minimum cluster size has applications such as data privacy (\cite{aggarwal2010achieving}). The robustness of $\beta$-balanced $k$-medians implies that balanced clustering algorithms can be robust to adversarial corruptions or outliers. 


\paragraph{The second facility location problem.} 
In \cref{subsec:second-facility-loc-problem} we introduce the second facility location problem. 
It is often the case that facilities are created incrementally and not all at once. 
Hence, it is natural to ask how can we design a mechanism for choosing the location of a new facility (say a second hospital) given an existing one. We show that a simple variant of the \emph{proportional mechanism} by \citet{lu2010asymptotically} achieves a (tight) approximation ratio of $3$ for the line metric. 


\paragraph{Putting it all together: Beyond interpolation.}

We use a combination of the above technical results to prove that our randomized mechanism for $2$-facility location on a line has good properties. In Section~\ref{sec:big-cluster-center-robustness} we define a new location estimator named $\bcc$ and show its $\delta$-robustness. We also use the $\delta$-robustness of $\beta$-balanced $k$-medians and the mechanism for the second facility location problem on a line. Our mechanism differs from many existing designs which cleverly interpolate between the two approaches of trusting the predictions and using the no-predictions algorithm.


\subsection{Related Work}

\paragraph{Facility location mechanism design with predictions.}
In the ``worst case'' error model, \citet{agrawal2022learning} show a mechanism for \emph{single-facility location} on the real plane $\R^2$.
Their prediction consists of a single value representing advice for where to locate the facility.
Given a trust parameter $c \in [0,1]$, their mechanism guarantees an approximation ratio of at most $\frac{\sqrt{2c^2 + 2}}{1+c}$ if the prediction is perfect, and $\frac{\sqrt{2c^2 + 2}}{1-c}$ in the worst case (if the prediction is arbitrarily bad).
In theory, we could reduce the $\mac$ model to the one studied by \citet{agrawal2022learning}, by setting their single prediction to be the geometric median of the $\mac$ predictions. Then $c$ can be determined according to $\delta$. However, knowing $\delta$ and the $\delta$-robustness of $\OneMed$, we can directly choose either the no-predictions mechanism or the predictions' $\OneMed$, rather than interpolating between these two solutions.

Mechanism design for the \emph{$2$-facility location} on the line with predictions is studied by \citet{ijcai2022p81}, where the predictions are for the locations of the agents. While they do not define an error parameter, their deterministic mechanism is in line with the ``worst case'' prediction error model: it achieves an approximation ratio of $\frac{n}{2}$ if the predictions are perfect, and an approximation ratio of $2n$ if they  are arbitrarily bad. 
\citet{DBLP:journals/corr/abs-2212-09521} study mechanism design with predictions for obnoxious facility location where agents prefer to be as far as possible from the facilities.

\paragraph{Two-facility location on a line mechanism design without predictions.}

In the setting of no-predictions,  the best strategyproof deterministic anonymous mechanism achieves a tight approximation ratio of $n-2$~(\cite{procaccia2013approximate, fotakis2014power}). In comparison, in our $\mac$ model under the additional assumption that the number of agents assigned to each facility in the optimal solution is of minimum size $\Omega(\delta n)$, the deterministic mechanism 
 in \cref{alg:balanced_k_facility_loc} achieves a constant approximation ratio. Under no further assumption, our randomized mechanism in \cref{alg:predict-and-choose-proportionally} achieves a constant expected approximation ratio for small $\delta$.

\paragraph{Models of prediction accuracy.}
\citet{azar2022online} study online graph algorithms with a notion of error they call \emph{metric error with outliers}, applied to problems where the predictions are locations in a metric space. 
The error parameters of this model are $D$ and $\Delta$---roughly, $D$ is the minimum cost matching between the predictions and the actual locations, where $\Delta$ outliers may be excluded from the matching. These parameters are similar to the $(\eps, \delta)$ parameters of the $\mac$ model. They then give a general framework for solving online graph problems in their model.
Interestingly, their matching cost $D$ captures a notion of average error, whereas we focus on individual error; while our results can be extended to fit within their model, it would be interesting to investigate a version of $\mac$ where $\eps$ captures some notion of average error.

\citet{GkatzelisKST22} generalize the outlier model of~\citet{azar2022online} and apply it to improve the price of anarchy of cost-sharing mechanisms. \citet{xu2022learning} study online Steiner tree problems with predictions where the error parameter is the number of erroneous predictions. \citet{bernardini2022universal} introduce a model named the ``cover error model'' that generalizes the model of~\citet{azar2022online} to achieve more precision for online problems.

\citet{gupta2022augmenting} study predictions that are incorrect independently with probability $\delta$.
Their work focuses on online algorithms rather than mechanism design. Our model is related to their model since independent errors imply $\mac$ predictions w.h.p.~by Chernoff. The $\mac$ model is more general in the sense that it does not assume independence of erroneous predictions. \citet{emek2023online} also consider a similar model to the one of \citet{gupta2022augmenting}, assuming access to predictions that are correct with some probability or contain random bits otherwise. \citet{dutting2021secretaries} consider a signaling scheme advice model for the secretary, which for binary ML-advice roughly translates into measures of accuracy of the signals (named recall and specificity).

\citet{jiang2021onlineselection} consider a prediction guaranteed to be inside a prediction interval. While their error model is closer to the ``worst case'' error model, their motivation is confidence intervals from statistics, where it is reasonable to assume that a confidence interval is achieved with probability $1 - \delta$ for some $\delta \in [0,1]$. \citet{sun2023online} consider an approach for ``uncertainty quantified'' prediction. One class of predictions they focus on, the ``probabilistic interval prediction'', can also be seen as a a confidence interval, as in the motivation of \citet{jiang2021onlineselection}.
Like our $\mac$ model, they ``model in'' the amount of trust in the advice. \citet{sun2023online} focus on settings where the algorithm receives a single prediction (such as the ski rental problem). They use online learning for tuning the trust parameters, and introduce a regret minimization analysis for learning the trust parameter online (over many instances of the same problem). The $\mac$ model addresses problems that necessitate the quantification of uncertainty for multiple predictions simultaneously within the same input instance, such as in the $k$-facility location problem.

Additional related work is deferred to Appendix~\ref{sub:more-related}.

\section{Preliminaries}
\label{sec:notation}

Let $(V, d)$ be a metric space and $n \in \N$. 

\subsection{Predictions}

The input to the facility location problem is a
multi-set $X = \{x_1, \ldots, x_n\} \in V^n$. 
We are also given a
\emph{prediction}, which is another multi-set
$X' = \{x'_1, \ldots, x'_n\} \in V^n$. 
The distance between a point $u$ and a multi-set $W$ is $d(u,W)=\min_{w\in W} d(u,w).$
The \emph{Hausdorff distance}
between multi-sets $U,W \in V^n$ is
$$
d_H(U,W) := \max\{ \max_{u \in U} d(u,W), \max_{w \in W} d(w,U)\},
$$ 
that is, the maximum distance between a point in one multi-set and the other multi-set.



\begin{definition}
  Fix $\eps > 0$. For $x,x'\in V$, we say $x'$ is an
  \emph{$\eps$-correct} prediction of $x$ if $d(x, x') \le \eps$,
  otherwise we say it is \emph{$\eps$-incorrect}.
\end{definition}
\begin{definition}
  The \emph{$(\eps,r)$-neighborhood} of multi-set $X \in V^n$ is defined to be all
  predictions $X'\in V^n$ such that
  $\abs{\{i \in [n] \mid x'_i\text{ is an }\eps\text{-incorrect prediction
    of }x_i\}} \leq r$. 
\end{definition}

The $(0,r)$-neighborhood of $X$ is often just
  referred to as the \emph{$r$-neighborhood} of $X$.



\begin{definition}[Mostly Approximately Correct (\mac) Predictions] \label{def:mac-pred}
  Fix $\eps > 0$, $\delta \in [0, 0.5)$.
  The point set $X'$ is an \emph{$(\eps, \delta)$-\mac prediction} for $X$ if $X'$
  belongs to the $(\eps, \delta|X|)$-neighborhood of $X$.
\end{definition}


\subsection{Location Estimators}

\begin{definition}[k-medians cost function]\label{def:kmed-cost}
Let $k \in \N$. Given multi-sets  $X \in V^n, F \in V^k$, the \emph{$k$-medians cost function} is
\begin{gather}
  \kmed(X,F) := \sum_{x \in X} d(x,F).
\end{gather}
In the context of the facility location problem, this function is also known as the social cost function or the utilitarian goal function.
\end{definition}

The \emph{$k$-medians problem} takes as input a multi-set $X \in V^n$
and outputs a multi-set $F$ that minimizes this cost function. The minimizer $F^* \in V^k$ the problem is called the \emph{$\KMedSol$ solution}. Solving the $k$-medians problem can be viewed as finding a \emph{location estimator} of $X$ that optimizes the $k$-medians objective function:

\begin{definition}[Location Estimator]\label{def:loc-estimator}
    For $n, k \in \N$, a function $f(X): V^n \rightarrow V^k$ is called a location estimator.
\end{definition}

Some common examples of location estimators with $k=1$ for points on the real
line are the minimum, maximum,
mean, and median; for general metric spaces, $k$-means and $k$-medians are well-studied examples ($k$-means is similar to $k$-medians with squared distances).

For $k=1$ and $V = \R^d$, the $k$-medians solution is also called the $\geomed$ which is a generalization of the median to higher dimensions. A different such generalization of the median is the $\cwmed$:

\begin{definition}
    For a multi-set of $n$ points $X \subseteq \R^d$: $\cwmed(X) := (l_1, \ldots, l_d)$ where where $l_j$ is the median of the multi-set of the $j$'th coordinates of $X$ for all $j \in [d]$.
\end{definition}

\eat{
  
For the case
of $k = 1$, the quantity $\frac1{|X|} \kmed[1](X,\{v\})$ is often called
the \emph{mean absolute deviation} of $X$ around $v$, and denoted
$\mad(X,v)$.
Now the \emph{mean absolute deviation} of a point set $X$ is defined
as $\mad(X) := \min_{v \in V} \mad(X,v)$. Finally, the \emph{$\OneMed$} or \emph{geometric median} of the multi-set $X$, denoted by $\med(X)$, is defined to be a point $m \in V$ that achieves the minimum above.

\ag{Check if this definition is widely used, or only in one section.}
\zohar{I think it is only used in one section. I originally just defined it there (in the 1-median robustness section}.

\begin{definition}[Induced clustering]
  A collection of centers $F = \{f_1, \ldots, f_k\} \sse V^k$
  \emph{induces a partition} $\calC = \{C_1, \ldots, C_k\}$ of the
  dataset $X$ if
  $C_i \sse \{x \in X \mid d(x,f_i) \leq d(x,f_j) \, \forall
  j\}$. This partitions is called the clustering of $X$ induced by $F$. 
\end{definition}
Note that a collection $\calC$ need not be unique, since points in
$x$ can choose between any of their closest centers in $F$.

} 









\section{\texorpdfstring{$\delta$}{}-Robustness of Location Estimators}
\label{sec:robustness}

In this section we define and relate two notions of robustness to change in a $\delta$-fraction of the dataset (Section~\ref{sub:notions}). We analyze the $\delta$-robustness of 1-median and $k$-medians in Sections~\ref{sec:geom-median} and~\ref{sub:k-medians}, respectively. 
For $k$-medians we must add a \emph{balancedness} condition on the clusters to obtain robustness, and show its necessity.
In later sections we use these robustness results for mechanism design with $\mac$ predictions.
Note that our focus is on ``mostly correct'' datasets; we return to ``approximately correct'' as measured by the parameter $\eps$ in later sections.

\subsection{Robustness Notions}
\label{sub:notions}

We consider two robustness notions for location estimators, where the robustness is with respect to changing a $\delta$-fraction of the points (hence referred to as $\delta$-robustness):
The first notion 
measures the movement in the solution (as measured by the 
Hausdorff distance), and the second measures the change in an objective function applied to the solution. In Lemma~\ref{lem:switch} we relate the two notions.

\begin{definition}[Distance Robustness]
\label{def:dist-robust}
  Let $\rho\ge 0$, $\delta \in [0,0.5)$.
  For location estimators $f, \widehat{f}: V^n \rightarrow
  V^k$, 
  we say that
  $\widehat{f}$ is $(\rho,\delta)$-\emph{distance-robust} with respect to
  $f$ if for any $X \in V^n$ and any $X'$ in the $\delta \abs{X}$-neighborhood of $X$,
  \[
  d_H\prn*{f(X), \widehat{f}(X')} \le \rho.
  \]
  If  $\widehat{f} = f$, we say that $f$ is $(\rho,\delta)$-distance-robust.
\end{definition}

In the next definition, cost function $F$ is evaluated
for the same dataset $X \in V^n$ with respect to two different solutions $f(X)$ and
$\widehat{f}(X')$. 

\begin{definition}[Approximation Robustness]\label{def:approx_robust}
  Let $\gamma\ge 1$, $\delta \in [0,0.5)$
  and let $F: V^n \times V^k \to \R_{\geq 0}$ be a
  cost measure.  For location estimators $f, \widehat{f}: V^n \rightarrow V^k$, we
  say that $\widehat{f}$ is a
  $(\gamma,\delta)$-\emph{approximation-robust solution} for cost
  function $F$ with respect to $f$ if for any $X \in V^n$ and any $X'$
  in the $\delta \abs{X}$-neighborhood of $X$,
  \begin{equation}
  F(X, \widehat{f}(X')) \leq \gamma \cdot F(X, f(X)).\label{eq:approx}
  \end{equation}
  If $\widehat{f} = f$, we say that $f$ is a 
  $(\gamma,\delta)$-approximation-robust solution for $F$.
  If Eq.~\eqref{eq:approx} only holds for datasets $X \in \calY$
  for some $\calY \subseteq V^n$, we say that $f$ is a $(\gamma,
  \delta)$-approximation-robust solution for $F$ restricted to instances $\calY$.
\end{definition}

The following lemma shows
that distance robustness 
implies approximation robustness
if $F$ is the $k$-medians cost function.

\begin{lem}
  \label{lem:switch}
  Consider location estimators $f, \widehat{f}: V^n \to V^k$ that satisfy the following two properties:
  \begin{enumerate}[nosep]
  \item For any $X \in V^n$ and $X'$ in the $\delta \abs{X}$-neighborhood of $X$, 
    \[ \kmed(X', \widehat{f}(X')) \leq \kmed(X', f(X));\] 
  \item $\widehat{f}$ is $(\rho,\delta)$-distance-robust with respect to $f$.
  \end{enumerate}
  Then $\widehat{f}$ is a $\prn*{1 + \frac{2\delta \abs{X} \rho}{\kmed(X, f(X))},
  \delta}$-approximation-robust solution for $F = \kmed$ with respect
  to $f$.
\end{lem}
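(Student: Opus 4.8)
The plan is to prove the equivalent additive bound $\kmed(X, \widehat{f}(X')) \le \kmed(X, f(X)) + 2\delta\abs{X}\rho$, from which the claimed multiplicative factor follows immediately by dividing through by $\kmed(X, f(X))$. Write $n = \abs{X}$, let $S = \{i \in [n] : x_i \neq x'_i\}$ be the set of corrupted coordinates (so $\abs{S} \le \delta n$, since $X'$ lies in the $\delta n$-neighborhood of $X$), and abbreviate $F := f(X)$ and $\widehat{F} := \widehat{f}(X')$.

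The first move is to switch from evaluating $\widehat{F}$ against $X$ to evaluating it against $X'$, where hypothesis (1) becomes available. Since $x_i = x'_i$ for every $i \notin S$, the costs $\kmed(X, \widehat{F})$ and $\kmed(X', \widehat{F})$ differ only on coordinates in $S$, and the same bookkeeping holds for $\kmed(X, F)$ and $\kmed(X', F)$. Combining these two identities with hypothesis (1), namely $\kmed(X', \widehat{F}) \le \kmed(X', F)$, I would obtain
\[
\kmed(X, \widehat{F}) \le \kmed(X, F) + \sum_{i \in S}\bigl(d(x_i, \widehat{F}) - d(x_i, F)\bigr) + \sum_{i \in S}\bigl(d(x'_i, F) - d(x'_i, \widehat{F})\bigr),
\]
so that the entire discrepancy is now confined to the (few) corrupted indices.

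It then remains to bound each summand. Here hypothesis (2) enters: distance robustness gives $d_H(F, \widehat{F}) \le \rho$, so every center of one solution has a center of the other within distance $\rho$. A standard triangle-inequality argument, routed through the nearest center and its $\rho$-close partner, yields $d(u, \widehat{F}) \le d(u, F) + \rho$ and symmetrically $d(u, F) \le d(u, \widehat{F}) + \rho$ for every point $u$ (recall $d(u,W) = \min_{w \in W} d(u,w)$). Applying the first estimate with $u = x_i$ and the second with $u = x'_i$ shows each summand is at most $\rho$; since each sum has $\abs{S} \le \delta n$ terms, together they contribute at most $2\delta n \rho$, exactly the additive slack required.

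The subtlety worth flagging is why this two-step route is necessary rather than a direct Hausdorff estimate. Applying the triangle bound to all $n$ points at once gives only $\kmed(X, \widehat{F}) \le \kmed(X, F) + n\rho$, which is weaker by a factor of $1/\delta$. The crux is the interplay between the two hypotheses: hypothesis (1) is precisely what lets us \emph{erase} the contribution of the uncorrupted coordinates (on which $X$ and $X'$ agree), so that the $\rho$-sized per-point error is paid only on the at most $\delta n$ corrupted points. Getting the signs and the pairing of $x_i$ versus $x'_i$ right in the two residual sums is the one place to be careful, but once hypothesis (1) is used to cancel the common part, the remaining estimate is routine.
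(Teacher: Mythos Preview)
Your proof is correct and essentially identical to the paper's: both split off the corrupted coordinates, invoke hypothesis~(1) to cancel the uncorrupted part, and then bound each residual difference by the Hausdorff distance via the triangle inequality. The only differences are cosmetic (notation and how the four residual terms are grouped into two sums).
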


\begin{proof}
  Let $n = \abs{X}$, let $A := \{i \in [n] \mid x_i = x'_i\}$ be the indices where $X,X'$
  are the same, and $B := [n] \setminus A$ be the remaining indices. Define
  $G := f(X)$ and $H := \widehat{f}(X')$. Using this notation we can write $\kmed(X,\widehat{f}(X'))$ as 
  \begin{align}
    \sum_i d(x_i, H) &= \sum_i d(x'_i, H) +  \sum_{i \in B}
                       (d(x_i, H) - d(x_i', H)) \notag \\
                     &\leq \sum_i d(x'_i, G) +  \sum_{i \in B}
                       (d(x_i, H) - d(x_i', H))  \label{eq:1} \\
                     &=  \sum_i d(x_i, G) +  \sum_{i \in B}
                       (d(x_i', G) - d(x_i, G))  +  \sum_{i \in B}
                       (d(x_i, H) - d(x_i', H))  \notag \\
                     &=  \kmed(X,f(X)) +  \sum_{i \in B}
                       (d(x_i', G) - d(x_i', H))  +  \sum_{i \in B}
                       (d(x_i, H) - d(x_i, G)), \label{eq:differences}
  \end{align}
  where (\ref{eq:1}) uses the first assumption of the theorem.

  We claim that each of the differences in Eq.~\eqref{eq:differences} 
  is at most
  $d_H(G,H)$. Indeed, for any $x \in V$, let its closest points in
  $G$ and $H$ be $g$ and $h$ respectively. Then
  $d(x,g) - d(x,h) \leq d(g,h)$ by the triangle inequality, and this
  is at most the Hausdorff distance; the case of $d(x,h) - d(x,g)$ is
  identical. Therefore, 
  $$\kmed(X,\widehat{f}(X')) \leq \kmed(X,f(X)) +
  2\abs{B} \cdot d_H(G,H).$$ 
  Finally, using the definition of
  $(\rho,\delta)$-distance-robustness implies that $d_H(G,H)$ is at
  most $\rho$. The fact that $|B| \le \delta |X|$ completes the proof. 
\end{proof}

\subsection{The \texorpdfstring{$\delta$}{}-Robustness of \texorpdfstring{$1$}{}-Median}
\label{sec:geom-median}

In this section, we quantify the distance robustness (and hence approximation robustness, by Lemma~\ref{lem:switch}) of the 1-median location estimator. We show that changing any $\delta$-fraction of a dataset $X$ where $\delta < \nf12$ 
can move the $\OneMed$ by only $\frac{2}{1-2\delta}$ times the average cost of a point in the optimal solution of $X$.
Hence, this changes the total $\OneMed$ cost by only a $1 + O(\frac{\delta}{1-2\delta})$ factor.


\begin{thm}
  \label{thm:one-med-dist-robust}
    For $\delta < \nf12$, the $\OneMed$ location estimator is $(\rho,\delta)$-distance-robust, where $\rho = \frac{2}{(1-2\delta)}\cdot \frac{\kmed[1](X)}{\abs{X}}$. 
\end{thm}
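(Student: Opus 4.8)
The plan is to exploit the optimality of the geometric median on the perturbed dataset, together with the (reverse) triangle inequality, to force $m'$ to stay close to $m$. Write $n = \abs{X}$, let $m := \geomed(X)$ and $m' := \geomed(X')$ be the two medians, and set $R := d(m,m')$. Since $k=1$ the estimator returns a single point, so $d_H(\{m\},\{m'\}) = R$ and it suffices to bound $R$. Let $A := \{i \in [n] \mid x_i = x'_i\}$ be the agreement set and $B := [n]\setminus A$ the disagreement set; because $X'$ lies in the $\delta n$-neighborhood of $X$ we have $\abs{A} \ge (1-\delta)n$ and $\abs{B} \le \delta n$.

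The first step is to record the optimality of $m'$ for the perturbed input, namely $\sum_i d(x'_i, m') \le \sum_i d(x'_i, m)$. Splitting both sides according to $A$ and $B$ and using $x_i = x'_i$ for $i \in A$, this rearranges into
\[
\sum_{i \in A}\bigl(d(x_i, m') - d(x_i, m)\bigr) \;\le\; \sum_{i \in B}\bigl(d(x'_i, m) - d(x'_i, m')\bigr).
\]
The right-hand side is immediate: each summand is at most $d(m,m') = R$ by the triangle inequality, so the whole sum is at most $\abs{B}\,R$.

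The crux is lower-bounding the left-hand side, and this is where both the factor $2$ and the dependence on $\kmed[1](X)/\abs{X} = \MAD(X)$ enter. For $i \in A$ the reverse triangle inequality gives $d(x_i, m') \ge R - d(x_i, m)$, hence $d(x_i, m') - d(x_i, m) \ge R - 2\,d(x_i, m)$. Summing over $A$, the left-hand side is at least $\abs{A}\,R - 2\sum_{i \in A} d(x_i, m)$, and trivially $\sum_{i \in A} d(x_i, m) \le \sum_{i \in [n]} d(x_i, m) = \kmed[1](X)$.

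Combining the two estimates yields $(\abs{A} - \abs{B})\,R \le 2\,\kmed[1](X)$. Finally, $\abs{A} - \abs{B} \ge (1-\delta)n - \delta n = (1-2\delta)n > 0$ since $\delta < \nf12$, and dividing gives $R \le \frac{2}{1-2\delta}\cdot\frac{\kmed[1](X)}{\abs{X}}$, exactly as claimed. I expect the only delicate point to be the lower bound on the agreement-set cost increase: one must resist bounding $d(x_i, m') - d(x_i, m)$ directly by $R$ (which would collapse the inequality into $0 \le 0$) and instead keep the sharper estimate $R - 2\,d(x_i,m)$, so that the accumulated $d(x_i,m)$ terms assemble precisely into the optimal $1$-median cost $\kmed[1](X)$. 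Everything else is triangle-inequality bookkeeping, and the argument makes no use of the structure of $V$ beyond the metric axioms, so it holds in an arbitrary metric space and for any choice of (possibly non-unique) minimizers $m, m'$.
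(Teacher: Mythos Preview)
Your proof is correct and follows essentially the same approach as the paper: both start from the optimality inequality $\sum_i d(x'_i,m') \le \sum_i d(x'_i,m)$, apply the (reverse) triangle inequality on the agreement set $A$ to produce terms $d(m,m') - 2d(x_i,m)$ and the ordinary triangle inequality on $B$, and arrive at $(\abs{A}-\abs{B})\,d(m,m') \le 2\,\kmed[1](X)$. The only difference is cosmetic---you rearrange into an $A$-side versus $B$-side inequality before bounding, while the paper lower-bounds both pieces of the left-hand side in one line---but the underlying estimates are identical.
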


\begin{proof}
Let $X  = \{x_1, \ldots, x_n\}, X' = \{ x'_1, \ldots, x'_n \} \in V^n$ where $X'$ is in the $\delta \abs{X}$-neighborhood of $X$.
  Let $n = \abs{X}$, let $A := \{i \in [n] \mid x_i = x'_i\}$ be the
  indices where $X,X'$ are the same, and $B := [n] \setminus A$ be the
  remaining indices. Let $m := \arg\min_{v \in V} \kmed[1](X,\{v\})$
  and $m' := \arg\min_{v \in V} \kmed[1](X',\{v\})$. It follows that
  $\sum_{i\in[n]} d(x_i',m') \leq \sum_{i\in[n]} d(x_i', m)$.  Using
  the triangle inequality and the fact that $x_i' = x_i$ for all
  $i \in A$, we get that
  \begin{align*}
    \sum_{i \in A} (d(m',m) - d(m,x_i)) + \sum_{i \in B} (d(x_i',m) -
    d(m',m)) \leq \sum_{i \in A} d(x_i,m) + \sum_{i \in B} d(x'_i,m)
    \\ \implies d(m',m)\cdot (\abs{A} - \abs{B}) \leq 2 \sum_{i \in A}
    d(x_i,m) \leq 2 \sum_{i \in [n]} d(x_i,m) = 2\, \kmed[1](X).
  \end{align*}
  Since $\abs{B} \leq \delta n$ and $\abs{A}+\abs{B} = n$, we get $d(m',m) \leq
  \frac{2}{(1-2\delta)n} \kmed[1](X)$.
\end{proof}

The quantity $\frac{\kmed[1](X)}{|X|}$ is called the \emph{mean
absolute deviation} and can be viewed as a notion of variance.
Hence, Theorem~\ref{thm:one-med-dist-robust} essentially says that since $\delta < \nf12$, 
the $\OneMed$ only
 changes by $O_\delta(1)$ (a constant depending on $\delta$)
times the mean absolute deviation.


Substituting the distance-robustness parameter of $\rho = \frac{2}{1-2\delta} \frac{\med(X)}{|X|}$ into  \Cref{lem:switch} gives us the following result for approximation
robustness:

\begin{cor} 
\label{1_median_approx_robustness_results} 
For $\delta < \nf12$, the $\OneMed$ location estimator is $(1 +
\frac{4\delta}{1-2\delta},\delta)$-approximation-robust for
$\kmed[1]$. 
\end{cor}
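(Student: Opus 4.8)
The plan is to derive the corollary directly by invoking \Cref{lem:switch}, whose hypotheses are exactly what \Cref{thm:one-med-dist-robust} supplies for the $1$-median. First I would verify the two conditions of \Cref{lem:switch} with $f = \widehat{f} = \OneMed$ and $k = 1$. The second condition is precisely the statement of \Cref{thm:one-med-dist-robust}, namely that $\OneMed$ is $(\rho,\delta)$-distance-robust with $\rho = \frac{2}{1-2\delta} \cdot \frac{\kmed[1](X)}{\abs{X}}$. The first condition asks that $\kmed[1](X', \OneMed(X')) \le \kmed[1](X', \OneMed(X))$; this is immediate because $\OneMed(X')$ is by definition the minimizer of $\kmed[1](X', \cdot)$, so it achieves cost no larger than any other point, in particular no larger than $\OneMed(X)$.

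Having checked both hypotheses, I would substitute into the conclusion of \Cref{lem:switch}, which yields that $\OneMed$ is a $\prn*{1 + \frac{2\delta\abs{X}\rho}{\kmed[1](X, \OneMed(X))}, \delta}$-approximation-robust solution. The remaining work is purely algebraic: plug in the value of $\rho$ and simplify. Writing $n = \abs{X}$ and noting $\kmed[1](X, \OneMed(X)) = \kmed[1](X)$, the approximation factor becomes
\[
1 + \frac{2\delta n \cdot \frac{2}{1-2\delta}\cdot \frac{\kmed[1](X)}{n}}{\kmed[1](X)} = 1 + \frac{4\delta}{1-2\delta},
\]
where the factors of $n$ and of $\kmed[1](X)$ cancel. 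This matches the claimed bound exactly.

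There is essentially no obstacle here: the corollary is a direct consequence of the two preceding results, and the only care needed is to confirm that the minimizer property gives the first hypothesis of \Cref{lem:switch} and that the cancellation of $n$ and $\kmed[1](X)$ is clean. One minor point worth stating explicitly is that $\kmed[1](X)$ must be interpreted as $\kmed[1](X, \OneMed(X))$ (the optimal $1$-median cost), so that the denominator in \Cref{lem:switch} matches the numerator arising from $\rho$; this identification is what makes the ratio dimensionless and independent of the scale of the data. Since $\delta < \nf12$, the denominator $1 - 2\delta$ is positive and the expression is well-defined, completing the argument.
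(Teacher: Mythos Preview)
Your proposal is correct and follows exactly the same approach as the paper: the corollary is obtained by substituting the distance-robustness bound $\rho = \frac{2}{1-2\delta}\cdot \frac{\kmed[1](X)}{\abs{X}}$ from \Cref{thm:one-med-dist-robust} into \Cref{lem:switch}. Your explicit verification of the first hypothesis (via the minimizer property of $\OneMed$) and the algebraic cancellation are precisely the details the paper leaves implicit.
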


Corollary~\ref{1_median_approx_robustness_results} says that if $X'$ is obtained by perturbing a $\delta$-fraction
of the points in $X$, then the geometric median of $X'$ is a $(1 + \frac{4\delta}{1-2\delta})$-approximately optimal solution for the original point set $X$ (with respect to the $\OneMed$ cost objective). 
This result can be viewed as a quantitative version of a classical result of \citet{lopuhaa1991breakdown}, which shows that the
\emph{breakdown point} for the geometric median is $\nf12$. 
This gives us an understanding not only of the fraction of perturbations required to ``break'' the estimator (i.e., make it arbitrarily far from the real location), but also of the deterioration in the estimation before breaking it.
The quality of estimation as a function of $\delta$ is captured by our result of $\rho = \frac{2}{1-2\delta} \frac{\med(X)}{|X|}$, showing that a smaller $\delta$ corresponds to better estimation.

We conclude the discussion of $\delta$-robustness of the
  $\OneMed$ with two remarks:
\begin{enumerate}[nosep]
\item Results similar to~\Cref{thm:one-med-dist-robust} hold even for the case where $X'$ is
  obtained by adding or removing points; in fact, the approximation
  factors can be improved slightly (the proof is almost the same). 
\item The quantitative bounds are tight; consider the example where
  $X$ comprises of $(0.5 - \delta)n + 1$ points at $0$, and
  $(0.5 + \delta) n - 1$ points at $1$ 
  and $X'$ is obtained from $X$
  by moving 
  $\delta n$ points from $1$ to $0$. A calculation shows
  that in this case the approximation robustness of the geometric
  median is $(1 + \frac{4\delta}{1-2\delta} - \Theta(\frac{1}{n}), \delta)$. 
\end{enumerate}

\subsection{The \texorpdfstring{$\delta$}{}-Robustness of \texorpdfstring{$\beta$}{}-Balanced \texorpdfstring{$k$}{}-Medians}
\label{sub:k-medians}

Given the robustness of the $\OneMed$, it is natural to ask if the same holds for
$k$-medians. Unfortunately, even for $2$-medians, changing a single point can
cause the Hausdorff distance between the solutions to be arbitrarily
large. 

Recall \cref{example:non-robustness-of-two-median} from Section~\ref{sec:intro}.
  In this example, there is no choice of $k=2$ centers that simultaneously achieves a good (bounded) approximation for the $2$-medians problem on both $X$ and $X'$. 
Given this negative example, we turn to a \emph{balanced} version of $k$-medians
for which we are able to give a robustness guarantee. In \S\ref{subsec:balanced-k-fac} we will use this for a deterministic strategyproof for balanced-$k$-facility location in a general metric space. In \S\ref{sec:rand-two-fac} we will
use this to give our random strategyproof mechanism for $2$-facility location on a line.

\begin{definition}[Center-Induced Partitions]
  \label{def:center-induced}
  A collection of centers $F = \{f_1, \ldots, f_k\} \sse V^k$
  \emph{induces a partition} $\calC = \{C_1, \ldots, C_k\}$ of the
  dataset $X$ if
  $C_i \sse \{x \in X \mid d(x,f_i) \leq d(x,f_j) \, \forall
  j\}$. This partition is called the clustering of $X$ induced by $F$. This partition (or clustering) is \emph{$\beta$-balanced} if
  $|C_i| \geq \beta |X|$ for all $i \in [k]$.
\end{definition}

Note that a collection $\calC$ need not be unique, since points in
$X$ can ``choose'' between any of their closest centers in $F$.

\begin{definition}[$\beta$-balanced $k$-median]
  \label{def:balanced-kmed}
  The \emph{$\beta$-balanced $k$-medians problem} takes a dataset
  $X \in V^n$ as input, and outputs centers $F \in V^k$ along with an
  induced $\beta$-balanced partition $\calC$ of $X$; the goal is to
  minimize the cost function $\kmed(X,F) = \sum_{x \in X} d(x,F)$.
\end{definition}

This problem is similar to the \emph{lower-bounded
  $k$-median} problem where we want to output
$F = \{f_1, \ldots, f_k\} \in V^k$ as well as a partition
$\{C_1, \ldots, C_k\}$ of $X$ such that $|C_i| \geq \beta |X|$, in
order to minimize $\sum_{x \in C_i} d(x,f_i)$ (see \cite{han2020approximation}, \cite{wu2022approximation}). A significant
difference between that definition and ours is that we require the
partition to be induced by $F$ (in the sense of
\Cref{def:center-induced}), which is not required by lower-bounded
$k$-median.

The following theorem shows that if we consider balanced $k$-median,
then computing the best ``slightly less balanced'' solution on the
predictions $X'$ has a $k$-medians cost (on the original dataset $X$)
that is close to the optimal balanced $k$-medians cost for $X$.

\begin{thm} \label{thm_balanced_k_median_robustness} Let $b >
  2k+2$. Consider the algorithm $\calB$ that computes the optimal
  $(b-1)\delta$-balanced $k$-medians solution on its input.  For any
  instance $X \in V^n$ of the $b\delta$-balanced $k$-medians problem
  with optimal solution $G$, let $X' \in V^n$ belong to the
  $\delta$-neighborhood of $X$. The algorithm $\calB$, when given
  $X'$, returns a solution $H$ such that
  \[ d_H(G,H) \leq \frac{2k}{\delta|X|\cdot (b-2-2k)} \cdot \sum_{x
      \in X} d(x,G). \] Moreover,
the $k$-medians cost  $\kmed(X,H) \leq (1 + \frac{4k}{b-2-2k}) \kmed(X,G)$, and $H$ 
  induces a $(b-2)\delta$-balanced partition $\calC_H$ of $X$.
\end{thm}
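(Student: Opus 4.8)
The plan is to instantiate Lemma~\ref{lem:switch} with $f$ equal to the optimal $b\delta$-balanced $k$-medians estimator and $\widehat f = \calB$, verify its two hypotheses, extract the Hausdorff bound as the distance-robustness guarantee, and then read off the cost bound directly from the lemma; the balancedness claim is handled by a separate counting argument. Throughout I would write $D := \kmed(X,G)$, let $n=|X|$, let $A\subseteq[n]$ index the uncorrupted coordinates and $B=[n]\setminus A$ the corrupted ones, so that $|B|\le\delta n$. I would first establish the objective-comparison hypothesis (hypothesis~1 of Lemma~\ref{lem:switch}). Since $G$ is $b\delta$-balanced for $X$, each of its induced clusters contains at least $b\delta n$ points of $X$; removing the at most $\delta n$ corrupted coordinates leaves at least $(b-1)\delta n$ uncorrupted points per cluster, and these keep $g_i$ as a nearest center in $X'$ because their positions and the centers are unchanged. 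Hence $G$ together with this partition is a feasible $(b-1)\delta$-balanced solution on $X'$, and optimality of $H=\calB(X')$ yields $\kmed(X',H)\le\kmed(X',G)$, which is exactly hypothesis~1.

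The technical heart is the distance-robustness bound (hypothesis~2), which I would prove by a self-referential estimate on $R:=d_H(G,H)$. The key Lipschitz fact is that $|d(p,G)-d(p,H)|\le d_H(G,H)$ for every point $p$ --- the same triangle-inequality observation used inside the proof of Lemma~\ref{lem:switch}. Splitting $\kmed(X',H)\le\kmed(X',G)$ into its $A$- and $B$-parts and applying this fact on the $B$-coordinates controls the cost of $H$ on the uncorrupted points by $\sum_{i\in A} d(x_i,H) \le D + O(\delta n\cdot R)$; this is precisely where the corrupted coordinates, whose costs are otherwise unbounded, are tamed in terms of $R$ instead. Next, fixing a center $h_j$, its $H$-cluster on $X'$ has at least $(b-1)\delta n$ points, of which at least $(b-2)\delta n$ are uncorrupted, and by pigeonhole at least $\tfrac{(b-2)\delta n}{k}$ of those lie in a single $G$-cluster $C_i$. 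For each such shared point $x$ the induced (nearest-center) structure of both partitions gives $d(g_i,h_j)\le d(x,g_i)+d(x,h_j)=d(x,G)+d(x,H)$; averaging over the shared points and using $\sum d(x,G)\le D$ together with the cost bound yields $d(h_j,G)\le d(g_i,h_j)\le \tfrac{k}{(b-2)\delta n}(2D+O(\delta n\cdot R))$. The symmetric argument, starting from a $G$-cluster, bounds $d(g_i,H)$, so $R$ satisfies an inequality of the form
\[
R \;\le\; \frac{2kD}{(b-2)\,\delta n} \;+\; \frac{2k}{b-2}\,R .
\]
The hypothesis $b>2k+2$ makes the coefficient of $R$ strictly below $1$, and rearranging produces the claimed bound $R\le\frac{2k}{(b-2-2k)\,\delta n}\,D$, i.e.\ $\calB$ is $(\rho,\delta)$-distance-robust with the stated $\rho$.

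With distance robustness and hypothesis~1 in hand, the $k$-medians cost bound is immediate from Lemma~\ref{lem:switch}: substituting $\rho=\frac{2k}{(b-2-2k)\delta n}D$ gives the approximation factor $1+\frac{2\delta n\,\rho}{D}=1+\frac{4k}{b-2-2k}$. Finally, for the balancedness of $\calC_H$ I would argue directly: the partition certifying feasibility of $H$ on $X'$ is $(b-1)\delta$-balanced, and reverting the at most $\delta n$ corruptions (replacing each $x'_i$ by $x_i$) removes at most $\delta n$ points while the uncorrupted points retain the same nearest center, so every cluster keeps at least $(b-1)\delta n-\delta n=(b-2)\delta n$ points of $X$; hence the induced partition of $X$ is $(b-2)\delta$-balanced.

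I expect the main obstacle to be making the self-referential step rigorous: one must control the contribution of the corrupted coordinates to $\sum_{i\in A} d(x_i,H)$ by $d_H(G,H)$ rather than by the (unbounded) true distances, and one must verify that the pigeonhole/averaging step legitimately uses the induced, nearest-center structure of \emph{both} the optimal partition of $X$ and the feasibility-certifying partition of $X'$ (Definition~\ref{def:center-induced}). Bookkeeping the exact constant in the $O(\delta n\cdot R)$ term is what pins down the denominator $b-2-2k$ and the requirement $b>2k+2$.
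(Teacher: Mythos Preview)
Your proposal is correct and follows essentially the same route as the paper: establish $\kmed(X',H)\le\kmed(X',G)$ by feasibility of $G$ on $X'$, bound $d_H(G,H)$ via a pigeonhole-plus-triangle-inequality argument on large cluster intersections, solve the resulting self-referential linear inequality, and invoke Lemma~\ref{lem:switch} for the cost bound together with the same counting argument for $(b-2)\delta$-balancedness. The only cosmetic difference is that the paper pigeonholes the full clusters of $X$ and writes the self-referential step in terms of $ALG=\kmed(X,H)$ (bounding $d_H(G,H)\le \tfrac{k(ALG+OPT)}{(b-2)\delta n}$ and then substituting $ALG\le OPT+2\delta n\,d_H(G,H)$), whereas you first restrict to uncorrupted points and use $\sum_{i\in A}d(x_i,H)\le D+\delta nR$; the two computations are equivalent and yield the same final constants.
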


\begin{proof}
  The claimed algorithm $\calB$ is simple: compute the optimal
  $(b-1)\delta$-balanced $k$-medians solution on $X'$. Since $X$ admits
  a $b\delta$-balanced solution $G = \{g_1, \ldots, g_k\}$, and $X'$
  differs from $X$ in only $\delta|X|$ points, this solution $G$ is
  $(b-1)\delta$-balanced for $X'$. Now let $H = \{h_1, \ldots, h_k\}$
  be a $(b-1)\delta$-balanced $k$-medians solution of least cost for
  $X'$ computed by $\calB$: this means
  \[ \sum_{x' \in X'} d(x',H) \leq \sum_{x' \in X'} d(x',G). \] Let
  $\calC_H' = \{C_1', \ldots, C_k'\}$ be the balanced partition of
  $X'$ induced by $H$.  Similarly, since $X$ is in the
  $\delta$-neighborhood of $X'$, the centers in $H$
  induce a $(b-2)\delta$-balanced partition of the original dataset
  $X$; call this $\calC_H = \{C_1, \ldots, C_k\}$. Finally, let
  $\calC_G = \{C_1^*, \ldots, C_k^*\}$ be the $b\delta$-balanced
  partition of $X$ induced by $G$. Recall that $n = |X|$; define 
    \begin{align*}
        OPT &:= \kmed(X,G) = \sum_{x \in X} d(x, G) = \sum_{i = 1}^k \sum_{x \in C_i^*}
        d(x,g_i) \quad \text{and}\\
        ALG &:= \kmed(X,H) = \sum_{x \in X} d(x, H) = \sum_{j = 1}^k \sum_{x \in C_j}
    d(x,h_j). 
    \end{align*}

  We now want to prove that $\max_i d(g_i, H)$ and $\max_j d(h_j, G)$
  are both small. To show the former, fix any $i$ and let $C_i^*$ be
  the cluster corresponding to $g_i$. Since the clustering is
  $b\delta$-balanced, $|C_i^*| \geq b\delta n$. By averaging, there
  exists some cluster $C_j \in \calC_H$ such that
  $|C_i^* \cap C_j| \geq b\delta n/k$; choose this $j$, and consider
  the corresponding center $h_j \in H$. Now
  \[ d(g_i, H) \leq d(g_i, h_j) \stackrel{(\star)}{\leq}
    \frac{1}{|C_i^* \cap C_j|} \sum_{x \in C_i^* \cap C_j} \big[
    d(x,g_i) + d(x,h_j) \big] \leq \frac{ALG + OPT}{b \delta n/k}, \]
  where $(\star)$ uses the triangle inequality.  A similar argument
  shows:
  \[ d(h_j, G) \leq \frac{ALG + OPT}{(b-2) \delta
      n/k}. \]

  Hence the Hausdorff distance between the two solutions is $d_H(G,H) \leq \frac{ALG + OPT}{(b-2) \delta
    n/k}$. Using \Cref{lem:switch} with this bound on the Hausdorff distance, we get
  \[ ALG \leq OPT + 2\delta n \cdot d_H(G,H) \leq OPT + 2k
    \cdot \frac{ ALG + OPT }{(b-2)}. \]
  Simplifying, we get $ALG \leq (1 + \frac{4k}{b-2-2k})OPT$, and that
  $d_H(G,H) \leq \frac{2k}{b-2-2k}\cdot \frac{OPT}{\delta n}$, as claimed.
\end{proof}

For a fixed value of $k$, this theorem gives a solution $H$ that is
$(b-2)\delta$-balanced and also 
$(1+ O(\nf1b), \delta)$-approximation-robust with respect to the best
$b\delta$-balanced solution.

\section{Deterministic Mechanism Design for Facility Location}
\label{sec:deterministic}

In this section we show how to design deterministic mechanisms for facility location problems that utilize the $\mac$ predictions to get better approximation guarantees. We do so by utilizing the $\delta$-robustness results from Section~\ref{sec:robustness}. In the mechanism design setting, there are $n$ agents with (true) locations  $X = \{ x_i \mid i \in [n]\} \in \RR^d$, where $d$ is the dimension of the facility location problem. The mechanism has access to location reports by the strategic agents, and also to $X' = \{ x'_1, \ldots, x'_n \}$, which are the $\mac(\eps,\delta)$ predictions of the true locations $X$. 

\begin{figure}[h]
    \centering
    \includegraphics[scale=0.25]{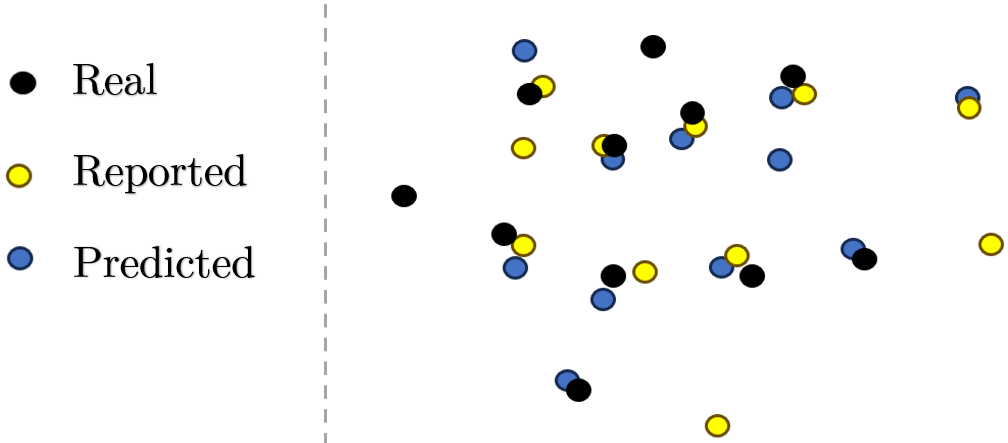}
    \caption{Illustration of the problem settings. The black points are the real agent locations. The yellow points are the locations reported by the agents and the blue points are the predicted locations. For a strategyproof mechanism, the yellow points and the black points will overlap.}\label{facility_loc_with_pred_illustration_figure}
\end{figure}

\subsection{Single Facility Location in \texorpdfstring{$\mathbb{R}^d$}{}}
\label{sec:single-fac}

We start with the single facility location problem ($k=1$). We first assume that $\eps=0$, and later explain the effect of reintroducing $\eps$. In this setting we assume $\delta$ (or an upper bound value) is known and thus we can simply check in advance if $1 + \frac{4\delta}{1-2\delta} \le \sqrt{d}$ or not, and use the algorithm that guarantees us the best approximation ratio.

\begin{algorithm}[H]
	\SetAlgoNoLine
    \caption{Best-Choice-Single-Facility-Loc}
    \label{alg:best_choice_single_facility_loc}
    \KwIn{$\delta\in [0,\frac{1}{2})$ and $X, X' \subseteq \R^d$ where $X$ is reported by strategic agents and $X'$ contains the $\mac$ predictions for $X$}
	\KwOut{A facility location in $\R^d$}
    \leIf{($1 + \frac{4\delta}{1-2\delta} \le \sqrt{d}$)\\}
    {return $\geomed(X')$\\}{return $\cwmed(X)$}
\end{algorithm}

\begin{thm} \label{thm:single_fac_loc}
\cref{alg:best_choice_single_facility_loc} is strategyproof and gets an approximation ratio of $min\prn*{1 + \frac{4\delta}{1-2\delta}, \sqrt{d}}$.
\end{thm}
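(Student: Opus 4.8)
The plan is to prove the two assertions—strategyproofness and the approximation ratio—separately, exploiting the fact that the branch taken by \cref{alg:best_choice_single_facility_loc} is determined solely by the values of $\delta$ and $d$, both of which are fixed inputs independent of the agents' reports. Consequently the mechanism commits to one of its two sub-mechanisms ($\geomed(X')$ or $\cwmed(X)$) before seeing any report, and it suffices to establish each property for each branch in isolation. Throughout I handle $\eps = 0$ as in the statement, so $X'$ lies in the $\delta\abs{X}$-neighborhood of $X$.

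\emph{Strategyproofness.} In the first branch the output is $\geomed(X')$, a function of the prediction $X'$ alone that does not depend on the reported locations; hence no agent can influence the facility location, and truthfulness is trivially a dominant strategy. The second branch returns $\cwmed(X)$, where I would invoke the classical strategyproofness of the coordinate-wise median. The argument decouples across coordinates: the facility's $j$-th coordinate is the one-dimensional median of the agents' reported $j$-th coordinates, so agent $i$'s report in coordinate $j$ affects only that coordinate of the facility. Since the one-dimensional median is strategyproof for single-peaked preferences, for every coordinate $j$ and every misreport we have $|x_i^j - \tilde m^j| \ge |x_i^j - m^j|$, where $m^j$ is the median under truthful reporting. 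Because agent $i$'s Euclidean cost $\sqrt{\sum_j (x_i^j - m^j)^2}$ is a coordinate-wise monotone function of the quantities $|x_i^j - m^j|$, each of which is minimized simultaneously by truthful reporting, truthfulness minimizes the agent's total cost. This is exactly the guarantee of \citet{meir2019strategyproof}, which I would cite.

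\emph{Approximation ratio.} In the first branch, \cref{1_median_approx_robustness_results} states precisely that $\geomed$ is $(1 + \frac{4\delta}{1-2\delta}, \delta)$-approximation-robust for $\kmed[1]$, i.e.\ $\kmed[1]\!\left(X, \geomed(X')\right) \le \left(1 + \frac{4\delta}{1-2\delta}\right)\kmed[1](X)$; since $\kmed[1](X) = \kmed[1]\!\left(X, \geomed(X)\right)$ is the optimal single-facility cost for $X$, this yields approximation ratio $1 + \frac{4\delta}{1-2\delta}$, which is the claimed minimum precisely in the regime where this branch is chosen. In the second branch I would use the norm sandwich $\|v\|_2 \le \|v\|_1 \le \sqrt{d}\,\|v\|_2$ together with the fact that $\cwmed(X)$ minimizes the $\ell_1$ social cost: writing $f^\ast = \geomed(X)$ for the $\ell_2$-optimal facility,
\[
\sum_{x \in X} \|x - \cwmed(X)\|_2 \le \sum_{x \in X} \|x - \cwmed(X)\|_1 \le \sum_{x \in X} \|x - f^\ast\|_1 \le \sqrt{d}\sum_{x \in X} \|x - f^\ast\|_2 ,
\]
which is the $\sqrt{d}$ bound of \citet{meir2019strategyproof} and equals the claimed minimum in this regime. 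Since the algorithm always selects the branch whose guarantee is the smaller of the two values, the overall ratio is $\min\{1 + \frac{4\delta}{1-2\delta}, \sqrt{d}\}$.

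The individual steps are short because the heavy lifting is done by \cref{1_median_approx_robustness_results} and the cited results of \citet{meir2019strategyproof}; the theorem is essentially a composition of these. The one point requiring genuine care is the strategyproofness of $\cwmed$, where I must make the coordinate-decoupling argument rigorous despite the Euclidean metric coupling the coordinates in the agents' cost—the key observation being that each coordinate of the median depends only on the corresponding coordinate of each report, so the single-peaked one-dimensional guarantee applies coordinate-by-coordinate and the per-coordinate minima combine. A secondary subtlety worth stating explicitly is that in the first branch the approximation is measured against the optimum for the \emph{true} locations $X$, even though the mechanism consults only the prediction $X'$; this is exactly what approximation robustness against the $\delta\abs{X}$-neighborhood delivers.
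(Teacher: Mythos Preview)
Your proof is correct and follows essentially the same approach as the paper: both argue that the branch is fixed independently of the reports, invoke \cref{1_median_approx_robustness_results} for the $\geomed$ branch, and cite \citet{meir2019strategyproof} for the strategyproofness and $\sqrt{d}$-approximation of $\cwmed$. You additionally spell out the coordinate-decoupling argument for $\cwmed$'s strategyproofness and the norm-sandwich derivation of the $\sqrt{d}$ bound, but these are elaborations of the cited results rather than a different route.
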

\begin{proof}
  By \cref{1_median_approx_robustness_results}, computing the
  $\OneMed$ on $X'$ results in a
  $1 + \frac{4\delta}{1 - 2\delta}$-approximation ratio: this is
  because the metric space is $\mathbb{R}^d$, and hence the $\OneMed$
  is in fact the geometric median (which is also known as the spatial
  median, $L_1$ median or Fermat point). Since the geometric median
  computation depends solely on the predictions it is strategyproof. 
  Thus, by the the strategyproofness of $\cwmed$, and by the fact that the decision of which mechanism to use does not depend on the reported locations, \cref{alg:best_choice_single_facility_loc} is strategyproof. Due to the $\sqrt{d}$
  approximation ratio of $\cwmed$ (\cite{meir2019strategyproof}), we get a
  $\min\prn*{1 + \frac{4\delta}{1-2\delta}, \sqrt{d}}$ approximation
  ratio for \cref{alg:best_choice_single_facility_loc}.  
\end{proof}

For general dimensions $d$ and number of points $n$, there is no known formula or algorithm to find the geometric median exactly. However, since the problem is a convex optimization problem, there are optimization algorithms that find solutions with arbitrary precision efficiently. By using such an optimization algorithm we get an additional factor of $\eps' > 0$ where $\eps'$ is arbitrarily small. For more information on efficiently computing the geometric median, see \cite{eckhardt1980weber}, \cite{beck2015weiszfeld} \cite{cohen2016geometric}.

\subsubsection{A combined strategy for the case of high probability
  \texorpdfstring{$\mac(\eps,\delta)$}{}}

The $\mac(\eps,\delta)$ model requires that at most a
  $\delta$ fraction of the points have error more than $\eps$; one can
  extend this to the setting where this requirement holds only with
  high probability. For instance,
consider the setting where the number of prediction errors can be
larger than a $\delta$ fraction with probability at most
$o(\nf{1}{n})$. In this case 
the computed geometric median might be arbitrarily far away with this
small probability, and hence the expected  approximation ratio for our
mechanism would be unbounded. 
To avoid this problem, we can use the agent reports. Intuitively the agents have the incentive to make sure that the returned facility location is not infinitely far away from them.

One way to do this is by using the $\minbb$ mechanism, introduced by
\cite[Mechanism 2]{agrawal2022learning} for the egalitarian cost
function for $d=2$; this naturally extends to any $d \ge 1$. The
mechanism calculates the minimum bounding box $B$ containing all the
input points; then, given a single prediction $o$, it returns the point
$\hat{o} \in B$ closest to $o$. We show in \cref{sec:min-bb} that the
approximation ratio of the generalized $\minbb$ mechanism for the
utilitarian cost function ($\kmed[1]$) is $O(n)$. 

Therefore, by composing $\minbb$ with
\cref{alg:best_choice_single_facility_loc} we get a strategyproof
mechanism which has an approximation ratio of $min\prn*{1 +
  \frac{4\delta}{1-2\delta}, \sqrt{d}}$ with high probability, and has
an $O(n)$ approximation ratio with probability $o(\nf{1}{n})$; this gives an expected approximation ratio of at most $min\prn*{1 + \frac{4\delta}{1-2\delta}, \sqrt{d}} + o(1)$.
The form of the resulting mechanism is simple:

\begin{algorithm}[H]
	\SetAlgoNoLine
    \caption{Bounded-Best-Choice-Single-Facility-Loc}
    \label{alg:bounded_best_choice_single_facility_loc}
    \KwIn{$\delta\in [0,\frac{1}{2})$ and $X, X' \subseteq \R^d$ where $X$ is reported by strategic agents and $X'$ contains the $\mac$ predictions for $X$}
	\KwOut{The facility location in $\R^d$}
   return $\minbb(X, \textnormal{Best-Choice-Single-Facility-Loc}(X, X', \delta))$
\end{algorithm}

\subsubsection{Supporting small \texorpdfstring{$\eps > 0$}{} values}\label{subsec:the-effect-of-eps}
The above analysis assumed that $\eps = 0$, meaning each prediction is
either wrong, or exactly correct. We can weaken this assumption by
allowing a prediction to be ``correct'' if it is $\eps$-accurate. The
effect of $\eps$ on the approximation ratio is bounded by an additive
term of at most $\eps n$, simply because each of the $n$ points
``contributes'' another additive $\eps$ term.

If $\eps n$ is small in comparison to $OPT$, or equivalently, if
$\eps$ is small compared to $\nf{OPT}{n}$ then the effect on the
approximation ratio will be small. On the other hand, if $\eps$ is big
in comparison to $\nf{OPT}{n}$ then we have no chance to compute a
good solution. Indeed, $\nf{OPT}{n}$ is the \emph{average deviation}
of the points from their true geometric median, which can be viewed as
a notion of the variance of the input; this is a measure of the
``scale'' of the problem, and a noise level above this value swamps
the signal in the data.


\subsection{Balanced \texorpdfstring{$k$}{}-Facility Location in General Metric Spaces}
\label{subsec:balanced-k-fac}


For $\beta \in [0,1]$, the $\beta$-balanced $k$-facility location
problem considers $n$ agents with locations
$X = \{ x_1, \ldots, x_n \} \subseteq V$. Each agent reports a location to the mechanism, and the goal is to return a multi-set
$H = (h_1, \ldots, h_k)$ containing $k$ points from $V$ that minimizes
$\kmed[k](X, H)$, such that the clustering induced by $H$ is
$\beta$-balanced (according to Definition~\ref{def:center-induced}).

For a small enough $\delta$, by choosing $b := \nf{\beta}{\delta}$ the balancedness condition translates into a minimum cluster size of $b \delta$. Hence, we can utilize the $\delta$-robustness results of $\betabalancedkmed[(b-1)\delta]{k}$ (\cref{thm_balanced_k_median_robustness}) to immediately obtain the deterministic mechanism in Algorithm~\ref{alg:balanced_k_facility_loc} with the following guarantee:

\begin{thm} 
\label{thm:blanaced-k-fac}
    \cref{alg:balanced_k_facility_loc} is a deterministic strategyproof mechanism with a constant ($k$-dependent) approximation ratio of at most 
    $$
    1 + \frac{4k}{b-2 - 2k}.
    $$
\end{thm}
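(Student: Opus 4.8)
The plan is to establish two things for \cref{alg:balanced_k_facility_loc}: strategyproofness and the claimed approximation ratio. For the approximation guarantee, the key observation is that the theorem statement is essentially a direct corollary of \cref{thm_balanced_k_median_robustness}, which has already done the heavy lifting. The mechanism should compute the optimal $(b-1)\delta$-balanced $k$-medians solution $H$ \emph{on the predictions} $X'$ rather than on the reported locations. Since $X'$ is a $\mac(0,\delta)$ prediction, it lies in the $\delta|X|$-neighborhood of the true locations $X$. Then I would invoke \cref{thm_balanced_k_median_robustness} with the algorithm $\calB$ instantiated as ``compute optimal $(b-1)\delta$-balanced $k$-medians,'' which directly yields $\kmed(X,H) \leq (1 + \frac{4k}{b-2-2k})\kmed(X,G)$, where $G$ is the optimal $b\delta$-balanced solution for $X$. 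Setting $b := \beta/\delta$ makes the $b\delta$-balancedness exactly the required $\beta$-balancedness of the facility location problem, so $G$ is precisely the optimal solution we are comparing against, giving the approximation ratio.

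For strategyproofness, the crucial structural point is the same as in the single-facility case (\cref{thm:single_fac_loc}): the output of the mechanism depends \emph{only on the predictions} $X'$ and on the public parameters $\delta, \beta$, and not on the agents' reported locations at all. First I would note that since the returned location set $H$ is a deterministic function of $X'$ alone, no agent can change the outcome by misreporting, and hence no agent can decrease their own cost by deviating. This makes strategyproofness immediate (indeed, the mechanism is trivially group-strategyproof). I would state this cleanly, emphasizing that we pay for robustness to strategic behavior by leaning entirely on the predictions, which is exactly what the $\delta$-robustness framework was built to support.

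The main subtlety — and the step I expect to require the most care — is reconciling the $\beta$-balancedness demanded by the facility location problem (the induced clustering of $X$ must have every cluster of size at least $\beta n$) with the $(b-2)\delta$-balancedness that \cref{thm_balanced_k_median_robustness} actually guarantees for the partition $\calC_H$ induced on $X$ by $H$. With the choice $b = \beta/\delta$, the theorem only certifies that $H$ induces a $(b-2)\delta = (\beta - 2\delta)$-balanced partition of $X$, which is slightly weaker than $\beta$-balanced. I would address this by either (i) observing that the problem's balancedness requirement is on the \emph{benchmark} optimal solution $G$ (so that a meaningful comparison exists), while the returned solution is allowed to be ``slightly less balanced,'' or (ii) rescaling $b$ — e.g., choosing $b$ so that $(b-2)\delta \geq \beta$, i.e.\ $b = \beta/\delta + 2$ — and absorbing the resulting $O(1)$ shift into the constant in the approximation bound, consistent with the $1 + O(\delta/\beta)$ form promised in the introduction. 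I would make explicit which notion of the feasibility constraint the mechanism satisfies, since this is the one genuinely non-mechanical point; everything else follows by plugging into the already-proven robustness theorem.
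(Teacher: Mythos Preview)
Your proposal is correct and matches the paper's approach: the paper simply states that the approximation guarantee follows ``immediately'' from \cref{thm_balanced_k_median_robustness} (with $b := \beta/\delta$), and strategyproofness is implicit from the fact that the output depends only on the predictions $X'$. In fact you are more careful than the paper, which does not explicitly discuss the $(b-2)\delta$ vs.\ $b\delta$ balancedness gap you flag; your treatment of that subtlety is a genuine clarification rather than a deviation.
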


\begin{algorithm}[H]
	\SetAlgoNoLine
    \caption{Balanced-k-Facility-Loc($X, X', b, \delta$)}
\label{alg:balanced_k_facility_loc} 
    \KwIn{$X, X' \subseteq V$, and $b > 1$, $\delta \in [0,0.5)$ where $X'$ are the $\mac$ predictions for $X$}
	\KwOut{The $k$ facility locations in $V$}
    return $\betabalancedkmed[(b-1)\delta]{k}(X')$
\end{algorithm}

We do not specify how  $\betabalancedkmed[(b-1)\delta]{k}$ is
implemented. If all points lie on the real line, then 
there exists an $O(n^k)$-time algorithm for it.

In Euclidean spaces of dimensions greater than one,
the $k$-medians problem (and its balanced or lower bounded variant) is classified as NP-hard \cite{megiddo1984complexity,bhattacharya2020hardness}
necessitating the use of approximation algorithms. Any approximation algorithm with approximation ratio $c$ can be applied, incurring an additional multiplicative cost factor of 
$c$. This approach allows for practical solutions within the
constraints of computational complexity, ensuring that we can still
achieve near-optimal placements of facilities even in high-dimensional
contexts. The effect of $\eps$ values and handling the case of high probability $\mac$ predictions is similar to the single facility location results.



\section{Randomized Mechanism Design for \texorpdfstring{$2$}{}-Facility Location on a Line} 
\label{sec:rand-two-fac}

We now consider the $2$-facility-location problem, where the metric is
given by the real line $\RR$. Formally, our metric is over the point
set $V = \RR$, with $d(x,y) = |x-y|$. There are $n$ agents with
locations on the real line $X = \{x_1, \ldots, x_n\}$, and the cost of
a solution $G \in V^2$ is $\kmed(X, G) = \sum_{x_i \in X} d(x_i, G)$.

In the mechanism-design setting, the agent locations $X$ are unknown,
but we are given (a) \mac predictions $X'$ for these locations, as
well as (b) the agents' reports for their locations.
 For simplicity, we assume that $\eps = 0$; the general case can
be handled in a manner similar to that of a single facility.

Unlike in \cref{subsec:balanced-k-fac} we do not assume the balancedness of the optimal solution, but handle the unconstrained case where the optimal solution might induce either a balanced or an unbalanced clustering of $X$. In this section, we present a strategy-proof radnom mechanism with a small (expected) approximation ratio.

Before doing so, we solve two other problems. After we solve those in Sections~\ref{subsec:second-facility-loc-problem} and \ref{sec:big-cluster-center-robustness}, we use the solutions in our mechanism for the $2$-facility-location on a line problem in Section~\ref{subsec:two-fac-alg}.

\subsection{The Second Facility Location Problem}
\label{subsec:second-facility-loc-problem}%

First, we define and solve another independent mechanism design
problem, which we name the \emph{second facility location} problem. We
show a randomized mechanism for this problem that we later use to
solve the $2$-facility-location problem.

\begin{definition}[Second Facility Location]
  Given a metric space $(V,d)$, a dataset $X \in V^n$, and a single
  facility $h_S \in V$, the goal is to find another facility $h_T$ to
  minimize the $2$-medians cost $\kmed[2](X, \{h_S,h_T\})$.
\end{definition}

While we can enumerate over all such second facilities to find the
best solution, the resulting solution is not strategyproof. Indeed,
suppose $X = \crl*{3, 5, 14}$ and $g_S = 0$, then the
agent at $x_1 = 3$ can lower its cost by reporting $x'_1 = 5$ (causing the second facility to be located at $5$ rather than at $14$).
So instead, we consider the
following $\propmech$, which is the second step of the random mechanism for the 2 facility location on the line problem
proposed by \cite{lu2010asymptotically}.

\begin{algorithm}[H]
	\SetAlgoNoLine
    \caption{\propmech}
  \label{alg:second_facility_prop_mech} 
 \KwIn{$X \subseteq \R^d$ a multi-set of $n$ points, and $h_S \in \R^d$ the given first facility location}
	\KwOut{The second facility location in $\R^d$}
    $a_i \gets d(x_i, h_S)$\;
    Pick $h \in X$ such that $Pr[h = x_i] = \nicefrac{a_i}{\sum_{j \in [n]} a_j}$;
\end{algorithm}

\begin{lem}
  \label{lem:prop-is-strategyproof}
  $\propmech$ is strategyproof for any metric space.
\end{lem}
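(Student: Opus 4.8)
The plan is to show that no agent can reduce its expected cost by misreporting its location. Consider an arbitrary agent $i$ with true location $x_i$, and fix the reports of all other agents. Under $\propmech$, agent $i$'s cost equals the distance to the chosen facility, which is either the fixed first facility $h_S$ or the randomly chosen second facility $h$. The key observation is that agent $i$'s cost is always the distance to \emph{its nearest} facility, so its realized cost is at most $d(x_i, h_S)$ regardless of where the random second facility lands; the only way misreporting helps is if it increases the probability that the second facility lands close to $x_i$ (closer than $h_S$). The proof should therefore compute $\E[\cost_i]$ as a function of agent $i$'s report and argue it is minimized by truth-telling.

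First I would write down the expected cost explicitly. Let $a_j = d(x_j, h_S)$ for each agent $j$ (using reported locations), and let $S = \sum_j a_j$. With probability $a_j / S$ the second facility is placed at (the reported location of) agent $j$, and then agent $i$ incurs cost $\min\{d(x_i, h_S), d(x_i, h_j)\}$ where $h_j$ is agent $j$'s reported position. I would separate the term $j = i$ from the terms $j \neq i$: for $j \neq i$, agent $i$'s report affects the summand only through the normalizing factor $S$ (since $a_j$ and the placement $h_j$ for $j \ne i$ are fixed, and the cost $\min\{d(x_i,h_S), d(x_i,h_j)\}$ uses agent $i$'s \emph{true} location $x_i$, not its report). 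For the term $j = i$, the agent controls both $a_i$ (the weight) and $h_i$ (the placement of the facility, which would then coincide with its report). The crucial point is that if the second facility is placed at agent $i$'s \emph{reported} location, agent $i$'s actual cost is $\min\{d(x_i, h_S), d(x_i, x'_i)\}$, which only reduces cost relative to $d(x_i,h_S)$ if the report is near the true location — but increasing $a_i$ to boost this probability is self-defeating because placing the facility at a far-away reported point gives the agent no benefit.

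The cleanest way to carry this out is to rewrite the expected cost in the form
\[
\E[\cost_i] = \frac{1}{S}\Big( a_i \cdot \min\{d(x_i,h_S), d(x_i, x'_i)\} + \sum_{j \neq i} a_j \cdot \min\{d(x_i,h_S), d(x_i,h_j)\} \Big),
\]
where $x'_i$ is agent $i$'s report and $a_i = d(x'_i, h_S)$. I would then observe that $\min\{d(x_i,h_S), d(x_i,x'_i)\} \le d(x_i, h_S)$ always, and that every off-diagonal summand is likewise bounded by $d(x_i,h_S)$. Comparing truthful reporting ($x'_i = x_i$, giving a diagonal contribution of $a_i \cdot 0 = 0$ since $\min\{d(x_i,h_S),0\}=0$) against any deviation, I would show the deviation can only weakly increase the weighted average: truth-telling gives the agent a free "cost-zero" mass of weight $a_i = d(x_i, h_S)$ in the numerator, and any attempt to shift weight by misreporting either reduces that zero-cost mass or redistributes it toward positive-cost outcomes.

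The main obstacle is handling the coupled effect of the report on both the numerator and the denominator $S$ simultaneously, since changing $x'_i$ changes $a_i$ and hence $S$. I would address this by treating the expected cost as a ratio and showing monotonicity: fixing the behavior of the other agents, I would argue that the truthful report is the unique way to place a facility exactly at $x_i$ (cost zero) while making $a_i$ as small as possible is \emph{not} actually what the agent wants — rather, the agent wants the facility near $x_i$, and the only report guaranteeing a facility at $x_i$ is the truthful one. A careful case analysis — comparing the truthful ratio to the deviating ratio by cross-multiplying and using $\min\{d(x_i,h_S), d(x_i,x'_i)\} \ge 0$ together with the fact that each off-diagonal cost is at most $d(x_i, h_S)$ — should establish that the deviating expected cost is at least the truthful one. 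I expect this cross-multiplication step, disentangling the $S$-dependence, to be the technical crux.
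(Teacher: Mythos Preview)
Your plan is correct and is precisely the second-stage argument in \cite[Theorem~4.1]{lu2010asymptotically}, which is all the paper invokes (it simply observes that Lu et al.'s proof is indifferent to where the first facility sits, so it applies verbatim when $h_S$ is fixed externally). One caveat: the two facts you list for the cross-multiplication---$\min\{d(x_i,h_S),d(x_i,x'_i)\}\ge 0$ and each off-diagonal cost being at most $d(x_i,h_S)$---are not by themselves sufficient; you also need the triangle inequality $d(x'_i,h_S)\le d(x_i,h_S)+d(x_i,x'_i)$, which is exactly what formally caps how much extra probability mass a misreport can buy in terms of the cost $d(x_i,x'_i)$ the agent then incurs when its own (far-away) report is selected.
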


\begin{thm}[Second Facility Results]
  \label{thm:second-facility}
  Consider the case where the metric space is the real line. Fix any
  dataset $X$ and first facility $\{g_S\}$. For any second facility
  $g^*_T \in V$; let $(S,T)$ be a partition of the dataset induced by
  $F^* = \{g_S, g_T^*\}$. The expected cost of $\propmech$ given $X$
  and $g_S$ is:
  \begin{gather}
    \EE[\kmed[2](X,\{g_S, g_T\})] \leq 2 \kmed[1](S,\{g_S\}) + 3
    \kmed[1](T,\{g_T^*\}), 
  \end{gather}
  where $g_T$ is the second facility location chosen by $\propmech$.
\end{thm}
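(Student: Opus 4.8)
The plan is to bound the expected cost agent-by-agent via linearity of expectation, writing
$\EE[\kmed[2](X,\{g_S,g_T\})] = \sum_{x\in X}\EE[\min(d(x,g_S),d(x,g_T))]$, where $g_T=h$ is the point drawn by $\propmech$ with probability $a_i/D$ for $a_i=d(x_i,g_S)$ and $D:=\sum_i a_i$, and then to handle the two clusters $S$ and $T$ separately. Throughout I write $L:=d(g_S,g_T^*)$, $OPT_S:=\kmed[1](S,\{g_S\})$, $OPT_T:=\kmed[1](T,\{g_T^*\})$, and on the line I may normalize so that $g_S=0$ and $g_T^*=L\ge 0$. For the agents $x\in S$ the bound is immediate: $\min(d(x,g_S),d(x,h))\le d(x,g_S)$ for every realization of $h$, so these agents contribute at most $\sum_{x\in S}d(x,g_S)=OPT_S$ to the expected cost. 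This already supplies one of the two units of $OPT_S$.

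The heart of the argument is the cost charged to the agents $x\in T$. The idea is to route each such agent's service cost through the target facility $g_T^*$, capping the detour at $L$. Since $d(x,g_S)\le d(x,g_T^*)+L$ and $d(x,h)\le d(x,g_T^*)+d(h,g_T^*)$ hold unconditionally, taking the minimum of the right-hand sides gives $\min(d(x,g_S),d(x,h))\le d(x,g_T^*)+\min(L,d(h,g_T^*))$. Taking expectations over $h$ and summing over $x\in T$ yields $\sum_{x\in T}\EE[\min(d(x,g_S),d(x,h))]\le OPT_T+|T|\cdot\EE[\min(L,d(h,g_T^*))]$, where $\EE[\min(L,d(h,g_T^*))]=\frac{1}{D}\sum_i a_i\min(L,d(x_i,g_T^*))$.

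I expect the main obstacle to be controlling the multiplicative factor $|T|$ appearing here, which naively threatens to blow the bound up. The crucial structural fact is that every $T$-point is far from $g_S$: since $x_i\in T$ means $d(x_i,g_T^*)\le d(x_i,g_S)=a_i$, the triangle inequality gives $L\le a_i+d(x_i,g_T^*)\le 2a_i$, hence $a_i\ge L/2$ and therefore $|T|\le 2A_T/L\le 2D/L$, where $A_T:=\sum_{i\in T}a_i$. This cancels the $L$ in the denominator coming from $\EE[\min(L,\cdot)]$. Splitting the sum $\sum_i a_i\min(L,d(x_i,g_T^*))$ over $i\in S$ (where $\min(L,\cdot)\le L$, so the contribution is charged against $L\cdot OPT_S$) and over $i\in T$ (where $a_i\min(L,d(x_i,g_T^*))$ is charged against $d(x_i,g_T^*)$, hence against $OPT_T$) produces a bound of the form $c_1\,OPT_S+c_2\,OPT_T$ for the $T$-agents, with $O(1)$ constants.

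The delicate remaining step is to tighten these constants to land exactly at the claimed $2\,OPT_S+3\,OPT_T$. Here the crude triangle routing is lossy precisely when $x$ and $h$ lie on the same side of $g_T^*$ (it double-counts the segment back through $g_T^*$ even though the two points are close), so to recover the sharp constants I would use the line order: fix the ordering of the reports and of $g_S,g_T^*$, and account for each $T$-agent's contribution directly rather than worst-casing each product $a_i\,d(x_i,g_T^*)$. Finally, the strategyproofness required of the mechanism is a separate matter, already established in Lemma~\ref{lem:prop-is-strategyproof}, so the theorem itself only requires the expected-cost estimate above.
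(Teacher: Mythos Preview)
Your overall plan---bound the $S$-agents trivially by $OPT_S$ and then control the $T$-agents via a routing through $g_T^*$---matches the paper's skeleton, and your computation correctly yields an $O(1)$-constant bound. The decomposition you use, however, differs from the paper's and does not reach the sharp coefficients $2$ and $3$; the ``delicate remaining step'' you flag is in fact where most of the paper's work lies, and it is not merely a matter of ``using the line order.''

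The paper's key move is to condition on whether the sampled point $h$ falls in $S$ or in $T$. When $h\in S$, each $T$-agent's cost is at most $a_t$ (it can always go to $g_S$), and since the total probability mass on $S$ is exactly $OPT_S/\sum_j a_j$, this branch contributes at most $\tfrac{OPT_S}{\sum_j a_j}\sum_{t\in T}a_t\le OPT_S$. This supplies the second unit of $OPT_S$ \emph{without} introducing the factor $|T|$ that your cap-at-$L$ routing carries around. When $h=x_i\in T$, the paper writes $\min\{a_t,d(x_t,x_i)\}\le b_t+\min\{a_t-b_t,b_i\}$ and splits $\sum_{i\in T}p_i\sum_{t\in T}\min\{a_t-b_t,b_i\}$ into three pieces, each shown to be at most $OPT_T$; the third piece is where the line structure enters, via a QM--AM inequality applied to the two halves $T_a,T_b$ of $T$ on either side of $g_T^*$ (using $a_t=L\mp b_t$ on the line).

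Your bound $\min(d(x,g_S),d(x,h))\le b_x+\min(L,d(h,g_T^*))$ is valid but looser: carrying it through as written gives roughly $3\,OPT_S+5\,OPT_T$, and the $|T|$ multiplier is the culprit. The conditioning-on-$h$'s-cluster step, together with the split $\min\{a_t-b_t,b_i\}$ in place of your $\min\{L,d(h,g_T^*)\}$, is what makes the tight $2$ and $3$ fall out.
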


We defer the proofs of \cref{lem:prop-is-strategyproof} and
\cref{thm:second-facility} to \cref{sec:sec-fac-proofs}.
\Cref{thm:second-facility} implies an approximation of a factor of $3$
in general, and this is tight. To show tightness, consider an instance with $n$
points on the real line: $\nf{n}{2}$ points at $x=0$, $\nf{n}{2} - 1$
points at $x=1$, and a single point at $x=2$. Assume the given first
facility is at $0$. The optimal solution puts the second facility at
$1$ and pays the cost of $1$. On the other hand
$\E[ALG] = \frac{1}{\frac{n}{2} + 1} \prn*{\frac{n-2}{2} +
  2\frac{n-2}{2}} = 3 \frac{ \nf{n}{2} -1 }{ \nf{n}{2} + 1} \approx
3$.



\subsection{The \texorpdfstring{$\delta$}{}-Robustness of the Big Cluster Center} 
\label{sec:big-cluster-center-robustness}

We move on to proving the $\delta$ robustness of a location estimator we name $\bcc$. This is the last piece of the puzzle we need for our mechanism for the $2$-facility-location with $\mac$ predictions problem. For this problem, the optimal solution is obtained by $\TwoMed$ if the points are known. Our mechanism uses the predictions to estimate one facility, and randomly chooses the second facility. For this purpose, we come up with an algorithm to ``estimate'' the first facility location.  We formally define what it means to ``estimate'' one of the two facilities, and then we quantify the distance and approximation robustness of the estimator.

Since the breakdown point of $\TwoMed$ is $\nf{1}{n}$, we can not achieve a robust estimation of the optimal solution to the $\TwoMed$ of $n$ points (that is, a solution close to the optimal two centers). However, we can still get a good estimation for one of the two centers of the optimal solution. For any clustering induced by any $2$ points in space, there is always one bigger cluster (that is, the cluster that contains at least half the points). In this section, we show an algorithm, $\bcc$, that ``estimates'' the center of the bigger cluster center out of the two clusters induced by $\TwoMed$.


\begin{definition}[$\bcccost$]
  For any $X = \{ x_1, \ldots, x_n \} \in V^n$, let $F$ be its optimal $2$-median, and let
  $\calC = \calC(X,F)$ be the induced clustering. Let $\bc(X)$ be the
  cluster in clustering $\calC$ with at least $\nf{|X|}2$ points (if both clusters have the same size $\bc(X)$ can be the one containing $x_1$). For
  $t \in V$, define
  \[ \bcccost(X, t) := \kmed[1](\bc(X), t). \]
\end{definition}

\begin{definition}[Big Cluster Location Problem]
  Given $n$ points $X \in V^n$,
  the goal is to find a point $t \in V$ to
  minimize the cost function $\bcccost(X, t)$.
\end{definition}

We specifically focus on the line metric ($V = \R$) and only consider $X \in \R^n$ instances where the clustering induced by $\TwoMed(X)$ are $b \delta$-unbalanced for some $b > 1$, $\delta \in [0,\nf12)$ where $b, \delta$ are some small constants.

\begin{algorithm}[H]
	\SetAlgoNoLine
    \caption{$\bcc$}
    \label{alg:big_cluter_center} 
    \KwIn{$X \subseteq \R$}
	\KwOut{A single facility location in $\R$}
    $(h_L, h_R) = \betabalancedkmed[(b-1)\delta]{2}(X)$ \;
    $L' = \{ x_i \in X \mid d(x_i, h_L) \le d(x_i, h_R) \}$\;
    $ R' = X \setminus L' $\;
    Return $h_L$ if $|L'| \ge |R'|$ and $h_R$ otherwise\;
\end{algorithm}

\begin{thm}
  \label{thm:big_cluster_center_robustness}
  Let $\calY$ be the collection of all datasets for which the optimal
  $2$-medians induces no $b\delta$-balanced clusterings. Then for a small enough $\delta$, Algorithm $\bcc$ is
  $(1.8 + O(\delta), \delta)$-approximation-robust for the cost function $\bcccost$ restricted to instances in $\calY$.
\end{thm}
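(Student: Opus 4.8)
The plan is to exploit the structure that membership in $\calY$ forces, and then glue together the two robustness results already in hand. First I would unpack $\calY$: if the optimal $\TwoMed(X)$ induces no $b\delta$-balanced clustering, then its smaller cluster $A$ has fewer than $b\delta n$ points while the big cluster $B=\bc(X)$ has more than $(1-b\delta)n$ points. On the line the two optimal clusters are separated by a threshold, so $A$ lies entirely on one side of $B$; write $g_B=\OneMed(B)$ and $\mathrm{OPT}:=\bcccost(X,t^*)=\kmed[1](B)=\kmed[1](B,g_B)$, so $t^*=g_B$. The decisive consequence of unbalancedness is that $A$ must sit \emph{far} from $B$: otherwise absorbing $A$ into one half of a $2$-median split of $B$ would yield a cheaper balanced $2$-median, contradicting optimality of the unbalanced solution. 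Hence the cheapest way to meet any balance lower bound is to carve off an extreme, one-sided chunk of $B$ on the $A$-side, rather than to split $B$ near its mass.

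Next I would analyze what $\bcc(X')$ actually returns, namely $h=\OneMed(L')$ where $L'$ is the larger cluster of the $\betabalancedkmed[(b-1)\delta]{2}$ solution $\{h_L,h_R\}$ of $X'$. The heart of the argument is a structural claim: $L'$ agrees with $B$ except on a controlled set that lies to one side of $g_B$. A counting argument shows the returned majority cluster tracks $B$ and not $A$ — the cluster near $A$ can only collect $A$'s points, the one-sided chunk the balance constraint forces across, and the $\le\delta n$ corruptions, and for $b>2$ these corruptions are too few to form a small cluster on their own — so the cluster-boundary sits on the $A$-side of $g_B$. Here $X\in\calY$ is essential: because a cheap split of $B$ is impossible, the optimal balanced solution on $X'$ has no incentive to push the boundary into the mass of $B$, so $B\triangle L'$ consists only of the one-sided carved chunk plus corruptions.

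With the disagreement set understood, I would treat $L'$ as a dataset obtained from $B$ by adding/removing points lying to one side, and apply the $\OneMed$ distance-robustness bound (\cref{thm:one-med-dist-robust}, in the add/remove form noted after it) to get $d(g_B,h)\le \frac{2}{1-2\delta'}\cdot\frac{\mathrm{OPT}}{|B|}$ for the relevant fraction $\delta'=|B\triangle L'|/|B|$. Finally I would convert this into the cost guarantee, being careful \emph{not} to use the lossy estimate $\kmed[1](B,h)\le \mathrm{OPT}+|B|\,d(g_B,h)$ that falls straight out of \cref{lem:switch}-type triangle inequalities (that route leaks the distance-robustness factor of $2$ and gives roughly $3\,\mathrm{OPT}$). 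Instead I would write $\kmed[1](B,h)=\kmed[1](L',h)+\kmed[1](B\setminus L',h)$, use optimality of $h$ on $L'$ (so $\kmed[1](L',h)\le\kmed[1](L',g_B)$), and exploit that the carved points all lie on one side of $g_B$ so their surplus telescopes into $|B\triangle L'|\cdot d(g_B,h)$; combining with the lower bound $\mathrm{OPT}\ge\sum_{x\in B\setminus L'}d(x,g_B)$ yields the stated factor $1.8+O(\delta)$.

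The hard part will be the structural claim of the second step: controlling where the balanced $2$-medians of the \emph{corrupted} set $X'$ places its cluster boundary, so that the adversarial $\delta n$ corruptions, interacting with the balance constraint, cannot drag the big cluster's median off $g_B$. This is precisely where $X\in\calY$ does the work (forbidding a cheap split of $B$) and where the one-sidedness of the carving must be made rigorous, including tie-breaking and corruptions placed exactly at the boundary. The constant $1.8$ is an artifact of the distance-to-cost conversion rather than of either robustness theorem in isolation, so the remaining quantitative work is to bound the size and one-sidedness of $B\triangle L'$ tightly enough that the factor $\frac{2\delta'}{1-2\delta'}$ is at most $0.8+O(\delta)$; since the paper does not optimize constants, this bound need not be the best possible.
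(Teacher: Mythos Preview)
Your proposal has a genuine gap at exactly the point you flag as ``the hard part,'' and the gap is fatal for the route you sketch. You want to view the returned majority cluster $L'$ as a perturbation of $B=\bc(X)$ with $\delta':=|B\triangle L'|/|B|$ small enough that the $\OneMed$ distance-robustness bound $\frac{2}{1-2\delta'}\cdot\frac{\mathrm{OPT}}{|B|}$ controls $d(g_B,h)$. But in the paper's hard case (both balanced centers $h_L,h_R$ land to the left of $g_L,g_R$, with $h_R<m$), \emph{both} centers of $\betabalancedkmed[(b-1)\delta]{2}(X')$ sit inside $B$, so the two clusters $L',R'$ essentially partition $B$ itself. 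Whichever is returned has size roughly $n/2$, whence $|B\setminus L'|\approx |B|/2$ and $\delta'\approx \tfrac12$, which makes the $\OneMed$ bound blow up. The paper's own lower-bound instance ($n/2$ at $0$, $n/4$ at $-0.5-\eps$, $n/4$ at $-1$, one outlier at $M$) already exhibits this: after moving the outlier, the larger balanced cluster has $\approx n/2$ points and $h=-1$, so $\delta'\approx \tfrac12$; your distance bound gives nothing, and even your refined cost decomposition yields $\mathrm{OPT}+|B\setminus L'|\cdot d(g_B,h)\approx \tfrac{7}{3}\,\mathrm{OPT}$, not $1.8$.

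Two of your structural claims also do not hold as stated. First, ``$A$ must sit far from $B$'' does not follow from $X\in\calY$ and is never used in the paper; unbalancedness only says $|A|<b\delta n$, and the adversary's $\delta n$ corruptions can always manufacture a balanced split of $B$ on $X'$ regardless of where $A$ sits. Second, ``the cluster boundary sits on the $A$-side of $g_B$'' is false in the hard case: with $m'\ge g_L$ the boundary $m'$ lies \emph{inside} $B$ on the $A$-side, but the removed set $U\cup V$ can still comprise half of $B$. The paper does not try to reduce to $\OneMed$ robustness here; instead it partitions $L$ into $S,T,U,V$ by the points $h_L,m',h_R$, proves four pointwise inequalities bounding $(g_L-h_L)$ against $OPT_S,OPT_{U_r},OPT_V,OPT_{T\cap L_l}+OPT_{U_l}$, and combines them with the counting identity $(\beta+\gamma)|L|\approx |L|-n/2$. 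The constant $1.8$ then comes from maximizing $f(\beta)=\frac{2(1-2\beta)}{3/2-\beta}$ over $\beta\ge\tfrac14-\delta$, giving $\tfrac45+O(\delta)$; this is a bespoke geometric calculation, not an instance of \cref{thm:one-med-dist-robust} or \cref{lem:switch}.
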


We given a proof sketch here, and defer the full proof to \cref{sec:robust-bigcluster-proof}.

\begin{proofsketch}
  Let $X \in \calY$, and let $G = \{g_L, g_R\}$ be the optimal $2$-medians for $X$, and let $g_L < g_R$. Let $L,R$ be a clustering of $X$ induced by $G$; by the definition of $\calY$, this clustering is $b\delta$-unbalanced. W.l.o.g., let $|L| > (1-b\delta)n$. let $X'$ be $\mac(0, \delta)$ predictions for $X$. To prove the theorem, we show the following:
  calling $\bcc$ on $X'$
  returns a point $t \in V$ such that
  \begin{gather}
    \med(L,t) \leq (1.8 + O(\delta)) \cdot \med(L,g_L). \label{eq:2}
  \end{gather}
  Let $H = (h_L, h_R) = (b-1)\delta\BalTwoMed(X')$ as in the first
  line of
  \Cref{alg:big_cluter_center}, 
  where $h_L$ lies to the left of $h_R$. Let $L', R'$ be the
  clustering of $X'$ induced by $H$. Let $m = \frac{g_L + g_R}{2}$
  and $m' = \frac{h_L + h_R}{2}$.

In this
  sketch we focus on the case where $h_L \leq g_L$ and $h_R \leq m \leq g_R$. In all other cases, we argue that $L$ and $L'$ share $(1-O(\delta))n$
  points and hence:
  (a) $L'$ is the larger cluster so $\bcc(X')$ returns $h_L$, and (b) since $L$ and $L'$ differ on few points, the approximation-robustness of $\OneMed$ implies
  that $\med(L,h_L) \leq (1+O(\delta))\, \med(L,g_L)$.

 We assume that $m' \ge g_L$. There is a reduction shown from the case of $m' > g_L$ to this case that costs another multiplicative factor of at most $1 + O(\delta)$. This allows us to pin down
  the relative order of the points on the line:
  \[ h_L \leq g_L < m' < h_R \leq m \leq g_R. \] This gives a
  partition of the points in $L$ (which lie to the left of $m$) into
  $S,T,U,V$ as in \cref{figure_big_cluster_approx_hard_case_g_L_le_m_tag_proofsketch} below:
  $S = \{ i \in L \mid x_i \le h_L \}$, $T = \{ i \in L \mid i \notin S, x_i \le m' \}$, $U = \{ i \in L \mid i \notin S \cup T, x_i \le h_R \}$, $V = \{ i \in L \mid i \notin S \cup T \cup U \}$

Let $g_L$ divide $L$ into $L_l$
  and $L_r$, depending on whether they are to the left or right of
  $g_L$.
Similarly $g_R, h_L, h_R$ divides $R, L', R'$ into their left and right parts.

Let $m'' = \frac{g_L + h_R}{2}$ be the middle point between $g_L$ and $h_R$. W.l.o.g we assume that $m'' > m'$ (otherwise the proof is similar). We define $U_l, U_r$ to be the as follows:
$U_l = \{ i \in U \mid x_i \le m'' \}$, $U_r = U \setminus U_l$.

    \begin{figure}[h]
    \centering
    \includegraphics[scale=0.4]{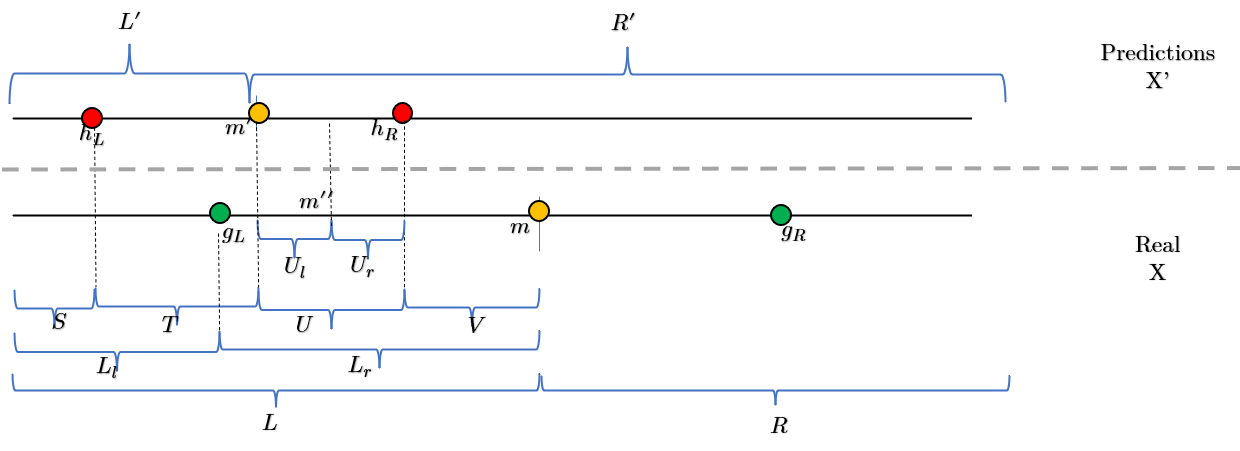}
    \caption{Illustration of case (4) where $m' \ge g_L$. On the top we have $h_L, h_R$, computed on the "predicted" locations $X'$. On the bottom we have the $g_L, g_R$, the $\TwoMed$ of the "real" locations $X$.}
    \label{figure_big_cluster_approx_hard_case_g_L_le_m_tag_proofsketch}
    \end{figure}

  For every $M \subseteq X$ we define $OPT_M$ to be the cost that the optimal solution ($G$) pays for the multi-set of points $M$. That is: $OPT_M := \kmed[2](M, G)$ (we sometimes use this notation where $M \subseteq [n]$ in which case $OPT_M$ is a slight abuse of notation for $OPT_{\{ x_i | i \in M \}}$).

  Let $\beta, \gamma \in [0,1]$ be the ratio between $|S|,|V|$ and $|L|$. That is:
  $\beta := \nf{|S|}{|L|}$ and $\gamma := \nf{|V|}{|L|}$.

  From the definition of $\kmed[1]$, a calculation shows:
  \begin{equation}\label{h_L_cost}
      \kmed[1](L, h_L) = OPT_S - OPT_{T \cap L_l} + OPT_{L_r} + (g_L - h_L)(|L|(1 - 2 \beta)).
  \end{equation}

  If the algorithm returns $h_L$ then we need to show: $\kmed[1](L, h_L) \le (1.8 + O(\delta)) \med(L, g_L)$. The other case where the algorithm returns $h_R$ is omitted from this
  sketch; the proof is similar, and shows a distance bound on
  $h_R-g_L$. From \cref{h_L_cost}, this is equivalent to showing:
  \begin{equation} \label{desired_ineq_dist_robust}
      g_L - h_L \le \frac{0.8 OPT_S + 2.8 OPT_{T \cap L_l} + 0.8 OPT_{L_r} + O(\delta) OPT_L}{|L|(1 - 2 \beta)}.
  \end{equation}
  
  An interpretation of the above is that it is enough to get a distance robustness result for the $\bcc$ estimator which is stated in the above inequality \eqref{desired_ineq_dist_robust} for us to get the desired approximation-robustness result.
  
  To do so, we utilize the following inequalities (We drop their proofs from this sketch):
  \begin{align}
      (g_L - h_L)2 |S| &\le 2 OPT_S, \\
      (g_L - h_L)|U_r| &\le 2 OPT_{U_r}, \\
      (g_L - h_L)2 |V| &\le 2 OPT_{V}, \text{ and }\\
      (g_L - h_L)(|U_l| - O(\delta n)) &\le 2 \prn*{OPT_{T \cap L_l} + OPT_{U_l}}.
  \end{align}
  By summing the above inequalities:
  \begin{align*}
    & (g_L - h_L) \prn*{2 |S| + |U| + 2 |V| - O(\delta n)} \le 2 \prn*{ OPT_S + OPT_{T \cap L_l} + OPT_U + OPT_V}. \implies \\
    & g_L - h_L \le \frac{\prn*{ OPT_S + OPT_{T \cap L_l} + OPT_U +
      OPT_V}}{(1-2\beta)|L|} \cdot \frac{2 (1 - 2\beta)|L|}{2 |S| +
      |U| + 2 |V| - O(\delta
      n)}. \numberthis \label{g_L_minus_h_L_tmp_ineq} 
\end{align*}
Since $|{L'}_l| = |{L'}_r|$, then $|T| = |S| = \beta |L|$ up to an additive factor of at most $\delta n$. Similarly, $|U| = (1 - \gamma)|L| + n$ up to an additive factor of at most $\delta n$. 

By the above approximations of $|T|, |U|$,
the definition of $\beta, \gamma$, and by the fact that $|S| + |T| + |U| + |V| = |L|$ it is possible to show that: 
\[ 2 |S| + |U| + 2|V| - O(\delta n) \ge 2|L| - \beta|L| -\nf{n}{2} -
O(\delta n). \] Plugging this into $\cref{g_L_minus_h_L_tmp_ineq}$ we get:
\begin{align*}
    & g_L - h_L \le \frac{\prn*{ OPT_S + OPT_{T \cap L_l} + OPT_U + OPT_V}}{(1-2\beta)|L|} \cdot \prn*{ \frac{2 (1 - 2\beta)|L|}{2 |L| - \beta |L| - \frac{n}{2} - O(\delta n)}}. \numberthis \label{last_intermediate_result_on_g_L_minus_h_L}
\end{align*}

Analysis of the last expression on RHS shows it is bounded by $\frac{4}{5} + O(\delta)$ and thus we get the desired \cref{desired_ineq_dist_robust}.
\end{proofsketch}

\begin{remark}
    \cref{alg:big_cluter_center} does not have $(c, \delta)$ approximation robustness for any $c < \frac{5}{3} \approx 1.667$. This lower bound is obtained by observing the following instance: Let $\eps > 0$ be a small enough value, $X \in \R^n$ where $\nf{n}{2}$ points are at $x=0$, $\nf{n}{4}$ points are at $x=-0.5 - \eps$, $\nf{n}{4}$ points at $x = -1$, and one point at $M \gg n$. Now let $X'$ be obtained from $X$ by moving the point at $x = M$ to $x = -1$. Calculation shows that the above instance leads to a $\frac{5}{3}$ approximation ratio.
\end{remark}


\subsection{The \texorpdfstring{$2$}{}-Facility Location Algorithm}
\label{subsec:two-fac-alg}

Our randomized algorithm below first uses the \bcc procedure on the
predictions to approximate the ``big'' cluster center, and then uses
the \propmech to approximate the second cluster center using only the
agents' reports. (We set $b \in \mathbb{R}_{> 1}$ such that the approximation robustness in \cref{thm_balanced_k_median_robustness} equals 1.2.)


\begin{algorithm}[H]
	\SetAlgoNoLine
    \caption{Predict-And-Choose-Proportionally-Mechanism}
\label{alg:predict-and-choose-proportionally} 
 \KwIn{$X, X' \subseteq \R^d$ where $X'$ are the $\mac$ predictions for $X$}
 $h_1 = \bcc(X', b, \delta)$\;
 $h_2 = \propmech(X, h_1)$\;
 Return $H = (h_1, h_2)$\;
\end{algorithm}

\begin{thm}
\label{thm:rand-two-fac}
  \cref{alg:predict-and-choose-proportionally} is strategyproof and has an expected approximation ratio of $\ 3.6 + O(\delta)$.
\end{thm}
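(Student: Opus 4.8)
The plan is to combine the three ingredients developed earlier—strategyproofness of the components, the $\delta$-robustness of $\bcc$, and the competitive guarantee of $\propmech$—in a modular fashion. First I would establish strategyproofness. The key structural observation is that Algorithm~\ref{alg:predict-and-choose-proportionally} computes $h_1$ using \emph{only} the predictions $X'$, which are not controlled by the agents, and then computes $h_2 = \propmech(X, h_1)$ using the reports. Since $h_1$ does not depend on any agent's report, no agent can manipulate the location of the first facility. Given $h_1$ fixed, the second facility is chosen by $\propmech$, which is strategyproof for any metric space by Lemma~\ref{lem:prop-is-strategyproof}. One must be slightly careful that an agent cannot gain by affecting $h_2$ in a way that changes which facility serves it; but since $\propmech$ is strategyproof as a mechanism for the second-facility problem (where an agent's cost is its distance to the nearer of $h_S=h_1$ and the chosen $h_2$), truthful reporting is a dominant strategy. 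I would cite Lemma~\ref{lem:prop-is-strategyproof} for this and note that $h_1$ is report-independent.

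Next I would bound the expected cost. Let $G=\{g_L,g_R\}$ be the optimal $2$-medians for $X$ with induced clustering $(L,R)$, and write $OPT=\kmed[2](X,G)$. I would split into two cases according to whether $G$ induces a $b\delta$-balanced clustering. In the \emph{balanced} case, the clusters are large, so Theorem~\ref{thm_balanced_k_median_robustness} (applied to the $\betabalancedkmed[(b-1)\delta]{2}$ computation inside $\bcc$) gives that the solution $H=(h_L,h_R)$ computed on $X'$ satisfies $\kmed[2](X,H)\le(1+O(\delta))\,OPT$; one then only needs that $\propmech$, given one of these near-optimal centers, does not blow up the cost. In the \emph{unbalanced} case $X\in\calY$, Theorem~\ref{thm:big_cluster_center_robustness} guarantees that $h_1=\bcc(X')$ satisfies $\kmed[1](L,h_1)\le(1.8+O(\delta))\,\kmed[1](L,g_L)$, i.e.\ $h_1$ is a good estimate of the big-cluster center. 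The crucial point is that the big cluster $L$ contains $(1-O(\delta))n$ of the points, so serving all of $L$ by $h_1$ already costs at most $(1.8+O(\delta))\,\kmed[1](L,g_L)$.

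The main computation is then to feed this into the $\propmech$ guarantee. Applying Theorem~\ref{thm:second-facility} with first facility $g_S:=h_1$ and the optimal second facility chosen as $g_R$ (the small cluster's center), the expected total cost is bounded by
\begin{gather*}
\EE[\kmed[2](X,H)]\le 2\,\kmed[1](S',\{h_1\}) + 3\,\kmed[1](T',\{g_R\}),
\end{gather*}
where $(S',T')$ is the partition induced by $\{h_1,g_R\}$. I would bound $\kmed[1](S',\{h_1\})\le\kmed[1](L,h_1)\le(1.8+O(\delta))\,\kmed[1](L,g_L)$ by the robustness of $\bcc$, and $\kmed[1](T',\{g_R\})\le\kmed[1](R,\{g_R\})=\kmed[1](R,g_R)$ since $T'$ is no larger than the optimal right cluster (modulo the $O(\delta n)$ mismatched points, whose contribution is absorbed into the $O(\delta)$ term). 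Since $\kmed[1](L,g_L)+\kmed[1](R,g_R)=OPT$, this yields $\EE[\kmed[2](X,H)]\le\max\{2\cdot 1.8,\,3\}\,OPT+O(\delta)\,OPT=(3.6+O(\delta))\,OPT$, the factor $3.6$ coming from the $2\times1.8$ product on the big-cluster term dominating the $3$ on the small-cluster term.

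The hard part will be the careful bookkeeping of the $O(\delta n)$ points on which the induced partitions $(L,R)$, $(S',T')$, and the prediction set $X'$ disagree: one must argue that reassigning these points between clusters changes each of the cost terms by only an additive $O(\delta)\,OPT$ (or, where a point is arbitrarily far, that it lies in the $\delta$-fraction and is charged appropriately), and that the $\propmech$ analysis of Theorem~\ref{thm:second-facility} remains valid when $h_1$ is only an approximate rather than exact center. Combining the two cases (balanced and unbalanced) into a single clean bound, and confirming that the balanced case is genuinely dominated by the $3.6$ factor, is the remaining delicate point; I expect the unbalanced case to be the binding one, which is consistent with the design emphasis on $\bcc$.
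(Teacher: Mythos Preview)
Your overall architecture matches the paper's proof: the strategyproofness argument (first facility depends only on predictions, second is $\propmech$), the case split into $b\delta$-balanced versus unbalanced optimal clusterings, and the unbalanced computation $2\times 1.8=3.6$ are all exactly what the paper does. Two points deserve comment.

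\textbf{The balanced case is the real gap.} Your assertion that Theorem~\ref{thm_balanced_k_median_robustness} yields a $(1+O(\delta))$-approximation is wrong: for $k=2$ the factor is $1+\tfrac{8}{b-6}$, a fixed constant depending only on the constant $b$, not on $\delta$. The paper sets $b$ precisely so that this constant equals $1.2$. Consequently the balanced case is \emph{not} dominated by the unbalanced one; both cases are binding at $3.6$. In the balanced case the paper lets $T=(t_L,t_R)$ be the $(b-1)\delta$-balanced $2$-medians of $X'$, notes $h_1\in\{t_L,t_R\}$ (say $h_1=t_L$), and applies Theorem~\ref{thm:second-facility} with second facility $t_R$ and the partition $(X_L,X_R)$ of $X$ induced by $T$, obtaining
\[
\EE[\kmed[2](X,H)]\le 2\,\kmed[1](X_L,t_L)+3\,\kmed[1](X_R,t_R)\le 3\,\kmed[2](X,T)\le 3\cdot 1.2\cdot OPT.
\]
Without this computation, your proof is incomplete: you cannot simply hope the balanced case ``is dominated''.

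\textbf{The partition bookkeeping in the unbalanced case is unnecessary.} You worry about bounding $\kmed[1](S',h_1)$ and $\kmed[1](T',g_R)$ where $(S',T')$ is induced by $\{h_1,g_R\}$, and flag the $O(\delta n)$ mismatched points as the hard part. In fact the proof of Theorem~\ref{thm:second-facility} never uses that the partition is the induced one: it only needs $OPT_S=\sum_{i\in S}d(x_i,g_S)$ and $OPT_T=\sum_{i\in T}d(x_i,g_T)$ as definitions, and the line-metric identities $a_i=D\pm b_i$ hold regardless of how $X$ is split. So you may apply Theorem~\ref{thm:second-facility} directly with the partition $(L,R)$ and $g_T^*=g_R$, which immediately gives $\EE[\kmed[2](X,H)]\le 2\,\kmed[1](L,h_1)+3\,\kmed[1](R,g_R)$ without any further bookkeeping. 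This is what the paper does, and it removes the issue you identified as ``the hard part''.
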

\begin{proof}
  The strategyproofness is due to the fact that the first facility is
  chosen based only using the predictions, and the second facility
  choice is strategyproof due to \cref{lem:prop-is-strategyproof}.

  Let $G = (g_L, g_R)$
  be the optimal cluster centers for $X$, and let $L, R$ be the
  respective corresponding clusters.  We consider two cases: when this
  optimal clustering is $b\delta$-balanced, and when it is not; in
  both cases we show that the algorithm achieves low expected cost.
    Let $H= (h_1, h_2)$ be the solution returned by
  \cref{alg:predict-and-choose-proportionally}.  
  
    The first case is when $(L, R)$ is a $b \delta$-balanced clustering of $X$.
    In this case we claim that
    \[ \E[\kmed[2](X,H)] \le (3.6 + O(\delta)) \kmed[2](X,G). \]
    Indeed, let $T = (t_L, t_R)$ be the two centers of the
    $(b-1)\delta$-balanced $2$-medians algorithm on $X'$.  Since $G$
    induces a $b \delta$-balanced clustering,
    \cref{thm_balanced_k_median_robustness} and our choice of $b$ ensures that
    $\kmed[2](X,T) \le 1.2 \kmed[2](X,G)$. 
    
    Since \cref{alg:big_cluter_center} returns one of the centers of the $(b-1)\delta$-balanced $2$-medians algorithm on $X'$ as $h_1$, let us assume  w.l.o.g.\ that $\bcc$ returns $h_1 = t_L$.  
    Let $(X_L, X_R)$ and $(X_L', X_R')$ be the clusterings of $X$ and
    $X'$ induced by $T = \{t_L, t_R\}$.  From
    \cref{thm:second-facility}:
    \begin{align*}
      E[\kmed[2](X, H)] &= \E[\kmed[2](X, \{t_L, h_2\})] \le 2 \kmed[1](X_L, t_L) + 3 \kmed[1](X_R, median(X_R)) \\
                        & \smash{\stackrel{(\star)}{\le}} 2 \kmed[1](X_L, t_L) + 3 (1 + O(\delta)) \kmed[1](X_R, t_R) \\
                        & \le (3 + O(\delta)) \kmed[2](X, T) \le (3.6 + O(\delta)) \kmed[2](X,G),
    \end{align*}
    where the inequality $(\star)$ uses the fact that $X_L$ and $X_L'$
    differ on at most $\delta\abs{X}$ points, and hence we can sue
    $(1 + O(\delta) ,\delta)$ approximation-robustness of $\OneMed$
    from \cref{1_median_approx_robustness_results}.

    Now for the other case: suppose $(L, R)$ is a
    $b \delta$-unbalanced clustering.
    \cref{thm:big_cluster_center_robustness} implies that
    either $\kmed[1](L,h_1) \le (1.8 + O(\delta))
      \kmed[1](L, g_L)$ or $\kmed[1](R,h_1) \le (1.8 +
      O(\delta)) \kmed[1](R, g_R)$. W.l.o.g., consider the first
    option. Then from \cref{thm:second-facility}:
    \begin{align*}
      E[\kmed[2](X, H)] 
      & \le 2 \kmed[1](L, h_1) + 3 \kmed[1](R, g_R) \\
      &\le 2 ((1.8 + O(\delta)) \kmed[1](L, g_L)) + 3 \kmed[1](R, g_R) \le \prn*{3.6 + O(\delta)} OPT.
    \end{align*}
  Combining the two cases completes the proof.
\end{proof}

Like for the single facility case, we can modify
\cref{alg:predict-and-choose-proportionally} by combining
$\bcc$ with $\minbb$: this ensures that if there are more than
$\delta n$ prediction errors with probability
$o(\nf{1}{n})$, the approximation ratio would not exceed $O(n)$, and
hence we would still get the same results in expectation.

\section{Conclusion and Future Directions}\label{sec:conclusions}

We explore strategyproof facility location within our introduced $\mac$ predictions error model. 
Our model integrates the trust level into the error model, handles outliers, and leads to new mechanism designs (that do not seem to be
an interpolation between completely trusting the predictions and
resorting to a "no-predictions" method). To analyze our designs, we
utilize distance and approximation robustness notions established in
Section~\ref{sec:robustness}.

An immediate question arising from our results is whether there is a \emph{deterministic} mechanism with $\mac$ predictions for $2$-facility location on a line that yields a constant approximation. 
Another avenue for future research is improving the approximation ratio by randomized mechanisms for this problem. One promising strategy entails refining the selection process for the first facility in \cref{alg:predict-and-choose-proportionally}, which can lead to a superior approximation ratio.
Our current mechanisms treat the multi-sets of predictions and reported values as independent entities; an intriguing direction for further investigation is to devise mechanisms that capitalize on the matching between predictions and agent-reported values, potentially yielding a more accurate approximation ratio.
Lastly, exploring the application of the $\mac$ model to problems beyond mechanism design for 
facility location may yield insights
into the efficacy of such predictions in diverse contexts. 


\section*{Acknowledgements}

We are grateful to Batya Berzack for comments on an earlier draft of the paper.
This work is funded by the European Union (ERC grant No.~101077862, project: ALGOCONTRACT, PI: Inbal Talgam-Cohen), a Google Research Scholar award, the Israel Science Foundation (ISF grant No.~336/18), the Binational Science Foundation (BSF grant No.~2021680), and the National Science Foundation (NSF grant Nos.~CCF-1955785 and CCF-2006953).

\bibliographystyle{ACM-Reference-Format}
\bibliography{references}
\appendix

\newpage
\section{Additional Related Work}
\label{sub:more-related}
\textbf{Mechanism design for facility location problems  without predictions.}

\emph{Mechanism design for the single facility location problem.} In the single facility location in $\R^d$ problem, the task is to return a single facility minimizing the social cost function. For the case of no-predictions, \citet{meir2019strategyproof} shows a deterministic strategyproof mechanism that gets an approximation ratio of $\sqrt{d}$ which is optimal (The optimality for $d = 2$ is shown by \citet{goel2023optimality}). The optimal deterministic strategyproof mechanism is computing the $\cwmed$ of the points. The goal of introducing predictions is to get something better than $\sqrt{d}$ in this context.

\emph{Mechanism design for the two facility location on the line.}
In this variant the goal is to return two  facilities to minimize the social cost, where all points lie on the real line $\R$.
We want to find a strategyproof mechanism that given the agent reported locations returns the two locations for facilities in $\mathbb{R}$, s.t. the social cost is minimal. A deterministic $n-2$ approximation ratio mechanism (called the "Two Extremes" mechanism) was given by \citet{procaccia2013approximate}, and a lower bound of $\nf{n}{2} - 1$ was given by \citet{lu2010asymptotically} which was later improved to (the tight) lower bound of $n-2$  by \citet{fotakis2014power}.
A randomized strategyproof mechanism with an expected approximation ratio of $4$ was given by \citet{lu2010asymptotically} (which also works for any metric space), while the currently best known lower bound for random mechanisms (\cite{lu2009tighter}) is $1.045$.

\emph{$k$-facility location}. For $k > 2$, \citet{fotakis2014power} show that there is no deterministic anonymous strategyproof mechanism with a bounded approximation ratio for $k$-facility location on the line for the general case (not just balanced) for any $k \ge 3$, even for simple instances with $k+1$ agents. Moreover, they show that there do not exist any deterministic strategyproof mechanisms with a bounded approximation ratio for 2-Facility Location on more general metric spaces, which is true even for simple instances with 3 agents located in a star.

\emph{Mechanism design for the capacitated facility location} is a variant of the problem studied by \citet{aziz2020facility} where each facility has a maximum capacity, limiting the number of points that can be assigned to it. For the utilitarian cost function, they have shown a $\nf{n}{2} - 1$ approximation ratio.

\medskip
\textbf{Robust $k$-medians and facility location.}
The robustness of the (offline non-strategic) $k$ medians and $k$ facility location has been studied under different variations. As \cref{example:non-robustness-of-two-median} demonstrates, it is not possible to get any bounded approximation ratio for the optimal solution. 
The approach of \citet{charikar2001algorithms} is to look at different variants of the problems with less restrictive objectives. In one variant they consider, the problem is to place facilities so as to minimize the service cost to any subset of facilities of size at least $p$ for some parameter $p$. Another variant they consider allows denial of service for some of the clients with the additional cost of some penalty for each such denied client. For work in these kind of models see \cite{krishnaswamy2018constant, agrawal2023clustering}.

\medskip
\textbf{Other models of advice/predictions.}
In the recent literature of algorithms with predictions there are existing other models for algorithms with predictions which are different from the models we already discussed. We now mention some of them.

Some other ML advice is to focus on values that are learned via classical PAC learnability and focus on the learnability and analysis of sample complexity bounds. The assumption is that there is some distribution generating the input for the algorithm. Then the approach is to learn the distribution in terms of classic supervised learning. This kind of modeling is detailed in   \cite{anand2020customizing}, \cite{lavastida2020learnable}, \cite{anand2021regression}.

 \cite{diakonikolas2021learning} assumes that the advice is not given deterministically, but that they can access the distributions of the inputs and sample from these: they consider access to the instances of interest, and the goal is to learn a competitive algorithm given access to i.i.d. samples. They provide sample complexity bounds for the underlying learning tasks.

Online variants of the facility location problem with different notions of prediction were studied e.g.~by \cite{almanza2021online, fotakis2021learning, jiang2021online_online_fac_with_pred, azar2022online, gupta2022augmenting, azar2023discretesmoothness, anand2022online}. The online setting of the problem differs significantly from our mechanism design setting: the real points arrive in sequence where each time an irrevocable decision must be made by the online algorithm (unlike the mechanism design setting where all the input is given to the mechanism at once, but is reported by strategic agents).

\medskip
\textbf{Learning the trust parameter.}
In the context of online algorithms with prediction, \citet{khodak2022learning}, show that the confidence parameter can be estimated under certain conditions via online learning. \citet{sun2023online} also have online learning analysis to estimate their different type of confidence parameter.

\medskip
\textbf{Other algorithmic game theory with prediction work}
Research in algorithmic game theory incorporating predictions, beyond the previously discussed work, has been explored by \citet{DBLP:conf/innovations/BalkanskiGT23, DBLP:journals/corr/abs-2310-02879, DBLP:journals/corr/abs-2307-07495, DBLP:journals/corr/abs-2302-14234}.

\medskip
\textbf{Models outside of algorithms with predictions literature.}  One way to view the $\mac$ model is to view the predictions as data with  with corruptions. Designing algorithms to try and handle the corruptions was studied before.  Multi-Arm-Bandit with corruptions is such a setting (\cite{DBLP:journals/corr/abs-1803-09353, DBLP:journals/corr/abs-2002-11650, amir2020prediction}).

\citet{zampetakis2023bayesian} propose the use of robust statistical estimators for strategyproof mechanism design in a \emph{Bayesian} setting. They show how to use a location estimator which is robust to corrupting $\delta$ fraction of the data drawn from some known distribution, to get a strategyproof mechanism for the same location problem. This is conceptually very similar to our result for the single facility case. Since there is no robust estimator for the two-facility problem, their approach cannot apply without making Bayesian assumptions; nonetheless, in our work we show how to use  predictions and agent reports to get improvements on the worst-case approximation guarantees.

\section{Generalized Minimum Bounding-Box Mechanism }\label{sec:min-bb}

In this section we properly define the generalized  $\minbb$ mechanism and prove that for points in $\R^d$ (for any $d \ge 1$) it has a tight approximation ratio of $O(n)$ for the $med_1$ cost function.

For any $y \in \R^d$ let $[y]_j$ denote the $j$'th coordinate of $y$.

\begin{algorithm}[H]
	\SetAlgoNoLine
    \caption{$\minbb$}
\label{alg:min_bounding_box}
    \KwIn{$X \subset \R^d$, $o \in R^d$}
	\KwOut{A facility location in $\R^d$ which is inside the minimum bounding box of $X$}
    \For{$j \in [d]$}{
        $o'_j = \minmaxp\prn*{\prn*{[x_1]_j, \ldots, [x_n]_j}, o_j}$
    }
    return $o'$
\end{algorithm}

\cref{alg:min_bounding_box} simply computes, for each coordinate $j$, the closest point to $o_j$ that is inside the minimum closed interval that contains all of the $j$'th coordinates of the points of $X$ (which is what $\minmaxp$ does, see Mechanism 1 of \cite{agrawal2022learning}). Let us now prove an approximation ratio for the $\kmed[1]$ cost.

\begin{thm} \label{thm_min_bounding_box_linear_approx_ratio}
    \cref{alg:min_bounding_box} is strategyproof and has a tight $O(n)$ approximation ratio for the $\kmed[1]$ cost function.
\end{thm}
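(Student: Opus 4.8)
The plan is to establish the three parts of the statement separately: strategyproofness, the $O(n)$ upper bound on the approximation ratio, and a matching $\Omega(n)$ lower bound witnessing tightness. Throughout I write $\kmed[1](X)$ for the optimum single-facility cost and $o'$ for the point returned by $\minbb$.

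For strategyproofness, the key observation is that an agent's cost is the Euclidean distance $\lVert x_i - o'\rVert_2$, whose square $\sum_{j\in[d]}([x_i]_j - o'_j)^2$ decomposes coordinate-by-coordinate, while the output coordinate $o'_j=\minmaxp((\ldots),o_j)$ depends only on the reported $j$-th coordinates. Hence an agent's best response separates across coordinates, and it suffices to prove strategyproofness of the one-dimensional clamp map $r\mapsto \mathrm{clamp}(o_j, m, M)$, where $m,M$ are the minimum and maximum of the reported $j$-th coordinates. Fixing the other agents' reports (which determine values $m_{-i},M_{-i}$), I would case on the position of $o_j$ relative to $[m_{-i},M_{-i}]$: when $o_j\in[m_{-i},M_{-i}]$ the output is the constant $o_j$ that no single agent can move, whereas when $o_j<m_{-i}$ (and symmetrically when $o_j>M_{-i}$) the set of outputs achievable by agent $i$ as it varies its report is exactly the interval $[o_j,m_{-i}]$, and a short computation shows that the clamp map sends the truthful report to the achievable point closest to the agent's true coordinate. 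This recovers and generalizes the strategyproofness of $\minmaxp$ from \cite{agrawal2022learning} to arbitrary $d$ and to the utilitarian objective.

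For the upper bound I would use only that $o'$ lies in the minimum bounding box $B$ of $X$. Let $\ell_j$ be the extent of $B$ in coordinate $j$; picking the two input points attaining the min and max in coordinate $j$ and applying the triangle inequality through the optimal facility $v^*$ gives $\kmed[1](X)\ge \ell_j$ for every $j$, hence $\kmed[1](X)\ge \max_j \ell_j$. Since every $x_i$ and $o'$ lie in $B$, we have $\lVert x_i-o'\rVert_2\le \mathrm{diam}(B)=\sqrt{\sum_j \ell_j^2}\le \sqrt{d}\,\max_j\ell_j$, and summing over the $n$ points yields $\kmed[1](X,\{o'\})\le n\sqrt{d}\cdot \kmed[1](X)$, i.e.\ an $O(n)$ ratio for fixed $d$. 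Note this bound holds for an arbitrary prediction $o$, precisely because the clamp never leaves $B$.

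Finally, for tightness it suffices to exhibit one instance, already on the line ($d=1$): take $n-1$ agents at $0$ and a single agent at $1$, with prediction $o=1$. The optimum places the facility at $0$ with cost $\kmed[1](X)=1$, whereas $\minbb$ clamps the prediction into $[0,1]$ and returns $o'=1$, paying $n-1$; the ratio is $n-1=\Omega(n)$. I expect the only step needing care is strategyproofness: although the mechanism and its one-dimensional projection are simple, making the coordinate-separation reduction rigorous — in particular verifying that an agent cannot gain in one coordinate at the expense of another, which holds because the squared-distance objective is additive across coordinates and each $o'_j$ depends on $r_j$ alone — is the place where the argument must be written out rather than merely quoted from \cite{agrawal2022learning}.
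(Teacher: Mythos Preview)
Your strategyproofness argument and tightness example are correct and essentially match the paper (the paper's tightness example uses the diagonal points $\pm(1,\ldots,1)$ in $\R^d$, but your $d=1$ instance already does the job, and the paper likewise defers strategyproofness to the coordinate-wise argument of \cite{agrawal2022learning}).

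The one place where your approach diverges substantively is the upper bound. Your argument that $\kmed[1](X)\ge \ell_j$ for each coordinate $j$ is correct, but it only yields $\mathrm{diam}(B)\le\sqrt{\sum_j \ell_j^2}\le \sqrt{d}\cdot\kmed[1](X)$, hence a ratio of $n\sqrt{d}$ rather than a dimension-free $O(n)$; the theorem (and its use in bounding the expected ratio of \Cref{alg:bounded_best_choice_single_facility_loc} uniformly in $d$) intends the latter. The paper sharpens this step as follows: placing the geometric median at the origin and writing $a_j$ for the side lengths, one has
\[
  \kmed[1](X)^2 \;=\; \Big(\sum_i \lVert x_i\rVert\Big)^2 \;\ge\; \sum_i \lVert x_i\rVert^2 \;=\; \sum_{j} \sum_i ([x_i]_j)^2 \;\ge\; \sum_j \tfrac{a_j^2}{4},
\]
because for every $j$ some point has $\lvert[x_i]_j\rvert\ge a_j/2$ (the origin lies in $B$). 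This gives $\lVert o'\rVert\le \sqrt{\sum_j a_j^2}\le 2\,\kmed[1](X)$ directly, and then $ALG\le \kmed[1](X)+n\lVert o'\rVert\le (2n+1)\,\kmed[1](X)$ with no $d$ in the constant. In short, your per-coordinate lower bound on $OPT$ is replaced by a single quadratic inequality that aggregates over all coordinates at once.
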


\begin{proof}
    Let us assume, w.l.o.g, that the geometric median $g$ is at $g = 0$. Let $B$ be the minimum bounding box of $X$. Formally:
    
    $B = \crl*{ y \in \R^d \mid \forall j \in [d]: y_j \in \brk*{\min_{i \in [n]} {[x_i]_j}, \max_{i \in [n]} {[x_i]_j}} }$.
    Let $h$ be the point returned by \cref{alg:min_bounding_box} for $X \subset \R^d$, $o \in \R^d$ (thus $h$ must be inside of $B$). Let $a_j$ be the side length of the box $B$ in each coordinate.
    As usual, we denote $OPT$ to be the cost of the optimal solution and $ALG$ the cost of the algorithm.
    
    \begin{align*}
        &OPT^2 = \prn*{\sum_{i \in [n]} \norm{x_i}}^2 \ge \sum_{i \in [n]} \sum_{j \in {d}} ([x_i]_j)^2 = \sum_{j \in [d]} \prn*{\sum_{i \in [n]} ([x_i]_j)^2} \stackrel{(\star)}{\ge} \sum_{j \in [d]} \frac{a_j^2}{4} \implies \\
        &OPT \ge \frac{1}{2} \sqrt{\sum_j {a_j^2}} \stackrel{(\star\star)}{\ge} \frac{1}{2} \sqrt{\sum_j {h_j^2}} = \frac{1}{2} \norm{h}. \numberthis \label{upper_bound_bounding_box_eq_h_norm_bound}
    \end{align*}

    $(\star)$ is due to the fact that for any $j \in [d]$ there is some $x_i$ s.t. $\abs{[x_i]_j} \ge \frac{a_j}{2}$ since otherwise we would get that the $j$'th side length is smaller than $a_j$.

    $(\star\star)$ explanation: It is a known property of the geometric median that it lies inside the convex hull of the points.
    Thus, $g$ lies inside $B$ (since the convex hull of the points lies inside the bounding box of the points). So by the fact that $h \in B$ we get that $\abs{h_j} = \abs{h_j - 0} = \abs{h_j - g_j} \le \abs{a_j}$.

    Finally:
    \begin{align*}
        & ALG = \sum_{i \in [n]} \norm{x_i - h} \stackrel{\textit{triangle inequality}}{\le} \sum_{i \in [n]} \norm{x_i} + n \norm{h} \stackrel{\cref{upper_bound_bounding_box_eq_h_norm_bound}}{\le} OPT + 2n \ OPT = O(n) \ OPT.
    \end{align*}

    We deduce that \cref{alg:min_bounding_box} has an approximation ratio of at most $O(n)$.

    To see that the result is tight, consider the instance where $X$ is the multi-set of $n$ points where $n-1$ points are at $a = (-1, \ldots, -1) \in \R^d$ and one point is at $b = (1, \ldots, 1) \in \R^d$. Let $o = b$. The optimal solution puts the facility at $a$ and has a cost of $OPT = d(a,b) = 2 \sqrt{d}$.
    \cref{alg:min_bounding_box} returns $b$ and therefore the cost of the algorithm is $ALG = (n-1) \cdot d(a,b) = (n-1) \cdot 2 \sqrt{d}$.
    
    Overall we get: $\frac{ALG}{OPT} = \frac{(n-1) \cdot 2 \sqrt{d}}{2 \sqrt{d}} = n-1 = \Omega(n)$.

 The strategyproofness of the mechanism is similar to the one proof given by \cite{agrawal2022learning} for the egalitarian cost function.
\end{proof}

One could wonder why use the minimum bounding box rather than the convex hull of the points. After all, the convex hull is always contained in the minimum bounding box, and always contains the geometric median. Unfortunately, such a mechanism is not strategyproof:
\begin{remark}
     The mechanism obtained by replacing the minimum bounding box in \cref{alg:min_bounding_box} with the convex hull is not a strategyproof. The example showing this is for $d=2$: Given $X = \crl*{(-0.5, 0), (0.5, 0), (0, 1)}$, a computation shows that the agent at $(0,1)$ can lower its cost by reporting $(\nf{1}{2}, 1)$.
\end{remark}

We can further generalize the mechanism for $k$ facilities by simply using \cref{alg:min_bounding_box} for each facility independently.
\section{Full Proof of \texorpdfstring{$\bcc$}{}
 \texorpdfstring{$\delta$}{}-Approximation Robustness}\label{sec:robust-bigcluster-proof}

In this section we show that \cref{alg:big_cluter_center} indeed has "good" approximation-robustness for unbalanced clusters by providing the full proof for \cref{thm:big_cluster_center_robustness}.

\begin{proof}
    
Let $G = (g_L, g_R)$ be the $2-median$ solution, where $L = \{i \mid d(x_i, g_L) \le d(x_i, g_R)\}$ and $R = [n] \setminus L$. W.l.o.g $g_L \le g_R$.

We have a slight abuse of notation throughout for the sake of not introducing more variables, referring to $L, R$ as multisets of points and sometimes as the sets of indices of the points in $X$.

We assume that the clusters are $b \delta$-unbalanced. That is, at least one of the clusters $L, R$ are of size less than $b \delta n$. W.l.o.g $|R| < b \delta n$ and so $|L| \ge (1-b\delta) n$.

We also assume that $\delta$ is small enough s.t. $b \delta < \frac{1}{4}$  and so $|L| > \frac{3n}{4}$.


Let $H = (h_L, h_R) = (b-1)\delta\BalTwoMed(X')$ (as in the first line of \Cref{alg:big_cluter_center}).
W.l.o.g $h_L \le h_R$.

Let $L', R'$ be the multi-set of the elements closest to $h_L$ and $h_R$ the remaining. So $L' = \{ i \mid d(x'_i, h_L) \le d(x'_i, h_R) \}$ and $R' = [n] \setminus L'$.

Let $A := \{i \in [n] \mid x_i = x'_i\}$ the shared points between $X, X'$ (the "correct" predictions), and $B := [n] \setminus A$ the remaining.

Let $m = \frac{g_L + g_R}{2}$ be the middle point between $g_L$ and $g_R$. Let $m' = \frac{h_L + h_R}{2}$ be the middle point between $h_L$ and $h_R$. By the definition of $m$: All points of $L$ lie to the left of $m$ and all points of $R$ lie to the right of $m$. Similarly, all points of $L'$ lie to the left of $m'$ and all points of $R'$ lie to the right of $m'$.

We begin by proving the following simple lemma:
\begin{lem}\label{lem_each_center_in_k_median_is_1_median}
    Let $S \sse \mathbb{R}^l$ (for some $l \in \mathbb{N}$) be a multi-set of $n$ points.
    For any $\beta \in [0,1]$, $S \sse V$, let $M = (m_1, \ldots, m_k) := \beta-balanced-k-median(S)$, s.t. $S_1, \ldots, S_k$ are the disjoint clusters induced by $m_1, \ldots, m_k$ of $S$.
    Then for all $i \in [k]$: $m_i$ is the $1-median$ of $S_i$
\end{lem}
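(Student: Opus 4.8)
The plan is to run a standard exchange (local-optimality) argument, using the fact that once the induced clustering is held fixed the objective splits into one independent term per cluster. First I would note that since $S_1,\dots,S_k$ is the clustering induced by $M$ (Definition~\ref{def:center-induced}), every $x\in S_i$ has $m_i$ as a nearest center, so $d(x,M)=d(x,m_i)$ and hence
\[
\kmed(S,M)=\sum_{x\in S}d(x,M)=\sum_{i=1}^{k}\sum_{x\in S_i}d(x,m_i)=\sum_{i=1}^{k}\kmed[1](S_i,m_i).
\]
The key structural point is that the $i$-th summand $\kmed[1](S_i,m_i)$ depends on $M$ only through $m_i$, so the centers may be optimized cluster by cluster.

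I would then argue by contradiction: suppose some $m_i$ is not a $1$-median of $S_i$, and let $m_i^*$ be one, so $\kmed[1](S_i,m_i^*)<\kmed[1](S_i,m_i)$. Let $M'$ be obtained from $M$ by replacing $m_i$ with $m_i^*$. Keeping the same partition $S_1,\dots,S_k$ leaves all cluster sizes unchanged, so it is still $\beta$-balanced, and because the nearest-center cost never exceeds the cost of assigning each point to its own cluster's center,
\[
\kmed(S,M')\le\sum_{j=1}^{k}\sum_{x\in S_j}d(x,m'_j)=\Big(\sum_{j\ne i}\kmed[1](S_j,m_j)\Big)+\kmed[1](S_i,m_i^*)<\kmed(S,M).
\]
If $M'$ is a legitimate $\beta$-balanced solution this strictly beats the optimum $M$, the desired contradiction, forcing every $m_i$ to be a $1$-median of $S_i$.

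The step I expect to be the main obstacle is exactly the proviso in the previous sentence: verifying that $M'$ is \emph{feasible}, i.e.\ that it still induces a $\beta$-balanced partition in the sense of Definitions~\ref{def:center-induced} and~\ref{def:balanced-kmed}. The difficulty is that moving $m_i$ to $m_i^*$ can change which center is nearest to some points, so the partition $S_1,\dots,S_k$ need not remain induced by $M'$. To close this gap I would show that $M'$ still admits \emph{some} $\beta$-balanced induced partition: passing to the nearest-center assignment for $M'$ only decreases the cost further (so the strict inequality above is preserved), and it then remains to check that balance can be maintained. In the line metric, which is the setting where this lemma is actually invoked, this is transparent because the induced clusters are contiguous intervals separated by the midpoints $(m_j+m_{j'})/2$, so the reassignment is easy to control; in a general $\mathbb{R}^l$ one argues that the $\ge\beta n$ size lower bounds survive the reassignment. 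Once feasibility is secured, the contradiction above completes the proof.
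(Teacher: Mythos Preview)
Your exchange argument is the natural first attempt, and you have correctly isolated the one nontrivial step: after replacing $m_i$ by a $1$-median $m_i^*$ of $S_i$, the new tuple $M'$ need not induce any $\beta$-balanced partition of $S$, so $M'$ may simply be infeasible for the problem of Definition~\ref{def:balanced-kmed} and no contradiction with the optimality of $M$ follows. Your proposed resolution does not close this gap. Even on the line with $k=2$, moving $m_i$ toward the median of $S_i$ shifts the midpoint $(m_1+m_2)/2$ and can push points across it; a cluster that sat exactly at the $\beta n$ lower bound may then drop below it, so ``the induced clusters are contiguous intervals'' does not by itself make feasibility transparent. In general $\mathbb{R}^l$ the sentence ``one argues that the $\ge\beta n$ size lower bounds survive the reassignment'' is an assertion, not an argument.

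The paper takes a different and much shorter route that avoids this obstacle altogether. Rather than proving that \emph{every} optimal solution has centers at the $1$-medians of its induced clusters, it fixes a specific implementation of $\beta$-balanced-$k$-median: enumerate the $\beta$-balanced partitions $S_1,\dots,S_k$ of $S$, set each $m_i$ to the $1$-median of $S_i$, and return the partition (with its centers) of least total cost. With this implementation the conclusion holds by construction---there is nothing to exchange and no feasibility to verify. The lemma then reads as a property of the algorithm actually used rather than of all optima, but that is exactly what the downstream proof needs: it only ever invokes the lemma for the centers returned by this implementation.
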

\begin{proof}

    Consider the implementation of $\beta-balanced-k-median$ which is obtained by taking the cost-minimizing balanced partition into $k$ $\beta-balanced$ clusters. That is, the implementation considers all of the partitions of $S$ into $k$ $\beta-balanced$ clusters and then computes the location that minimizes the center of each cluster $S_i$. For each such center $S_i$, the minimizing location is (by definition) $1-median(S_i)$. 
    Thus, the implementation always returns points that are the $1-median$ of their clusters.
\end{proof}

We divide the proof into 4 cases as follows (also illustrated in \cref{4_cases_figure}):
\begin{itemize}[leftmargin=+1in]
    \item [Case 1]: $h_L \le g_L$ and $h_R \ge g_R$.
    \item [Case 2]: $h_L \ge g_L$ and $h_R \le g_R$.
    \item [Case 3]: $h_L \ge g_L$ and $h_R \ge g_R$.
    \item [Case 4]: $h_L \le g_L$ and $h_R \le g_R$.
\end{itemize}

Cases 1 to 3 are similar and have short proofs. The hard case is Case (4).

\begin{figure}[h]
    \centering
    \includegraphics[scale=0.4]{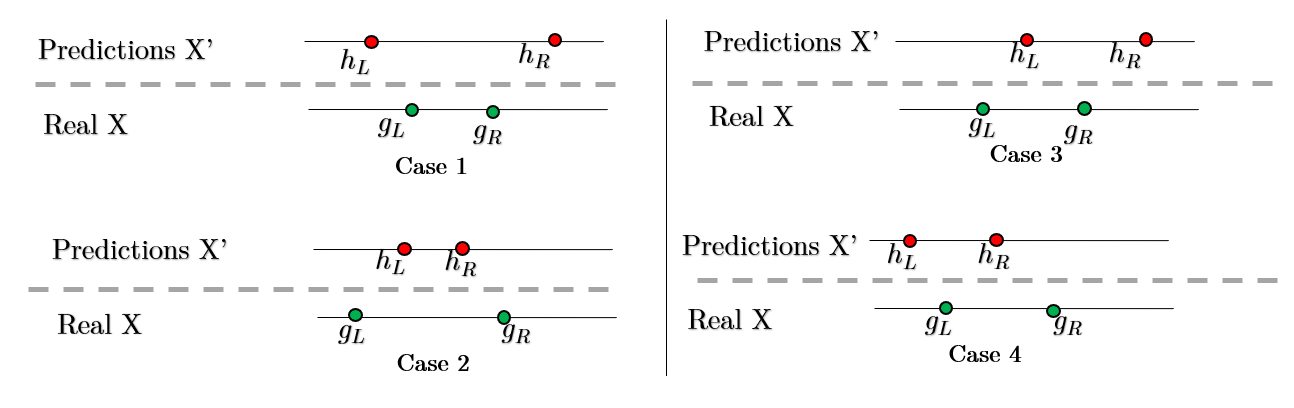}
    \caption{Illustration of the 4 cases. On the top of each case drawing we have the "estimated"/"predicted" locations $H$, computed on $X'$. On the bottom we have the "real" locations, $G$, computed on $X$.}
    \label{4_cases_figure}
\end{figure}

\begin{case}
    $h_L \le g_L$ and $h_R \ge g_R$.
    In this case we can image the two centers moving away from one another. That is, $g_L$ "moves to the left" to $h_L$, and $g_R$ "moves to the right" to $h_R$.

    If $m' \ge m$ then $L'$ contains all of the points in $L \cap A$. and $|L \cap A| \ge |L| - \delta n \ge n - (b+1)\delta n$. Thus, for a small enough $\delta$: $n - (b+1) \delta n \ge \frac{n}{2}$ and thus the algorithm returns $h_L$. From \cref{1_median_approx_robustness_results} we get an approximation-robustness of $1 + O(b\delta) = 1 + O(\delta)$ (since $L'$ is obtained from $L$ by change or add of at most $(b+1)\delta$ elements).

    Otherwise, $m' \le m$. The number of elements to the right of $g_R$ in $R$ is $\frac{|R|}{2}$ since it is the median of $R$ (by \cref{lem_each_center_in_k_median_is_1_median}).
    
    Since $h_R \ge g_R$ then the number of elements to the right of $h_R$ in $R'$ is at most $\frac{|R|}{2} + |B| \le \frac{|R|}{2} + \delta n$.

    $h_R$ is the median of $R'$ (by \cref{lem_each_center_in_k_median_is_1_median}) and thus $|R'| \le 2 (\frac{|R|}{2} + \delta n) = |R| + 2 \delta n \le (b+2)\delta n$. Thus again by \cref{1_median_approx_robustness_results} we get an approximation-robustness of $1 + O(\delta)$.
\end{case}

\begin{case}
     $h_L \ge g_L$ and $h_R \le g_R$. In this case we can image the two centers moving towards one another. That is, $g_L$ "moves to the right" to $h_L$, and $g_R$ "moves to the left" to $h_R$.

     If $m' \ge m$ then $L'$ contains all of $L \cap A$ plus at most $(b+1) \delta n$ points and thus just like the previous case we get (from the 1-median approximation robustness) a approximation-robustness of $1 + O(\delta)$.

     Otherwise $m' < m$. By \cref{lem_each_center_in_k_median_is_1_median} we know that $h_L$ is the median of $L'$ and that $g_L$ is the median of $L$. Since $h_L \ge g_L$, there are at least $\frac{|L|}{2} - \delta n$ points in $L'$ to the left of $h_L$. Thus: $|L'| \ge 2(\frac{|L|}{2} - \delta n) \ge |L| - 2 \delta n \ge (1 - (b+2)\delta) n$, and we can utilize \cref{1_median_approx_robustness_results} again to get the desired.
\end{case}

\begin{case}
    $h_L \ge g_L$ and $h_R \ge g_R$. In this case we can image both centers "moving to the right".

    In this case, $m' = \frac{h_L + h_R}{2} \ge \frac{g_L + g_R}{2} = m$.
    
    Just like in the previous case: $L'$ contains all of $L \cap A$ plus at most $(b+1) \delta n$ points and thus (from 1-median approximation robustness) a approximation-robustness of $1 + O(\delta)$.
    
\end{case}

\begin{case}
    $h_L \le g_L$ and $h_R \le g_R$. In this case we can image both centers "moving to the left".

    By the definition of $m'$: $m' = \frac{h_L + h_R}{2} \le \frac{g_L + g_R}{2} = m$.
    
    Until now we had an approximation-robustness result of $1 + O(\delta)$. This case is the most difficult case, as for this one we will get a $1.8 + O(\delta)$ approximation result. Since we also have a lower bound of $\approx 1.667 + \Omega(\delta)$ we get that this is indeed "strictly" the hard case.

    If $h_R \ge m$ then the number of elements in $X$ to its right is at most $|R|$, and thus the number of elements to its right on $X'$ is at most $|R| + \delta n$. Since $h_R$ median of $R'$ it must be the case that $|R'| \le 2\prn*{|R| + \delta n} \le 2 (b+1)\delta n = O(\delta) \ n$. Thus $|L'| = n - |R'| \ge (1 - O(\delta))n$ and therefore: (a) The algorithm returns $h_L$ (since $|L'| \ge \frac{n}{2}$) and (b) From the $1-median$ robustness (\cref{1_median_approx_robustness_results}) we get the required $1 + O(\delta)$ approximation-robustness. 

    Thus, let us assume that $h_R < m$.

    We first handle the case where $m' \ge g_L$, and then we will show a reduction from the case $m' > g_L$ to this one.

\subsection{Sub-case: \texorpdfstring{$m' \ge g_L$}{}}

    Since $m'$ is the middle point between $h_L$ and $h_R$ we get that: $m' - h_L = h_R - m'$ but $m' - h_L = m' - g_L + g_L - h_L \ge g_L - h_L$ and so: $g_L - h_L \le h_R - m'$. But $h_R - m' \le h_R - g_L$ and therefore:
    \begin{equation}\label{eq_g_L_close_to_h_L}
        g_L - h_L \le h_R - g_L.
    \end{equation}

    We introduce the following notations (also see \cref{figure_big_cluster_approx_hard_case_g_L_le_m_tag}):
    First, we denote the left and right parts of $L', R', L, R$:
    \begin{itemize}
        \item Let ${L'}_l = \{ i \in L' \mid x'_i \le h_L \}, {L'}_r = L' \setminus {L'}_l$.
        \item Let ${R'}_l = \{ i \in R' \mid x'_i \le h_R \}, {R'}_r = R' \setminus {R'}_l$.
        \item Let $L_l = \{ i \in L \mid x_i \le g_L \}, L_r = L \setminus L_l$.
        \item Let $R_l = \{ i \in R \mid x_i \le g_R \}, R_r = R \setminus R_l$.
    \end{itemize}

     Next, we denote the partition of $L$ into 4 disjoint multi-sets $S,T,U,V$ by the 3 points $h_L, m', h_R$.
     So $S = \{ i \in L \mid x_i \le h_L \}$, $T = \{ i \in L \mid i \notin S, x_i \le m' \}$, $U = \{ i \in L \mid i \notin S \cup T, x_i \le h_R \}$, $V = \{ i \in L \mid i \notin S \cup T \cup U \}$.

     Let $m'' = \frac{g_L + h_R}{2}$ be the middle point between $g_L$ and $h_R$. W.l.o.g we assume that $m'' > m'$ (otherwise the proof is similar). We define $U_l, U_r$ to be the as follows:
     $U_l = \{ i \in U \mid x_i \le m'' \}$, $U_r = U \setminus U_l$.

     Finally, we introduce the notion of $\alpha$-approximately-equal: $\approx^\alpha$:
     \begin{definition}
         For any $a, b, \alpha \in \mathbb{R}$: 
         \[
            a \approx^\alpha b \iff b - \alpha \le a \le b + \alpha
        \]
     \end{definition}

    \begin{figure}[h]
    \centering
    \includegraphics[scale=0.42]{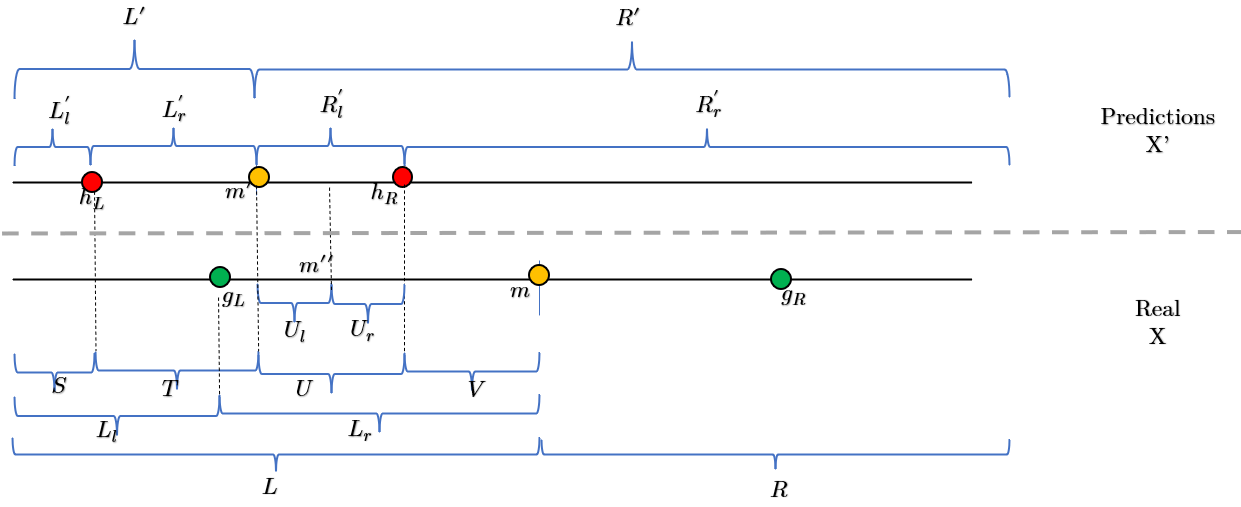}
    \caption{Illustration of case (4) where $m' \ge g_L$. On the top we have $h_L, h_R$, computed on the "predicted" locations $X'$. On the bottom we have the $g_L, g_R$, the $\TwoMed$ of the "real" locations $X$.
    $L'$ and $R'$ are the disjoint partitions of $X'$ into two multi-sets of points: those closer to $h_L$ and those closer to $h_R$ (respectively). Similarly $L$ and $R$ are the disjoint multi-sets of $g_L, g_R$.
    ${L'}_l, {L'}_r$ are the disjoint partitions of $L'$ into two multi-sets: all points to the left of $h_L$ and all points to the right of $h_R$. In a similar manner $R', L, R$ are partitioned into their left and right parts (${R'}_l, {R'}_r, L_l, L_r, R_l, R_r$).
    We can also see $S, T, U ,V$ which is the disjoint partition of $L$ determined by the points $h_L, m', h_R$.
    Finally, we have $U_l$ and $U_r$ which are the left and right parts of $U$.
    }
    \label{figure_big_cluster_approx_hard_case_g_L_le_m_tag}
    \end{figure}
    \newpage

    For every $M \subseteq X$ we define $OPT_M$ to be the cost that the optimal solution ($G$) pays for the multi-set of points $M$. That is: $OPT_M := \kmed[2](M, G)$ (we sometimes use this notation where $M \subseteq [n]$ in which case $OPT_M$ is a slight abuse of notation for $OPT_{\{ x_i | i \in M \}}$). 

    Let $\beta, \gamma \in [0,1]$ be the ratio between $|S|,|V|$ and $|L|$. That is:
    $\beta := \nf{|S|}{|L|}$ and $\gamma := \nf{|V|}{|L|}$.

    \begin{clm}\label{bound_cost_of_h_L_compared_to_cost_of_g_L}
        \[
            \med(L, h_L) = OPT_S - OPT_{T \cap L_l} + OPT_{L_r} + (g_L - h_L)(|L|(1 - 2 \beta)).
        \]
    \end{clm}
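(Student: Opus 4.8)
The plan is to prove the identity by a direct computation: split the cost $\med(L,h_L)=\sum_{x_i\in L}|x_i-h_L|$ according to whether a point of $L$ lies to the left or the right of $h_L$, and then rewrite every distance relative to $g_L$ rather than $h_L$. Since we are in the sub-case $h_L\le g_L\le m'$ and every point of $L$ lies to the left of $m$, its nearest optimal center is $g_L$; hence for any $M\subseteq L$ we have $OPT_M=\sum_{x_i\in M}|x_i-g_L|$, so the three quantities on the right-hand side are all sums of $|x_i-g_L|$ over explicit subsets of $L$. This dictionary is what lets me convert the $h_L$-cost into $OPT$-terms plus a linear correction in $(g_L-h_L)$.

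Concretely, first I would write $\med(L,h_L)=\sum_{x_i\in S}(h_L-x_i)+\sum_{x_i\in L\setminus S}(x_i-h_L)$, using that $S$ is exactly the set of points of $L$ that are at most $h_L$. Then I substitute $h_L-x_i=(g_L-x_i)-(g_L-h_L)$ on $S$ and $x_i-h_L=(x_i-g_L)+(g_L-h_L)$ on $L\setminus S$; this separates a ``shape'' part (distances to $g_L$) from a ``shift'' part proportional to $(g_L-h_L)$. The shift part contributes $(g_L-h_L)(|L\setminus S|-|S|)=(g_L-h_L)\,|L|(1-2\beta)$, using $|S|=\beta|L|$ and $|L\setminus S|=|L|-\beta|L|$, which already produces the last term of the claim.

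For the shape part, the key step is establishing the set identities $S\subseteq L_l$ and $L_l\setminus S=T\cap L_l$ (the latter because $L_l\setminus S=\{x_i\in L\mid h_L<x_i\le g_L\}$, and $g_L\le m'$ forces this to coincide with $T\cap L_l$), together with $L_r\subseteq L\setminus S$. Splitting $L\setminus S=(L_l\setminus S)\cup L_r$ and tracking signs gives $\sum_{x_i\in S}(g_L-x_i)=OPT_S$, then $\sum_{x_i\in T\cap L_l}(x_i-g_L)=-OPT_{T\cap L_l}$ (points here satisfy $x_i\le g_L$), and $\sum_{x_i\in L_r}(x_i-g_L)=OPT_{L_r}$. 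Summing yields exactly $OPT_S-OPT_{T\cap L_l}+OPT_{L_r}$ for the shape part, and combining with the shift part finishes the identity. The main delicate point — the ``obstacle'' such as it is — is getting the signs and the three index-set identities right; everything else is routine rearrangement. I would verify $L_l\setminus S=T\cap L_l$ explicitly, since it is the one place where the sub-case hypothesis $m'\ge g_L$ is actually used.
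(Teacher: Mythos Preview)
Your proposal is correct and follows essentially the same approach as the paper's proof: both split the sum $\sum_{i\in L}|x_i-h_L|$ at $h_L$ into the $S$ part and the $L\setminus S=T\cup U\cup V$ part, rewrite each distance relative to $g_L$, and then use $L_r=(T\cap L_r)\cup U\cup V$ and $|S|=\beta|L|$ to collect terms. Your explicit verification that $L_l\setminus S=T\cap L_l$ (relying on $m'\ge g_L$) is the same set identity the paper uses implicitly in its $(\star\star)$ step.
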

    \begin{proof}
        \begin{align*}
            \med(L, h_L) &= \sum_{i \in L} |x_i - h_L| = \sum_{i \in S} h_L - x_i + \sum_{i \in T \cup U \cup V} x_i - h_L \\
                      &\stackrel{(\star)}{=} \sum_{i \in S} g_L - x_i - |S|(g_L - h_L) - \sum_{i \in T \cap L_l} g_L - x_i + |T \cap L_l|(g_L - h_L) \\
                      & + \sum_{i \in (T \cap L_r) \cup U \cup V} x_i - g_L + (|T \cap L_r| + |U| + |V|)(g_L - h_L) \\
                      &\stackrel{(\star \star)}{=} OPT_S - OPT_{T \cap L_l} + OPT_{L_r} + (g_L - h_L) \prn*{|L| - 2 |S|} \\
                      & = OPT_S - OPT_{T \cap L_l} + OPT_{L_r} + (g_L - h_L) \prn*{|L|(1-2\beta)}.
        \end{align*}

    $(\star)$ - triangle inequality

    $(\star \star)$ - due to the fact that $L_r = T \cap L_r \cup U \cup V$ and $|L| = |S| + |U| + |V| + |T|$.
    \end{proof}

    If the algorithm returns $h_L$ then we want to show that $\kmed[1](L, h_L) \le (1.8 + O(\delta)) \kmed[1](L, g_L)$.

    From \cref{bound_cost_of_h_L_compared_to_cost_of_g_L} we get the following equivalent condition:
    \begin{align*}
        & \med(L, h_L) \le (1.8 + O(\delta)) OPT_L \iff \\
        & OPT_S - OPT_{T \cap L_l} + OPT_{L_r} + (g_L - h_L) \prn*{|L|(1-2\beta)} \le (1.8 + O(\delta)) OPT_L \iff \\
        & (g_L - h_L) \prn*{|L|(1-2\beta)} \le 0.8 OPT_S + 2.8 OPT_{T \cap L_l} + 0.8 OPT_{L_r} + O(\delta) OPT_L \iff \\
        & (g_L - h_L) \le \frac{0.8 OPT_S + 2.8 OPT_{T \cap L_l} + 0.8 OPT_{L_r} + O(\delta) OPT_L}{|L|(1 - 2 \beta)}. \numberthis \label{ineq_we_need_to_show_to_get_delta_approx_results_foR'_L}
    \end{align*}
    
     Thus, all we have to do is find the above bound for $g_L - h_L$. Indeed we will show that this is the case. We start by proving a few claims that will help us bound $g_L - h_L$.

    \begin{clm} \label{bound_on_beta_plus_gamma_L}
        $(\beta+\gamma)|L| \approx^{O(\delta n)} |L| - \frac{n}{2}$
    \end{clm}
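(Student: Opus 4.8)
The plan is to reduce the claim to a single counting statement and then evaluate the relevant counts using the median characterization of the balanced centers. Since $\beta = \nf{|S|}{|L|}$ and $\gamma = \nf{|V|}{|L|}$, and $S,T,U,V$ partition $L$, we have $(\beta+\gamma)|L| = |S|+|V| = |L| - (|T|+|U|)$. Hence the claim $(\beta+\gamma)|L| \approx^{O(\delta n)} |L| - \frac{n}{2}$ is equivalent to showing $|T|+|U| \approx^{O(\delta n)} \frac{n}{2}$; that is, the number of points of $L$ lying in the interval $(h_L,h_R]$ is about $\frac n2$. This is where I would concentrate the work.

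First I would pin down the order of the relevant points. We are in Case~(4), so $h_L \le g_L$, $h_R \le g_R$ and $m' \le m$; we have assumed $m' \ge g_L$; and we are past the easy sub-case so $h_R < m$. Together with $m' = \frac{h_L+h_R}{2} \le h_R$ this gives the chain $h_L \le g_L \le m' \le h_R < m \le g_R$. With this order fixed, I would establish, for every index $i$ in the agreement set $A = \{i \mid x_i = x'_i\}$, the two equivalences $i \in T \iff i \in {L'}_r$ and $i \in U \iff i \in {R'}_l$. Both follow by reading off interval membership at the common position $p = x_i = x'_i$: if $p \in (h_L, m']$ then $p \le m' \le m$ places it in $L$ and above $h_L$ puts it in $T$, while in $X'$ it is left of the boundary $m'$ hence in $L'$ and above $h_L$ hence in ${L'}_r$; if $p \in (m', h_R]$ then $p \le h_R < m$ keeps it in $L$ (indeed in $U$), while in $X'$ it is right of $m'$ hence in $R'$ and at most $h_R$ hence in ${R'}_l$. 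Since $|[n]\setminus A| = |B| \le \delta n$, these correspondences yield $|T| \approx^{\delta n} |{L'}_r|$ and $|U| \approx^{\delta n} |{R'}_l|$.

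Next I would invoke \cref{lem_each_center_in_k_median_is_1_median}: $h_L$ is the $\OneMed$ of $L'$ and $h_R$ is the $\OneMed$ of $R'$. On the line the $\OneMed$ is a median, so the number of points strictly above $h_L$ is at most $\frac{|L'|}{2}$ while the number at most $h_L$ is at least $\frac{|L'|}{2}$; thus $|{L'}_r| = \frac{|L'|}{2}$ up to the multiplicity of points of $L'$ equal to $h_L$, and symmetrically $|{R'}_l| = \frac{|R'|}{2}$ up to the multiplicity of points of $R'$ equal to $h_R$. Summing, $|T|+|U| \approx |{L'}_r|+|{R'}_l| \approx \frac12(|L'|+|R'|) = \frac n2$, with all errors accumulating to $O(\delta n)$, which is exactly the reduced statement; substituting back gives $(\beta+\gamma)|L| = |S|+|V| = |L| - (|T|+|U|) \approx^{O(\delta n)} |L| - \frac n2$.

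The main obstacle is the median step. Splitting a cluster at its $\OneMed$ produces exactly equal halves only when no points coincide with the median, and in general the two sides differ by the number of points sitting exactly at $h_L$ (respectively $h_R$). I would therefore need to argue that this coincident mass is itself $O(\delta n)$ — either through a genericity or tie-breaking convention on the reported locations, or by folding a block of points located exactly at a center into the clustering bookkeeping — so that it is absorbed by the $O(\delta n)$ slack. The correspondence step, by contrast, is routine once the ordering $h_L \le g_L \le m' \le h_R < m$ is fixed, but it relies essentially on the two hypotheses singling out this sub-case: $m' \ge g_L$ (which places $m'$ to the right of $g_L$, keeping $T$ inside $L$) and $h_R < m$ (which keeps all of $U$ inside $L$ rather than leaking into $R$).
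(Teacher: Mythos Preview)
Your proposal is correct and follows essentially the same route as the paper: the paper also uses the median property of $h_L$ on $L'$ and of $h_R$ on $R'$ (via \cref{lem_each_center_in_k_median_is_1_median}) together with the fact that $S,T,U,V$ differ from ${L'}_l,{L'}_r,{R'}_l,{R'}_r$ by at most $O(\delta n)$ points, and then sums the resulting approximate identities. Your organization is slightly more direct—reducing immediately to $|T|+|U|\approx^{O(\delta n)} \nf n2$ and reading this off from $|{L'}_r|+|{R'}_l|=\nf{|L'|}{2}+\nf{|R'|}{2}=\nf n2$—whereas the paper derives $|T|\approx^{O(\delta n)}\beta|L|$ and $|U|\approx^{O(\delta n)}(\gamma-1)|L|+n$ separately and then substitutes into $|S|+|T|+|U|+|V|=|L|$; but these are the same computation in different bookkeeping. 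Regarding the tie issue you flag (points sitting exactly at $h_L$ or $h_R$), the paper does not address it either: it simply asserts $|{L'}_l|=|{L'}_r|$ and proceeds, so this is not a gap specific to your argument but a shared implicit genericity assumption absorbed into the $O(\delta n)$ notation.
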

    \begin{proof}
        Since $h_L$ is the median of $L'$, and the points in $S\cup T$ differ from those in $L'$ by at most $O(\delta n)$, we get that $|T|
          \approx^{O(\delta n)} \beta |L|$.
        Similarly, since $h_R$ is the median of $R'$ and since $|R| = n - |L|$: $|U| \approx^{O(\delta n)} (\gamma - 1)|L| + n$.
        By using the fact that $|L| = |S| + |T| + |U| + |V|$ we get: 
        \begin{align*}
            & |L| \approx^{O(\delta n)} \beta |L| + \beta |L| + (\gamma - 1)|L| + n + \gamma |L| \implies \\
            & 2 (\beta + \gamma) |L| \approx^{O(\delta n)} 2|L| - n \implies \\
            & (\beta+\gamma)|L| \approx^{O(\delta n)} |L| - \frac{n}{2}.
        \end{align*}
    \end{proof}

    \begin{clm}\label{bound_opt_s}
        \begin{equation}
            (g_L - h_L)|S| \le OPT_S.
        \end{equation}
    \end{clm}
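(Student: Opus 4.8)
The plan is to prove this pointwise: I would expand $OPT_S$ as an explicit sum of distances from the points of $S$ to the optimal centers $G=\{g_L,g_R\}$, and then lower-bound each individual summand by $g_L-h_L$. The claim is essentially the observation that every point counted in $S$ already sits at distance at least $g_L-h_L$ from its serving center $g_L$, so summing over the $|S|$ such points yields the bound.

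First I would record that $OPT_S = \kmed[2](S,G) = \sum_{i\in S} d(x_i,G)$ by definition. Since $S\subseteq L$ and $L$ consists exactly of the indices for which $g_L$ is the (weakly) nearer of the two optimal centers, each point $x_i$ with $i\in S$ is served by $g_L$, so $d(x_i,G)=|x_i-g_L|$. Next I would invoke the definition $S=\{i\in L\mid x_i\le h_L\}$ together with the Case 4 hypothesis $h_L\le g_L$: for every $i\in S$ we then have $x_i\le h_L\le g_L$, whence $|x_i-g_L| = g_L-x_i \ge g_L-h_L$. Summing this inequality over the $|S|$ indices of $S$ gives
\[
OPT_S \;=\; \sum_{i\in S}(g_L-x_i)\;\ge\;\sum_{i\in S}(g_L-h_L)\;=\;|S|\,(g_L-h_L),
\]
which is exactly the claimed inequality.

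There is no real obstacle here; the only points that require care are (i) justifying that points of $S$ are served by $g_L$ rather than $g_R$ in the optimal clustering, which follows immediately from $S\subseteq L$ and the defining property of $L$, and (ii) tracking signs, i.e.\ that the ordering $x_i\le h_L\le g_L$ lets me drop the absolute value and write the distance as $g_L-x_i$. Both are direct consequences of the Case 4 ordering $h_L\le g_L$ and the definition of $S$, so the whole claim reduces to a one-line pointwise estimate followed by a summation.
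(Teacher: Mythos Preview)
Your proposal is correct and follows essentially the same approach as the paper: expand $OPT_S=\sum_{i\in S}(g_L-x_i)$, use the defining inequality $x_i\le h_L$ for $i\in S$ to bound each summand below by $g_L-h_L$, and sum. The paper's proof is the same one-line computation, just stated more tersely.
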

    \begin{proof}
        \[ OPT_S = \sum_{i \in S} g_L - x_i \stackrel{\forall x_i \in S\textit{: } x_i \le h_L}{\ge} \sum_{i \in S} g_L - h_L = |S|(g_L - h_L).
        \]
    \end{proof}

    \begin{clm}\label{bound_opt_U_r}
        \begin{equation} 
            (h_R - g_L)|U_r| \le 2 OPT_{U_r}.
        \end{equation}
    \end{clm}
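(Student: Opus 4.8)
The plan is to exploit the geometric meaning of $U_r$ directly. Recall that $U_r$ consists of exactly those points of $L$ lying strictly to the right of $m'' = \frac{g_L + h_R}{2}$, the midpoint between $g_L$ and $h_R$. Since $U_r \subseteq U \subseteq L$, in the optimal solution $G = \{g_L, g_R\}$ every point of $U_r$ is served by $g_L$; moreover each such point satisfies $x_i > m'' > g_L$, so its optimal cost is exactly $d(x_i, g_L) = x_i - g_L$. Hence $OPT_{U_r} = \sum_{i \in U_r}(x_i - g_L)$.

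First I would record this identity for $OPT_{U_r}$, taking care to justify that points of $U_r$ are genuinely assigned to $g_L$ and not to $g_R$ --- this is the only place where one must invoke the definition of the cluster $L$ (namely $L = \{i \mid d(x_i,g_L) \le d(x_i,g_R)\}$) rather than the position of $x_i$ relative to the various midpoints. Next I would use the defining inequality of $U_r$: for each $i \in U_r$ we have $x_i > m'' = \frac{g_L+h_R}{2}$, which rearranges to $x_i - g_L > \frac{h_R - g_L}{2}$. Summing this pointwise bound over all $i \in U_r$ gives $\sum_{i\in U_r}(x_i - g_L) \ge \frac{|U_r|(h_R - g_L)}{2}$, and multiplying by $2$ yields $(h_R - g_L)|U_r| \le 2\,OPT_{U_r}$, as claimed.

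There is no substantial obstacle here; the argument is a one-line midpoint estimate once the setup is in place. The only care needed is bookkeeping: confirming the ordering $h_L \le g_L < m' < h_R$ (established earlier in this sub-case) so that $m'' = \frac{g_L+h_R}{2} > g_L$ and the distances $x_i - g_L$ are positive, and confirming that the cost $OPT_{U_r}$ is measured against $g_L$. Both follow immediately from the sub-case hypotheses and the definition of $L$, so the claim reduces to the elementary observation that a point lying beyond the midpoint of the interval $[g_L, h_R]$ is at least half the interval's length away from $g_L$.
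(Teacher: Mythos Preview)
Your proof is correct and is essentially identical to the paper's own argument: both compute $OPT_{U_r} = \sum_{i\in U_r}(x_i - g_L)$, invoke $x_i \ge m'' = \tfrac{g_L+h_R}{2}$ for every $i\in U_r$, and sum to obtain $OPT_{U_r} \ge \tfrac{|U_r|}{2}(h_R-g_L)$. The only difference is that you spell out more of the bookkeeping (why points of $U_r$ are served by $g_L$, why $m'' > g_L$), which the paper leaves implicit.
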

    \begin{proof}
        \[ OPT_{U_r}= \sum_{i \in U_r} x_i - g_L \stackrel{\forall x_i \in U_r\textit{: } x_i \ge m''}{\ge} \sum_{i \in U_r} m'' - g_L = |U_r|(m'' - g_L) = \frac{|U_r|}{2} (h_R - g_L).
        \]

        where the last equality is due to the fact that $m''$ is the middle point between $g_L$ and $h_R$.

        Thus we get:

        $(h_R - g_L)|U_r| \le 2 OPT_{U_r}$
    \end{proof}

    \begin{clm}\label{bound_opt_v}
        \begin{equation}
            (h_R - g_L)|V| \le OPT_V.
        \end{equation}
    \end{clm}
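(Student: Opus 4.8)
The plan is to follow exactly the template already used for Claims~\ref{bound_opt_s} and~\ref{bound_opt_U_r}: I would expand $OPT_V$ as a sum of the individual distances of the points indexed by $V$ to the optimal solution $G = \{g_L, g_R\}$, and then bound each summand from below using the defining inequality of $V$. The first step is to pin down which of the two optimal centers each point of $V$ is charged to. Since $V \subseteq L$, every such point lies in the left cluster, so its nearest center in $G$ is $g_L$; moreover, by definition $V = \{i \in L \mid x_i > h_R\}$, and the case ordering $h_L \le g_L < m' < h_R \le m \le g_R$ gives $x_i > h_R > g_L$. Hence $d(x_i, G) = |x_i - g_L| = x_i - g_L$ for each $i \in V$, so that $OPT_V = \sum_{i \in V}(x_i - g_L)$.

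The second step is the one-line estimate. Because each $i \in V$ satisfies $x_i > h_R$ by definition, every summand obeys $x_i - g_L \ge h_R - g_L$, and summing over the $|V|$ indices yields
\[
OPT_V = \sum_{i \in V}(x_i - g_L) \ge \sum_{i \in V}(h_R - g_L) = |V|(h_R - g_L),
\]
which is precisely the claimed inequality $(h_R - g_L)|V| \le OPT_V$.

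The only point needing care --- the ``obstacle,'' such as it is --- is the justification in the first step that the optimal cost of a point in $V$ equals $x_i - g_L$ rather than $g_R - x_i$. This is exactly where membership $V \subseteq L$ (equivalently $x_i \le m$, so the point is charged to $g_L$) together with $g_L \le x_i$ (so the signed distance is $x_i - g_L$) are used. Both facts are already available from the established ordering of $h_L, g_L, m', h_R, m, g_R$, so no new estimate is required; the argument is a direct computation mirroring the two preceding claims, and I do not anticipate any real difficulty beyond keeping the case hypotheses straight.
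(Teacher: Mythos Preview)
Your proposal is correct and follows essentially the same approach as the paper: both expand $OPT_V = \sum_{i \in V}(x_i - g_L)$ and use $x_i \ge h_R$ for every $i \in V$ to bound each summand below by $h_R - g_L$. You are simply more explicit than the paper in justifying why the relevant center is $g_L$ and why the signed distance is nonnegative.
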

    \begin{proof}
        \[ OPT_V = \sum_{i \in V} x_i - g_L \stackrel{\forall x_i \in V\textit{: } x_i \ge h_R}{\ge} \sum_{i \in S} h_R - g_L = |V|(h_R - g_L).
        \]
    \end{proof}

Consider the clustering induced by $H' = (g_L, h_R)$.
If this clustering is unbalanced then \cref{alg:big_cluter_center} returns $h_L$ since $L_l \cap A$ is contained in the left cluster and since $|L_l \cap A| \ge \frac{L}{2} - \delta n \ge \frac{n}{2} - O (\delta n)$ it must be that the right cluster is the smaller one. And so the left cluster has at least $n - (b-1)\delta n = n - O(\delta n)$ elements and thus it is obtained from $L$ by modifying or dropping at mots $O(\delta n)$ elements and thus the algorithm returns $h_L$ and like before we get a $(1 + O(\delta), \delta)$ approximation-robustness.

Otherwise, the clustering induced by $H'$ is balanced, and we get the following claim:
 
\begin{clm} \label{bound_opt_u_l}
    $(h_R - g_L)(|U_l| - O(\delta n)) \le 2 (OPT_{T \cap L_l} + OPT_{U_l})$.
\end{clm}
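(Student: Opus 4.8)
The plan is to prove this last claim by an exchange argument that exploits the optimality of $H=(h_L,h_R)$ as the $(b-1)\delta$-balanced $2$-median of $X'$. The competitor is $H'=(g_L,h_R)$, which keeps the right center $h_R$ of $H$ but replaces the left center $h_L$ by $g_L$. In the current (balanced) case the clustering induced by $H'$ is balanced, so $H'$ is a feasible $(b-1)\delta$-balanced solution for $X'$ (absorbing the $\le \delta n$ discrepancy between $X$ and $X'$ into the constants), and optimality of $H$ gives $\kmed[2](X',H)\le \kmed[2](X',H')$. This is exactly the place where the ``balanced'' case hypothesis is used.

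First I would carry out the reassignment computation on $X'$. The two solutions share $h_R$ and differ only in the left center, so their cluster boundaries are $m'=(h_L+h_R)/2$ for $H$ and $m''=(g_L+h_R)/2$ for $H'$, with $g_L\le m'<m''<h_R$. Comparing costs point by point: every $x'_i\le m'$ is served by the left center in both solutions and contributes $E:=\sum_{x'_i\le m'}(|x'_i-g_L|-|x'_i-h_L|)$ in favor of $h_L$; every $x'_i\in(m',m'']$ switches from the left cluster (under $H'$) to the right cluster (under $H$) and contributes $Q:=\sum_{m'<x'_i\le m''}2(m''-x'_i)\ge 0$; every $x'_i>m''$ contributes nothing. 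Hence $\kmed[2](X',H)-\kmed[2](X',H')=Q-E$, and optimality yields $Q\le E$.

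The remaining work is to read off the claim from $Q\le E$. For the upper bound on $E$, I would discard the nonpositive contributions of the points with $x'_i\le h_L$ and with $g_L<x'_i\le m'$, and bound each middle term $g_L+h_L-2x'_i$ by $g_L-x'_i$, leaving $E\le\sum_{h_L<x'_i\le g_L}(g_L-x'_i)$. Transferring this from $X'$ to $X$ costs at most $\delta n\cdot(g_L-h_L)\le \delta n\,(h_R-g_L)$ (the two point sets differ on at most $\delta n$ indices and each affected summand is $<g_L-h_L$), and the sum over $\{x_i:h_L<x_i\le g_L\}=T\cap L_l$ is exactly $OPT_{T\cap L_l}$; thus $E\le OPT_{T\cap L_l}+O(\delta n)(h_R-g_L)$. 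For the lower bound on $Q$, the same transfer gives $2\sum_{i\in U_l}(m''-x_i)\le Q+O(\delta n)(h_R-g_L)$, using that every $X$-point in $(m',m'']$ lies in $L$ and hence constitutes exactly $U_l$. Finally I would invoke the pointwise identity $(h_R-g_L)-2(x_i-g_L)=2(m''-x_i)$ for $x_i\in U_l$, which sums to $(h_R-g_L)|U_l|-2\,OPT_{U_l}=2\sum_{i\in U_l}(m''-x_i)$; chaining this with $2\sum_{i\in U_l}(m''-x_i)\le Q+O(\delta n)(h_R-g_L)\le E+O(\delta n)(h_R-g_L)\le OPT_{T\cap L_l}+O(\delta n)(h_R-g_L)$ and rearranging yields $(h_R-g_L)(|U_l|-O(\delta n))\le 2(OPT_{T\cap L_l}+OPT_{U_l})$, as claimed. (Note there is even slack, since only a single copy of $OPT_{T\cap L_l}$ is needed.)

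The main obstacle I anticipate is the bookkeeping of the $O(\delta n)$ slack: every comparison is naturally made on $X'$, where optimality lives, whereas the claim is stated in terms of the optimal costs $OPT_{\bullet}$ on $X$, so each transfer between the two point sets must be controlled. The key uniformizing fact is that all affected summands are at most $g_L-h_L$, which by \cref{eq_g_L_close_to_h_L} is at most $h_R-g_L$; consequently every error term has the form $O(\delta n)(h_R-g_L)$ and fits the stated bound. A secondary point to verify is the feasibility of $H'$ for the balanced problem on $X'$ and the ordering $g_L\le m'<m''<h_R$, together with the observation that all points of $X$ in $(m',m'']$ belong to $L$ (since $m''<m$), which is what identifies the transferred sum with $OPT_{U_l}$.
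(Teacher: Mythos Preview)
Your overall strategy---compare $H=(h_L,h_R)$ with the competitor $H'=(g_L,h_R)$ and read off the inequality from $\kmed[2](X',H)\le\kmed[2](X',H')$---is exactly the paper's, and your decomposition $\kmed[2](X',H)-\kmed[2](X',H')=Q-E$ is correct. The error is in your upper bound on $E$. For $x'_i\le h_L$ the contribution to $E$ is
\[
|x'_i-g_L|-|x'_i-h_L|=(g_L-x'_i)-(h_L-x'_i)=g_L-h_L>0,
\]
so these terms are \emph{positive}, not nonpositive, and cannot be discarded. There are $|\hat S|=|L'_l|$ such points, which in this branch can be as large as $n/4$, so the omitted mass $|\hat S|(g_L-h_L)$ is not $O(\delta n)(h_R-g_L)$. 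This is why you ended up with the suspicious ``slack'' of only one copy of $OPT_{T\cap L_l}$.

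The missing ingredient is the median property of $h_L$: since $h_L=\operatorname{median}(L')$ we have $|\hat S|=|L'_l|=|L'_r|=|\hat T|$. Using this cancellation,
\[
E=|\hat S|(g_L-h_L)+\sum_{x'_i\in\hat T}\bigl(|x'_i-g_L|-(x'_i-h_L)\bigr)
=|\hat T\cap\hat L_l|(g_L-h_L)+\sum_{x'_i\in\hat T\cap\hat L_l}(g_L+h_L-2x'_i)
=2\sum_{x'_i\in\hat T\cap\hat L_l}(g_L-x'_i),
\]
where the first equality uses $|\hat S|-|\hat T\cap\hat L_r|=|\hat T\cap\hat L_l|$ and that points in $\hat T\cap\hat L_r$ each contribute $-(g_L-h_L)$. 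Transferring to $X$ then gives $E\le 2\,OPT_{T\cap L_l}+O(\delta n)(h_R-g_L)$, with a factor $2$ that exactly matches the claim; combined with your (correct) treatment of $Q$ this yields $(h_R-g_L)(|U_l|-O(\delta n))\le 2(OPT_{T\cap L_l}+OPT_{U_l})$. This is precisely how the paper proceeds (it invokes $|L'_l|=|L'_r|$ at the corresponding step).
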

\begin{proof}
    
    For convenience, for any multi-set of the real points $M \subseteq X$ we denote $\hat{M} \subseteq X'$ to be the multi-set of the estimated points $X'$ that correspond to the same partition of the real line as $M$.
    So we define the following multi-sets: $\hat{L_l} = \{ x'_i \in X' \mid x'_i \le g_L \}$, $\hat{L_r} = \{  x'_i \in X' \setminus \hat{L_l} \mid x'_i \le m \}$, $\hat{S} = {L'}_l$, $\hat{T} = {L'}_r$, $\hat{U} = {R'}_l$, $\hat{V} = \{ i \in {R'}_r \mid x'_i \in [h_R, m] \}$,
    $\hat{U_l} = \{ x'_i \in {R'}_l \mid x'_i \le m'' \}$, $\hat{U_r} = \{ x'_i \in {R'}_l \mid x'_i > m'' \}$.
    The reason we use this notation is that we have an inequality in terms of the points in $X'$ and we want to later move on to the inequality in terms of the points in $X$. This notation will help us see the connection between the points in $X'$ and the points in $X$.
    Each such $M, \hat{M}$ contain the same points up to at most $\delta n$ points.

    By definition of $H$ we know that $\kmed[2](X', H) \le \kmed[2](X', T)$ for any $T = (t_1, t_2) \in \mathbb{R}^2$ that induces a  $(b-1)\delta$-balanced clustering of $X'$.
    Therefore, $\kmed[2](X', H) \le \kmed[2](X', (g_L, h_R))$.

    By the definition of $\kmed[2]$:
    \begin{align*}
        & \sum_{x'_i \in {L'}_l} {h_L - x'_i} + \sum_{x'_i \in {L'}_r} {x'_i - h_L} + \sum_{x'_i \in {R'}_l} {h_R - x'_i} + \sum_{x'_i \in R'_r} {x'_i - h_R} \le \\
        & \sum_{x'_i \in \hat{S} \cup (\hat{T} \cap \hat{L_l})} {g_L - x'_i} + \sum_{x'_i \in (\hat{T} \cap \hat{L_r}) \cup \hat{U_l} } {x'_i - g_L} + \sum_{x'_i \in \hat{U_r}} {h_R - x'_i} + \sum_{x'_i \in R'_r} {x'_i - h_R}.
    \end{align*}
    Which implies:
    \begin{align*}
        & \sum_{x'_i \in L_{h_l}} {h_L - x'_i} + \sum_{x'_i \in {L'}_r} {x'_i - h_L} + \sum_{x'_i \in {R'}_l} {h_R - x'_i}
            \\
            & \le
                \sum_{x'_i \in \hat{S} \cup (\hat{T} \cap \hat{L_l})} {g_L - x'_i} + \sum_{x'_i \in (\hat{T} \cap \hat{L_r}) \cup \hat{U_l} } {x'_i - g_L} + \sum_{x'_i \in \hat{U_r}} {h_R - x'_i} \numberthis \label{intermed_res_1}.
    \end{align*}                

    Let us observe LHS:
    \begin{align*}
        &
            \sum_{x'_i \in L_{h_l}} {h_L - x'_i} + \sum_{x'_i \in {L'}_r} {x'_i - h_L} + \sum_{x'_i \in {R'}_l} {h_R - x'_i} \\
        & = |{L'}_l|(h_L - g_L) + \sum_{x'_i \in L_{h_l}} {g_L - x'_i} + \sum_{x'_i \in {L'}_r} {x'_i - g_L} + |{L'}_r| (g_L - h_L) + |{R'}_l|(h_R - g_L) + \sum_{x'_i \in {R'}_l} {g_L - x'_i} \\
        & \stackrel{|{L'}_l| = |{L'}_r| }{=} \sum_{x'_i \in L_{h_l}} {g_L - x'_i} + \sum_{x'_i \in {L'}_r} {x'_i - g_L} + |{R'}_l|(h_R - g_L) + \sum_{x'_i \in {R'}_l} {g_L - x'_i} \\
        & = \sum_{x'_i \in \hat{S}} {g_L - x'_i} + \sum_{x'_i \in \hat{T}} {x'_i - g_L} + |\hat{U}|(h_R - g_L) + \sum_{x'_i \in \hat{U}} {g_L - x'_i} \\
        & = \sum_{x'_i \in \hat{S}} {g_L - x'_i} + \sum_{x'_i \in \hat{T} \cap \hat{L_l}} {x'_i - g_L} + \sum_{x'_i \in \hat{T} \cap \hat{L_r}} {x'_i - g_L} + |\hat{U}|(h_R - g_L) + \sum_{x'_i \in \hat{U}} {g_L - x'_i}.
    \end{align*}

So by plugging this in (\ref{intermed_res_1}) we get:
\begin{align*}
        &\sum_{x'_i \in \hat{S}} {g_L - x'_i} + \sum_{x'_i \in \hat{T} \cap \hat{L_l}} {x'_i - g_L} + \sum_{x'_i \in \hat{T} \cap \hat{L_r}} {x'_i - g_L} + |\hat{U}|(h_R - g_L) + \sum_{x'_i \in \hat{U}} {g_L - x'_i} \\
    & \le 
        \sum_{x'_i \in \hat{S} \cup (\hat{T} \cap \hat{L_l})} {g_L - x'_i} + \sum_{x'_i \in (\hat{T} \cap \hat{L_r}) \cup \hat{U_l} } {x'_i - g_L} + \sum_{x'_i \in \hat{U_r}} {h_R - x'_i}. \\
    & \implies \\
        & |\hat{U}|(h_R - g_L) + \sum_{x'_i \in \hat{U}} {g_L - x'_i}
    \le
        2 \sum_{x'_i \in \hat{T} \cap \hat{L_l}} {g_L - x'_i} + \sum_{x'_i \in \hat{U_l} } {x'_i - g_L} + \sum_{x'_i \in \hat{U_r}} {h_R - x'_i} \\
    & = 2 \sum_{x'_i \in \hat{T} \cap \hat{L_l}} {g_L - x'_i} + \sum_{x'_i \in \hat{U_l} } {x'_i - g_L} + |\hat{U_r}|(h_R - g_L) + \sum_{x'_i \in \hat{U_r}} {g_L - x'_i} \implies \\
    & |\hat{U_l}|(h_R - g_L) \le 2 \sum_{x'_i \in \hat{T} \cap \hat{L_l}} {g_L - x'_i} + 2 \sum_{x'_i \in \hat{U_l} } {x'_i - g_L} \numberthis \label{res_hat}.
\end{align*}

By the definition of $\hat{T}, \hat{L_l}$:  $\hat{T} \cap \hat{L_l}$  and $T \cap L_l$ differ by at most $\delta n$ elements, and for any $x'_i \in \hat{T} \cap \hat{L_l}$: $x'_i \ge h_L$. Similarly $\hat{U_l}$ and $U_l$ differ by at most $\delta n$ elements and for any $x'_i \in \hat{U_l}$: $x'_i \le m''$.

By plugging this in the above (\ref{res_hat}) we get:

\begin{align*}
    (|U_l| - \delta n)(h_R - g_L)
    & \le |\hat{U_l}|(h_R - g_L) \\
    & \le 2 \sum_{x_i \in T \cap L_l} {g_L - x_i} + 2 \delta n (g_L - h_L) + 2 \sum_{x_i \in U_l } {x_i - g_L} + 2 \delta n (m'' - g_L).
\end{align*}
Which implies:
\[
      (h_R - g_L)|U_l| \le 2 (OPT_{T \cap L_l} + OPT_{U_l}) + \delta n \prn*{ 2(g_L - h_L) + 2 (m'' - g_L) + (h_R - g_L)}.
\]

Since $m'' - g_L = \nf{(h_R - g_L)}{2}$ (by the definition of $m''$) and from \cref{eq_g_L_close_to_h_L} we get:

\[
      (h_R - g_L)|U_l| \le 2 (OPT_{T \cap L_l} + OPT_{U_l}) + O(\delta n) (h_R - g_L).
\]

\begin{equation} \label{bound_opt_U_l_and_t_cap_L_l}
    (h_R - g_L)(|U_l| - O(\delta n)) \le 2 (OPT_{T \cap L_l} + OPT_{U_l}).
\end{equation}
\end{proof}

Now we are ready to bound $g_L - h_L$.

\begin{lem}\label{lem_bound_dist_between_g_L_and_h_L}
    If the algorithm returns $h_L$ then:
    \[
        (g_L - h_L) \le \frac{0.8 OPT_S + 2.8 OPT_{T \cap L_l} + 0.8 OPT_{L_r} + O(\delta) OPT_L}{|L|(1 - 2 \beta)}.
    \]
\end{lem}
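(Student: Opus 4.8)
The plan is to assemble the per-region inequalities already established into a single lower bound on $(g_L-h_L)$ times a weighted count, and then divide. First I would use \cref{eq_g_L_close_to_h_L}, namely $g_L - h_L \le h_R - g_L$, to rewrite the inequalities of \cref{bound_opt_U_r}, \cref{bound_opt_v} and the bound~\eqref{bound_opt_U_l_and_t_cap_L_l} of \cref{bound_opt_u_l} — all of which are stated with the factor $(h_R-g_L)$ — with the common factor $(g_L-h_L)$; this is valid because all region sizes and $OPT$ terms are nonnegative. Together with \cref{bound_opt_s} (multiplied by $2$) this produces
\begin{align*}
(g_L - h_L)\,2|S| &\le 2\,OPT_S,\\
(g_L - h_L)\,|U_r| &\le 2\,OPT_{U_r},\\
(g_L - h_L)\,2|V| &\le 2\,OPT_V,\\
(g_L - h_L)(|U_l| - O(\delta n)) &\le 2\,(OPT_{T\cap L_l} + OPT_{U_l}).
\end{align*}
Summing these and using $U = U_l \cup U_r$ (so $|U_l|+|U_r|=|U|$ and $OPT_{U_l}+OPT_{U_r}=OPT_U$) gives
\[
(g_L - h_L)\bigl(2|S| + |U| + 2|V| - O(\delta n)\bigr) \le 2\bigl(OPT_S + OPT_{T\cap L_l} + OPT_U + OPT_V\bigr).
\]

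Next I would turn the count $2|S|+|U|+2|V|$ into a clean expression in $\beta$. Using $|S|=\beta|L|$, $|V|=\gamma|L|$, the estimate $|T|=\beta|L|\pm O(\delta n)$ from the proof of \cref{bound_on_beta_plus_gamma_L}, and $|U| = |L| - |S| - |T| - |V| \pm O(\delta n)$, the count equals $(1+\gamma)|L| \pm O(\delta n)$; then \cref{bound_on_beta_plus_gamma_L} (i.e.\ $(\beta+\gamma)|L| = |L| - \tfrac n2 \pm O(\delta n)$) rewrites it as $2|L| - \beta|L| - \tfrac n2 \pm O(\delta n)$. Since the cluster is unbalanced we have $|L| \ge (1-b\delta)n$, hence $\tfrac n2 \le \tfrac{|L|}2 + O(\delta|L|)$, so the count is at least $(\tfrac32 - \beta)|L| - O(\delta|L|) \ge |L|(1-O(\delta))$ (recall $\beta\le \tfrac12$). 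Substituting this lower bound, and noting that it keeps the denominator bounded away from $0$, yields
\[
(g_L - h_L) \le \frac{2(1-2\beta)}{\tfrac32-\beta}\,(1+O(\delta))\cdot\frac{OPT_S + OPT_{T\cap L_l} + OPT_U + OPT_V}{(1-2\beta)|L|}.
\]

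The crux — and the step I expect to be the main obstacle — is bounding the scalar factor $F(\beta):=\frac{2(1-2\beta)}{\tfrac32-\beta}$ by $\tfrac45+O(\delta)$, since a direct computation shows $F$ is strictly decreasing on $[0,\tfrac12]$ with $F(0)=\tfrac43>\tfrac45$, so a naive bound is not enough; one must show $\beta \ge \tfrac14 - O(\delta)$. This is exactly where the hypothesis that the algorithm returns $h_L$ is used: returning $h_L$ means $|L'|\ge|R'|$, hence $|L'|\ge \tfrac n2$. Since $h_L$ is the $1$-median of $L'$ by \cref{lem_each_center_in_k_median_is_1_median}, the points of $L'$ to the left of $h_L$ number $|L'|/2 \ge n/4$, and $S$ agrees with this set up to $O(\delta n)$ points, so $|S| \ge \tfrac n4 - O(\delta n)$. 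Because $|L|\le n$, this forces $\beta = |S|/|L| \ge \tfrac14 - O(\delta)$, and as $F$ is decreasing with $F(\tfrac14)=\tfrac45$ (and locally Lipschitz), $F(\beta)\le \tfrac45 + O(\delta)$.

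Finally I would absorb the residual slack. With $F(\beta)\le \tfrac45+O(\delta)$ the numerator becomes $(0.8+O(\delta))(OPT_S + OPT_{T\cap L_l} + OPT_U + OPT_V)$; since $U,V\subseteq L_r$ we have $OPT_U+OPT_V\le OPT_{L_r}$, and since $S,\,T\cap L_l,\,U,\,V$ are disjoint subsets of $L$ the full sum is at most $OPT_L$, letting me collect the $O(\delta)$ terms into $O(\delta)\,OPT_L$. Using $0.8\le 2.8$ on the $OPT_{T\cap L_l}$ term then gives the numerator bound $0.8\,OPT_S + 2.8\,OPT_{T\cap L_l} + 0.8\,OPT_{L_r} + O(\delta)\,OPT_L$, which is precisely the claimed inequality. (If $g_L\le h_L$ the statement is trivial, so I may assume $g_L>h_L$, which also yields $\beta<\tfrac12$ and keeps $1-2\beta$ strictly positive.)
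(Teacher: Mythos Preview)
Your proof is correct and follows essentially the same route as the paper: sum the four per-region inequalities (after replacing $h_R-g_L$ by $g_L-h_L$ via \cref{eq_g_L_close_to_h_L}), lower-bound the count $2|S|+|U|+2|V|$ by $(\tfrac32-\beta)|L|-O(\delta n)$ using $|T|\approx|S|$ and \cref{bound_on_beta_plus_gamma_L}, and then use the hypothesis that the algorithm returns $h_L$ to force $\beta\ge\tfrac14-O(\delta)$ so that the decreasing factor $F(\beta)$ is at most $\tfrac45+O(\delta)$. Your final absorption step ($OPT_U+OPT_V\le OPT_{L_r}$ and $0.8\le2.8$) is a small addendum that the paper leaves implicit.
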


\begin{proof}
    By \cref{bound_opt_s}, \cref{bound_opt_U_r}, \cref{bound_opt_u_l}, \cref{bound_opt_v} and the fact that $g_L - h_L \le h_R - g_L$ we get the following four inequalities:
    \[
        (g_L - h_L)|U_r| \le 2 OPT_{U_r} ,
    \]
    \[
        (g_L - h_L)(|U_l| - O(\delta n)) \le 2 (OPT_{T \cap L_l} + OPT_{U_l}) ,
    \]
    \[
        (g_L - h_L)2|V| \le 2 OPT_V, 
    \]
    \[
        (g_L - h_L)2|S| \le 2 OPT_S, 
    \]

    By summing the above inequalities:
    \begin{align*}
        & (g_L - h_L) \prn*{2 |S| + |U| + 2 |V| - O(\delta n)} \le 2 \prn*{ OPT_S + OPT_{T \cap L_l} + OPT_U + OPT_V} \implies \\
        & g_L - h_L \le \frac{\prn*{ OPT_S + OPT_{T \cap L_l} + OPT_U + OPT_V}}{(1-2\beta)|L|} \cdot \frac{2 (1 - 2\beta)|L|}{2 |S| + |U| + 2 |V| - O(\delta n)}. \numberthis \label{intermediate_result_on_g_L_minus_h_L}
    \end{align*}

    Since $|{L'}_l| = |{L'}_r|$ we get that $|T| \approx^{\delta n} |S| = \beta |L|$ and so
    \begin{align*}
        2 |S| + |U| + 2 |V| - O(\delta n) \\
        & \ge  |S| + |T| + |U| + 2 |V| - O(\delta n) \\
        & = (|S| + |T| + |U| + |V|) + (|S| + |V|) - |S| - O(\delta n) \\
        & \stackrel{(\star)}{=} |L| + (\beta + \gamma)|L| - \beta |L| - O(\delta n) \\
        & \stackrel{\cref{bound_on_beta_plus_gamma_L}}{\ge} |L| + |L| - \frac{n}{2} - \beta |L| - O(\delta n) \\
        & = 2 |L| - \beta |L| - \frac{n}{2} - O(\delta n).
    \end{align*}

    Where $(\star)$ is due to: $|L| = |S| + |T| + |U| + |V|$, $|V| = \gamma |L|$, $|S| = \beta |L|$.

    By plugging this inequality into \cref{intermediate_result_on_g_L_minus_h_L}:
    \begin{align*}
        & g_L - h_L \le \frac{\prn*{ OPT_S + OPT_{T \cap L_l} + OPT_U + OPT_V}}{(1-2\beta)|L|} \cdot \prn*{ \frac{2 (1 - 2\beta)|L|}{2 |L| - \beta |L| - \frac{n}{2} - O(\delta n)}}. \numberthis \label{second_intermediate_result_on_g_L_minus_h_L}
    \end{align*}

    We will bound the second term in RHS.
    First, we know that $|L| \approx^{O(\delta n)} n$ and thus:
    \[
        \frac{2 (1 - 2\beta)|L|}{2 |L| - \beta |L| - \frac{n}{2} - O(\delta n)} \le \frac{2(1 - 2\beta)(n - O(\delta n))}{\frac{3n}{2} - \beta n - O(\delta n)} \le \frac{2(1-2\beta)}{\frac{3}{2} - \beta} \prn*{ 1 + O(\delta)}
    \]

    We denote: $f(\beta) := \frac{2(1-2\beta)}{\frac{3}{2} - \beta}$.

    Since the algorithm returns $h_L$ we deduce that $|L'| \ge \frac{n}{2}$. Thus, $|{L'}_l|, |{L'}_r| \ge \frac{n}{4}$. But $S$ and ${L'}_l$ differ by at most $\delta n$ elements, and so $|S| \ge \frac{n}{4} - \delta n$. Since $|S| = \beta |L|$ and $|L| \le n$ we get: $\beta n \ge \frac{n}{4} - \delta n$ which implies $\beta \ge \frac{1}{4} - \delta$.

    Also obviously $\beta \le 0.5$ since $h_L \le g_L$ by the definition of $S$.

    So $f: [\frac{1}{4} - \delta, \frac{1}{2}] \rightarrow \mathbb{R}$ is a well defined function. We find its maximum:
    
    \[
        f'(\beta) = 2 \frac{-2(\frac{3}{2} - \beta) -(1 - 2\beta)(-1)}{(\frac{3}{2}-\beta)^2} = 2 \frac{-3 + 2\beta + 1 - 2\beta}{(\frac{3}{2}-\beta)^2} = \frac{-4}{(\frac{3}{2}-\beta)^2} < 0.
    \]

    So $f$ gets its maximum at the left boundary where $\beta = \frac{1}{4} - \delta$:

    \[
        f(\frac{1}{4} - \delta) = 2 \frac{\frac{1}{2} + 2\delta}{\frac{5}{4} + \delta} = \frac{4}{5} (1 + O(\delta)).
    \]

    By plugging the above into \cref{second_intermediate_result_on_g_L_minus_h_L} we get:

    \[
        g_L - h_L \le \frac{4}{5} \frac{\prn*{ OPT_S + OPT_{T \cap L_l} + OPT_U + OPT_V + O(\delta) OPT_L}}{(1-2\beta)|L|}.
    \]
    \end{proof}

    What we have shown is that if the algorithm returns $h_L$ we are done. If the algorithm returns $h_R$ we can similarly show that in this case $\kmed[1](L, h_R) \le (1.8 + \Theta(\delta)) \kmed[1](L, g_L)$.
    
    \subsubsection{Handling the case where the algorithm returns $h_R$}
    In this case the cost $\med(L, h_R)$ will be:

    \begin{clm} \label{claim_showing_relation_between_cost_of_h_R_and_dist_between_h_R_and_g_L}
        \[
            \med(L, h_R) = OPT_S + OPT_{T \cap L_l} - OPT_{T \cap L_r} - OPT_U + OPT_V + (|L|(1 - 2\gamma))(h_R - g_L).
        \]
    \end{clm}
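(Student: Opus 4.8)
The plan is to mirror the computation of $\med(L,h_L)$ in \cref{bound_cost_of_h_L_compared_to_cost_of_g_L}, but now measuring distances to $h_R$ instead of $h_L$. The starting observation is that every point of $L$ lies to the left of $m$, so its nearest center in $G = \{g_L,g_R\}$ is $g_L$; hence for every $M \subseteq L$ we have $OPT_M = \sum_{i \in M} |x_i - g_L|$. This lets us convert each $OPT$ term into a signed sum of $(x_i - g_L)$ according to whether the points of $M$ sit left or right of $g_L$, exploiting the pinned-down ordering $h_L \le g_L < m' < h_R \le m$ from the case hypothesis.

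First I would split $\med(L,h_R) = \sum_{i \in L} |x_i - h_R|$ according to the sign of $x_i - h_R$. Since $S$, $T$, and $U$ all consist of points with $x_i \le h_R$ (as $S,T$ lie left of $m' < h_R$ and $U$ lies left of $h_R$ by definition), while $V$ consists of points with $x_i > h_R$, we get
\[
\med(L,h_R) = \sum_{i \in S \cup T \cup U} (h_R - x_i) + \sum_{i \in V} (x_i - h_R).
\]
Next I would rewrite each summand relative to $g_L$. For the points left of $g_L$ (namely $S$ and $T \cap L_l$), write $h_R - x_i = (g_L - x_i) + (h_R - g_L)$, so these contribute $+OPT_S$, $+OPT_{T\cap L_l}$ and a positive multiple of $(h_R - g_L)$. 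For the points right of $g_L$ but left of $h_R$ (namely $T \cap L_r$ and $U$, which lie in $L_r$ because $m' > g_L$), write $h_R - x_i = (h_R - g_L) - (x_i - g_L)$, contributing $-OPT_{T \cap L_r}$ and $-OPT_U$ together with a positive multiple of $(h_R - g_L)$. Finally, for $V$ (right of $h_R$) write $x_i - h_R = (x_i - g_L) - (h_R - g_L)$, giving $+OPT_V$ minus a multiple of $(h_R - g_L)$.

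Collecting the signed $OPT$ terms reproduces exactly $OPT_S + OPT_{T\cap L_l} - OPT_{T\cap L_r} - OPT_U + OPT_V$. The only remaining work is to simplify the coefficient of $(h_R - g_L)$, which equals $|S| + |T\cap L_l| + |T \cap L_r| + |U| - |V| = |S| + |T| + |U| - |V| = (|L| - |V|) - |V| = |L|(1 - 2\gamma)$, using $|S|+|T|+|U|+|V| = |L|$ and $|V| = \gamma|L|$. This yields the claimed identity. I do not expect a genuine obstacle here: the argument is a sign-tracking bookkeeping exercise entirely parallel to \cref{bound_cost_of_h_L_compared_to_cost_of_g_L}, and the only place requiring care is ensuring each of the five subsets is correctly placed relative to both $g_L$ and $h_R$, which is guaranteed by the established ordering of the points.
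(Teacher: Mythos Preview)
Your proposal is correct and is essentially the same computation as the paper's proof: both split $\med(L,h_R)$ into the $S\cup T\cup U$ and $V$ parts, rewrite each $|x_i-h_R|$ relative to $g_L$, and collect the $(h_R-g_L)$ coefficient as $|S|+|T|+|U|-|V|=|L|(1-2\gamma)$. The only cosmetic difference is that you explicitly separate the left-of-$g_L$ and right-of-$g_L$ subsets up front, whereas the paper writes $h_R-x_i=(h_R-g_L)+(g_L-x_i)$ uniformly over $S\cup T\cup U$ and lets the sign of $(g_L-x_i)$ do the splitting.
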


    \begin{proof}
    \begin{align*}
        \med(L, h_R)
        & = \sum_{i \in S \cup T \cup U} h_R - x_i + \sum_{i \in V} x_i - h_R\\
        & = \sum_{i \in S \cup T \cup U} (h_R - g_L) + (g_L - x_i) + \sum_{i \in V} (x_i - g_L) - (h_R - g_L) \\
        & = OPT_S + OPT_{T \cap L_l} - OPT_{T \cap L_r} - OPT_U + OPT_V + (|S| + |T| + |U| - |V|)(h_R - g_L) \\
        & \stackrel{|S|+|T|+|U|+|V|=|L|}{=} OPT_S + OPT_{T \cap L_l} - OPT_{T \cap L_r} - OPT_U + OPT_V + (|L| - 2|V|)(h_R - g_L) \\
        & = OPT_S + OPT_{T \cap L_l} - OPT_{T \cap L_r} - OPT_U + OPT_V + (|L|(1 - 2 \gamma))(h_R - g_L).
    \end{align*}
    \end{proof}

    To get $\med(L, h_R) \le (1.8 + O(\delta)) \med(L, g_L)$ we need to show (by the above claim) the following lemma:

\begin{lem}\label{goal_d_h_R_and_g_L}
    If the algorithm returns $h_R$ then:
    \[
         (h_R - g_L) \le \frac{0.8\prn*{OPT_S + OPT_{T \cap L_l} + OPT_V} + 2.8 \prn*{OPT_{T \cap L_r} + OPT_U} + O(\delta)OPT_L}{(1 - 2 \gamma) |L|}.
    \]
\end{lem}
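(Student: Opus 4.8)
The plan is to mirror the argument for the sibling Lemma~\ref{lem_bound_dist_between_g_L_and_h_L} (the case where $\bcc$ returns $h_L$), exploiting that the two are left--right reflections of one another, with the roles of the extreme clusters $S$ and $V$ (equivalently $\beta$ and $\gamma$) interchanged. First I would record that, by Claim~\ref{claim_showing_relation_between_cost_of_h_R_and_dist_between_h_R_and_g_L} and the identity $OPT_L = OPT_S + OPT_{T\cap L_l} + OPT_{T\cap L_r} + OPT_U + OPT_V$, the displayed inequality is \emph{exactly} the condition needed to conclude $\med(L,h_R)\le(1.8+O(\delta))\,OPT_L$; so it suffices to upper-bound the single quantity $(h_R-g_L)$. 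As in the $h_L$ case, I will collect per-region inequalities of the form ``$(h_R-g_L)\cdot(\text{cluster size})\le 2\cdot(\text{OPT contribution})$'', sum them with suitable multiplicities to obtain $(h_R-g_L)\cdot C\le 2\Sigma$ for a cluster sum $C$ and OPT sum $\Sigma$, and then show $\tfrac{2(1-2\gamma)|L|}{C}\le \tfrac45(1+O(\delta))$.

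The geometric and optimality-based inequalities already established transfer directly, since they are all stated with $(h_R-g_L)$ on the left: Claim~\ref{bound_opt_v} gives $(h_R-g_L)|V|\le OPT_V$, Claim~\ref{bound_opt_U_r} gives $(h_R-g_L)|U_r|\le 2\,OPT_{U_r}$, and Claim~\ref{bound_opt_u_l} gives $(h_R-g_L)(|U_l|-O(\delta n))\le 2(OPT_{T\cap L_l}+OPT_{U_l})$. For the size bookkeeping I will reuse the same estimates as in the $h_L$ proof --- namely $|S|\approx^{O(\delta n)}|T|$ (from $|L'_l|=|L'_r|$, via Lemma~\ref{lem_each_center_in_k_median_is_1_median}), $|U|\approx^{O(\delta n)}(\gamma-1)|L|+n$, Claim~\ref{bound_on_beta_plus_gamma_L}, and $|L|\approx^{O(\delta n)} n$. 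Here the relevant extremal function is the decreasing $f(\gamma)=\tfrac{2(1-2\gamma)}{1+\gamma}$, maximized at the left endpoint of the range $\gamma\in[\tfrac14-O(\delta),\tfrac12]$ to the value $\tfrac45(1+O(\delta))$. The lower endpoint comes from the fact that $\bcc$ returns $h_R$, so $|R'|\ge n/2$, hence $|R'_r|=|R'|/2\ge n/4$, and (using $h_R<m$) $|R'_r|\approx |V|+|R|$; the upper endpoint from $h_R>m'\ge g_L$ together with $g_L$ being the $1$-median of $L$, forcing $|V|\le|L|/2$.

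The main obstacle is the contribution of the far-left cluster $S$. In the $h_L$ case $S$ was handled for free by the geometric bound $(g_L-h_L)|S|\le OPT_S$ (Claim~\ref{bound_opt_s}), and the relation $g_L-h_L\le h_R-g_L$ (\cref{eq_g_L_close_to_h_L}) was used to \emph{downgrade} the $V,U_r,U_l$ bounds to the smaller deviation $g_L-h_L$. In the present case we must bound the \emph{larger} deviation $h_R-g_L$, so the $S$-bound cannot simply be upgraded, yet a short size calculation shows $C$ must still carry an $S$-term (equivalently $C\approx 2|S|+|U|+2|V|$), since the combination $C=|U|+2|V|$ alone only yields the insufficient ratio up to $4/3$. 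I expect to resolve this by leaning on the \emph{unbalanced} hypothesis: because $|R|=O(\delta n)$, the set $R'$ is (up to $O(\delta n)$ points) exactly $\widehat U\cup\widehat V$, so $h_R$ being its $1$-median forces $|U|\approx^{O(\delta n)}|V|$ and pins the proportions to $\beta\approx\gamma\approx\tfrac14$; this lets me charge the $S$-contribution through the central region while keeping the $OPT_V$ coefficient at $0.8$. If the clustering induced by the test pair $(g_L,h_R)$ is unbalanced, the argument short-circuits (as already noted before Claim~\ref{bound_opt_u_l}) to the easy $(1+O(\delta),\delta)$ bound via Corollary~\ref{1_median_approx_robustness_results}. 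The most delicate part throughout is propagating the $O(\delta n)$ discrepancies between $X$ and $X'$ so that all error terms collapse into a single additive $O(\delta)\,OPT_L$, using $(h_R-g_L)\le 4\,OPT_V/n\cdot(1+O(\delta))$ to convert stray $(h_R-g_L)\cdot O(\delta n)$ terms into $O(\delta)\,OPT_V$.
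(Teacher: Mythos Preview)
Your overall plan and most of your ingredients match the paper exactly: Claims~\ref{bound_opt_v}, \ref{bound_opt_U_r}, \ref{bound_opt_u_l} are reused as written, the denominator is lower-bounded by $(1+\gamma)|L|-O(\delta n)$, the range $\gamma\in[\tfrac14-O(\delta),\tfrac12]$ is derived from $|R'|\ge n/2$ together with $|R|\le b\delta n$, and the extremal function $h(\gamma)=\tfrac{2(1-2\gamma)}{1+\gamma}$ is maximized at the left endpoint to give $\tfrac45+O(\delta)$. Your treatment of the $O(\delta n)$ bookkeeping and the short-circuit when $(g_L,h_R)$ induces an unbalanced clustering is also fine.

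The genuine gap is your handling of the $S$-term. You correctly diagnose that Claim~\ref{bound_opt_s} only controls $(g_L-h_L)|S|$, not the larger $(h_R-g_L)|S|$, and that without an $S$-inequality the denominator $C=|U|+2|V|$ gives only the insufficient ratio $4/3$. But your proposed fix---``$|U|\approx|V|$ pins $\beta\approx\gamma\approx\tfrac14$, so charge $S$ through the central region''---is a size statement, not an inequality, and does not by itself produce any bound on $(h_R-g_L)|S|$. (It is also not literally true: the hypotheses give only $\gamma\ge\tfrac14-O(\delta)$ and $\beta+\gamma\approx\tfrac12$; $\beta$ may be much smaller than $\tfrac14$, with $\gamma$ correspondingly larger.) Nothing in your outline names the inequality that would let $2|S|$ enter $C$.

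The paper supplies exactly this missing inequality as a separate claim. One writes $h_R-g_L=(g_L-h_L)+2(m'-g_L)$; Claim~\ref{bound_opt_s} controls the first piece, and since every point of $U$ lies to the right of $m'$ one has $m'-g_L\le OPT_U/|U|$ for the second. This yields
\[
  (h_R-g_L)\,|S| \;\le\; OPT_S \;+\; c\,\frac{|S|}{|U|}\,OPT_U,
\]
and only now do size estimates enter: from $|L'|\le n/2$ one gets $|S|\le n/4+O(\delta n)$, and from $|R'_l|\ge n/4$ one gets $|U|\ge n/4-O(\delta n)$, so $|S|/|U|\le 1+O(\delta)$. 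The extra $OPT_U$ picked up here is precisely why the target inequality carries the larger coefficient $2.8$ on $OPT_U$ rather than $0.8$; your hope of routing the $S$-charge elsewhere while keeping all $0.8$ coefficients is not what the argument achieves.
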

\begin{proof}

    We begin by showing the following claim:
    \begin{clm} \label{bound_h_R_minus_g_L_times_S}
        \[
            (h_R - g_L)|S| \le (h_R - g_L)|S| \le OPT_S + \frac{|S|}{|U|} OPT_U.
        \]
    \end{clm}
    \begin{proof}
        $h_R - g_L = m' - h_L = (m' - g_L) + (g_L - h_L)$ and so $g_L - h_L = (h_R - g_L) - (m' - g_L)$. Together with \cref{bound_opt_s} we get:

    \begin{equation} \label{tmp_bound_1}
        (h_R - g_L)|S| \le OPT_S + |S|(m' - g_L).
    \end{equation}

    Also, $OPT_U = \sum_{i \in U} x_i - g_L \ge \sum_{i \in U} m' - g_L = |U|(m' - g_L)$
    which implies 
    \begin{equation} \label{tmp_bound_2}
        m' - g_L \le \frac{OPT_U}{|U|}.
    \end{equation}
    By \cref{tmp_bound_1} and \cref{tmp_bound_2} we get the desired inequality.
    \end{proof}

    From \cref{bound_h_R_minus_g_L_times_S}, \cref{bound_opt_U_r}, \cref{bound_opt_u_l}, \cref{bound_opt_v}:

    we get the following four inequalities:
    \[
        (h_R - g_L)2|S| \le 2 OPT_S + 2 \frac{|S|}{|U|} OPT_U, 
    \]
    \[
        (h_R - g_L)|U_r| \le 2 OPT_{U_r} ,
    \]
    \[
        (h_R - g_L)(|U_l| - O(\delta n)) \le 2 (OPT_{T \cap L_l} + OPT_{U_l}) ,
    \]
    \[
        (h_R - g_L)2|V| \le 2 OPT_V, 
    \]

    By summing these inequalities:
    \begin{align} \label{intermediate_result_h_R_minus_g_L}
        (h_R - g_L) \prn*{2 |S| + |U| + 2 |V| - O(\delta n)} \le \frac{OPT_S + OPT_U + OPT_{T \cap L_l} + OPT_V + \frac{2|S|}{|U|} OPT_U}{(1 - 2 \gamma)|L|} \cdot 2 (1 - 2 \gamma)|L|.
    \end{align}

    Since the algorithm returns $h_R$ it must be the case that $|R'| \ge \frac{n}{2}$ and so $|{R'}_l| \ge \frac{n}{4}$ which means that $|U| \ge \frac{n}{4} - \delta n$. This also means that $|L'| \le \frac{n}{2}$ and thus similarly $|S| \le \frac{n}{4} + \delta n$.
    So together:
    
    \[
        2 \frac{|S|}{|U|} \le \frac{\frac{n}{2} + 2 \delta n}{\frac{n}{4} - \delta n} = 2  + O(\delta).
    \]

    By plugging this back into \cref{intermediate_result_h_R_minus_g_L} we get:
    \begin{align} \label{second_intermediate_result_h_R_minus_g_L}
        (h_R - g_L) \prn*{2 |S| + |U| + 2 |V| - O(\delta n)} \le \frac{OPT_S + (3 + O(\delta)) OPT_U + OPT_{T \cap L_l} + OPT_V}{(1 - 2 \gamma)|L|} \cdot 2 (1 - 2 \gamma)|L|.
    \end{align}
    
    Since $|{L'}_l| = |{L'}_r|$ we get that $|T| \approx^{\delta n} |S|$ and so
    \begin{align*}
        2 |S| + |U| + 2 |V| - O(\delta n) \\
        & \ge  |S| + |T| + |U| + 2 |V| - O(\delta n) \\
        & = (|S| + |T| + |U| + |V|) + |V| - O(\delta n) \\
        & \stackrel{(\star)}{=} (1+\gamma)|L| - O(\delta n)
    \end{align*}

    Where $(\star)$ is due to: $|L| = |S| + |T| + |U| + |V|$, $|V| = \gamma |L|$.

    Let us use this fact in \cref{second_intermediate_result_h_R_minus_g_L} and get:

    \begin{align} \label{third_intermediate_result_h_R_minus_g_L}
        h_R - g_L \le \frac{OPT_S + (3 + O(\delta)) OPT_U + OPT_{T \cap L_l} + OPT_V}{(1 - 2 \gamma)|L|} \cdot \prn*{2 \frac{(1 - 2 \gamma)|L|}{(1+\gamma)|L| - O(\delta n)}}.
    \end{align}

    The second term on RHS is (up to $1 + O(\delta)$ factor):
    $h(\gamma) := 2 \frac{1-2\gamma}{1+\gamma}$.
    Let us find a bound f or $h$.
    
    Since $|R'| \ge \frac{n}{2}$:
    \[
        |V| + |R| \approx{=}^{O(\delta n)} |{R'}_r| \ge \frac{n}{4}.
    \]
    and since $|R| \le b \delta n$:
    $\gamma |L| = |V| \ge \frac{n}{4} - b \delta n$. We deduce:
    
    $\gamma \ge \frac{1}{4} - O(\delta)$.
    The nominator is decreasing in $\gamma$ and the denominator is increasing in $\gamma$ and thus the maximum is received on the left end point: $\gamma = \frac{1}{4} - O(\delta)$.
    The maximum value of $h$ is thus: $2 \frac{\frac{1}{2} - O(\delta)}{\frac{5}{4} - O(\delta)} = \frac{4}{5} + O(\delta)$.

    So we found a bound for the RHS of \cref{third_intermediate_result_h_R_minus_g_L} and therefore:

    \[
        h_R - g_L \le \frac{\frac{4}{5} OPT_S + \frac{12}{5} OPT_U + \frac{4}{5} OPT_{T \cap L_l} + \frac{4}{5} OPT_V + O(\delta) OPT_L}{(1 - 2 \gamma)|L|}.
    \]
\end{proof}

\subsection{Sub-case: \texorpdfstring{$m' < g_L$}{}.}
To handle this sub-case we use a reduction to the previous sub-case and show that we pay another multiplicative factor of at most $1 + O(\delta)$. We show a bound of $\kmed[1](h_L, L)$ in case the algorithm returns $h_L$. In the case the algorithm returns $h_R$ a bound for $\kmed[1](h_R, L)$ is achieved similarly.

Let $\tilde{g_L} = m'$.
For any multi-set of points $M \subseteq L$ let $\widetilde{OPT_M} = \sum_{i \in M} |x_i - \tilde{g_L}|$.

\begin{figure}[h]
    \centering
    \includegraphics[scale=0.45]{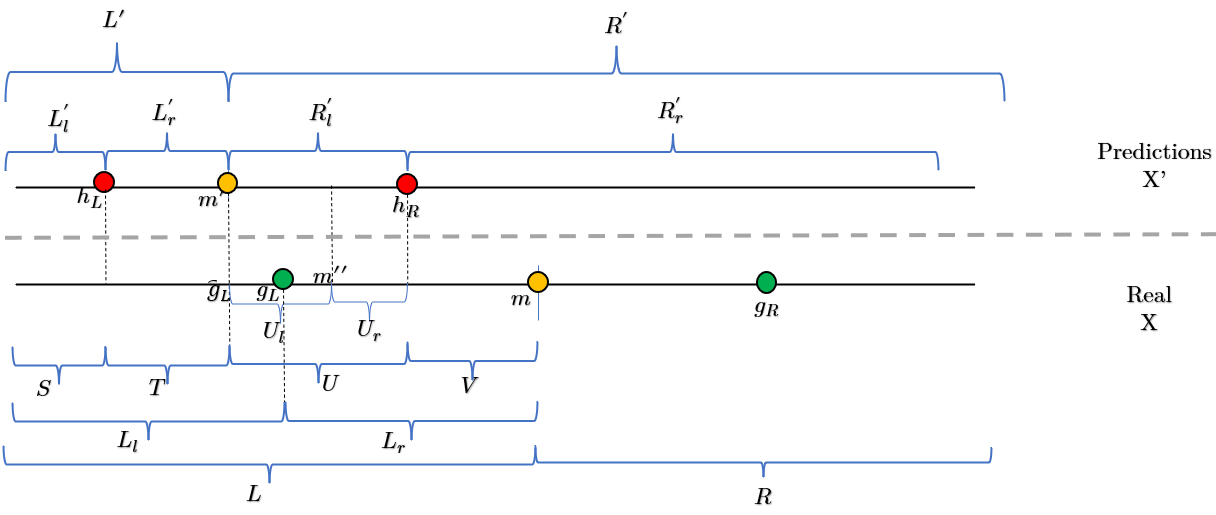}
    \caption{Illustration of case (4) where $m' < g_L$. On the top we have the "estimated" locations $h_L, h_R$, computed on $X'$. On the bottom we have the "real" locations, $g_L, g_R$, computed on $X$.
    $L', R', {L'}_l, {L'}_r, {R'}_l, {R'}_r, S, T, U ,V, U_l, U_r$ are all the same as before.
    The difference is that $m'$ is now to the left of $g_L$.
    }
    \label{figure_big_cluster_approx_hard_case_g_L_bigger_than_m'}
    \end{figure}
    \newpage

By the exact same analysis of the case where $m' \ge g_L$, now we have that $m' \ge \tilde{g_L}$ and thus, just like before, 
$\tilde{g_L} - h_L \le \frac{4}{5} \frac{\widetilde{OPT_S} + \widetilde{OPT_{T}} + \widetilde{OPT_{L_r}}}{(1 - 2\beta) |L|}$.

But $\widetilde{OPT_S} + \widetilde{OPT_T} = OPT_S + OPT_T - (g_L - m')(|S| + |T|)$ and 
$\widetilde{OPT_{L_r}} = OPT_{L_r} + (g_L - m')|L_r|$.

So together we have:

\begin{align} \label{mh_smaller_than_g_L_bound_on_tilde_diff}
    \tilde{g_L} - h_L &\le \frac{4}{5} \frac{OPT_S + OPT_T - (g_L - m')(|S| + |T|) + OPT_{L_r} + (g_L - m')(|U| + |V|)}{(1 - 2\beta) |L|} \\
    & = \frac{4}{5} \frac{\prn*{OPT_S + OPT_T + OPT_{L_r} + (g_L - m')(-|S|-|T| + |U|+|V|)}}{(1 - 2\beta) |L|}.
\end{align}

Since $|L'| \ge \frac{n}{2}$ it must be the case that $|S|+|T| \ge \frac{n}{2} - \delta n$ since otherwise there would be strictly less than $\frac{n}{2}$ elements in $|L'|$. For this reason also $|U| + |V| \le \frac{n}{2}$

So $-|S|-|T| + |U| + |V| \le \delta n $. By plugging this in \cref{mh_smaller_than_g_L_bound_on_tilde_diff} we get:

\begin{align*}
    \tilde{g_L} - h_L &\le \frac{4}{5} \frac{\prn*{OPT_S + OPT_T + OPT_{L_r} + (g_L - m')(\delta n}}{(1 - 2\beta) |L|}.
\end{align*}

Note that $\tilde{g_L} - h_L = \tilde{g_L} - g_L + g_L - h_L = g_L - h_L - (g_L - \tilde{g_L}) = (g_L - h_L) - (g_L - m')$.

By plugging this in the above we get:

\begin{align*}
    g_L - h_L &\le \frac{4}{5} \frac{\prn*{OPT_S + OPT_T + OPT_{L_r} + \frac{10}{4}(g_L - m') \delta n}}{(1 - 2\beta) |L|}.
\end{align*}

Note that $OPT_T = \sum_{i \in T} g_L - x_i \ge \sum_{i \in T} g_L - m'$ so $g_L - m' \le \frac{OPT_T}{|T|}$.

\begin{align*}
    \tilde{g_L} - h_L &\le \frac{4}{5} \frac{\prn*{OPT_S + OPT_T + OPT_{L_r} + 2.5 OPT_T \frac{\delta n}{|T|}}}{(1 - 2\beta) |L|} \\
    &\le  \frac{4}{5} \frac{\prn*{OPT_S + OPT_T + OPT_{L_r} + 2 OPT_T \frac{\delta n}{n/2 - \delta n}}}{(1 - 2\beta) |L|} \\
    & = \frac{4}{5} \frac{\prn*{OPT_S + (1 + \Theta(\delta))OPT_T + OPT_{L_r}}}{(1 - 2\beta) |L|}. \numberthis \label{middle_bound_for_tilde_g_minus_h_L}
\end{align*}

and so we get the desired bound (due to the equivalent condition stated in \cref{ineq_we_need_to_show_to_get_delta_approx_results_foR'_L}.
\end{case}

\end{proof}

\section{Second Facility Location Proportional Mechanism Proofs}\label{sec:sec-fac-proofs}

In this section we show the strategyproofness and the expected approximation ratio of \cref{alg:second_facility_prop_mech}.

\begin{proof}[Proof of \Cref{lem:prop-is-strategyproof}]
    We follow the proof of \cite[Thm~4.1]{lu2010asymptotically}. In the proof they show that given that the first facility is at $x_k$ for any $k \in [n]$, then for any $X' = <x'_i, X_{-i}>$ that is gained from $X$ by agent $i$ deviating from $x_i$ to $x'_i$, the expected cost of agent $i$ only increases by deviating to $x'_i$. The proof does not depend at all on the location of the first facility. Since the first facility is entirely independent of the agent reported values, then the above implies that the choice of the second facility is also strategy proof.
\end{proof}

\begin{proof}[Proof of \Cref{thm:second-facility}]
  The proof is similar in structure to the one given by
  \cite{lu2010asymptotically} for the Proportional-Mechanism; the main
  difference is a careful (and tight) analysis of
  \cref{lem_bound_second_part_of_ALG_T} for the real line.

  Let $G := \{g_S, g_T\}$, 
  $S = \{ i \mid x_i \in X, d(x_i, g_S) \le d(x_i, g_T) \}$ and
  $T = [n] \setminus S$.
  
  For any $M \subseteq X$ let $OPT_M = \kmed[2](M, G)$. So $OPT_S = \kmed[1](S, g_S), OPT_T = \kmed[1](T, g_T)$ and thus
  $OPT = \kmed[2](X,G) = OPT_S + OPT_T$.

  Let $H = (g_S, h)$ be the algorithms solution for the problem, and
  $ALG = \kmed[2](X, H)$.  We denote $ALG_S, ALG_T$ to be
  $\kmed[2](S, H), \kmed[2](T, H)$ respectively, so that
  $ALG = ALG_S + ALG_T$ and therefore
  $ \E \brk{ALG} = \E [ALG_S] + \E [ALG_T] $.

  Since $ALG_S \le \kmed[1](S, g_S) = OPT_S$, we get that
  $ \E \brk*{ALG_S} \le OPT_S$, and we only need to find a bound for
  $ \E \brk*{ALG_T}$.
  \begin{align}
    \E[ALG_T] & = \sum_{i \in [n]} \E[ALG_T\mid h = x_i] Pr \prn*{h = x_i} \\
              &= \sum_{i \in T} \E[ALG_T \mid h = x_i] Pr \prn*{h = x_i} + \sum_{i \in T} \E[ALG_T \mid h = x_i] Pr \prn*{h = x_i} \label{eq_expected_ALG_T}
  \end{align}
In the summation above the first term is the expected cost due to choosing the second facility to be a location in $S$ and the second term is the expected cost due to choosing the second facility to be a location in $T$. We will bound each one individually.

For all $i \in [n]$ let $a_i = d(x_i, g_S)$, $p_i = \frac{a_i}{\sum_{j \in [n] a_j}}$ and $b_i = d(x_i, g_T)$.
So $\sum_{i \in S} a_i = OPT_S$, $\sum_{i \in T} b_i = OPT_T$ and $\sum_{i \in [n]} p_i = 1$. For each $i, j \in [n]$ let $d_{i,j} = d(x_i, x_j)$.
\begin{lem}\label{lem_bound_first_part_of_ALG_T}
    $\sum_{i \in S} \E[ALG_T \mid h = x_i] Pr \prn*{h = x_i} \le OPT_S$
\end{lem}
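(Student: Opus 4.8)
The plan is to exploit the fact that conditioning on the event $\{h = x_i\}$ fixes the entire solution $H = \{g_S, x_i\}$, so the conditional expectation is just a deterministic cost: $\E[ALG_T \mid h = x_i] = \kmed[2](T, \{g_S, x_i\}) = \sum_{j \in T} \min(a_j, d_{i,j})$, where $a_j = d(x_j, g_S)$ and $d_{i,j} = d(x_i, x_j)$ as already defined. The one conceptual ingredient is the observation that every point $j \in T$ always retains the option of being served by the \emph{fixed} facility $g_S$ instead of the randomly drawn one, so its cost is at most $a_j$ no matter which $x_i$ is chosen.

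First I would set $A := \sum_{j \in [n]} a_j$, so that $p_i = a_i / A$ and hence $\sum_{i \in S} p_i = \big(\sum_{i \in S} a_i\big)/A = OPT_S / A$, using that $\sum_{i \in S} a_i = \kmed[1](S, g_S) = OPT_S$. Next, for each fixed $i \in S$ I would bound termwise $\min(a_j, d_{i,j}) \le a_j$, which gives $\E[ALG_T \mid h = x_i] \le \sum_{j \in T} a_j$. The crucial point is that this upper bound does not depend on $i$, so it can be pulled out of the sum over $S$.

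Combining these two facts yields the result directly:
\[
\sum_{i \in S} \E[ALG_T \mid h = x_i]\,\Pr(h = x_i) \;\le\; \Bigl(\sum_{i \in S} p_i\Bigr)\sum_{j \in T} a_j \;=\; \frac{OPT_S}{A}\sum_{j \in T} a_j \;\le\; OPT_S,
\]
where the final inequality uses $\sum_{j \in T} a_j \le \sum_{j \in [n]} a_j = A$.

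I do not expect a genuine obstacle here: the argument is a single ``fall back to $g_S$'' bound followed by a factorization. The only points requiring care are the bookkeeping that $ALG_T$ is the cost of the \emph{cluster} $T$ alone (not all of $X$) and that the bound $\min(a_j, d_{i,j}) \le a_j$ holds with $a_j$ equal to the exact distance $d(x_j, g_S)$. Notably, this half of the analysis needs no geometry of the line and no triangle inequality, which is consistent with the fact that the difficulty in \Cref{thm:second-facility} lies entirely in the complementary term $\sum_{i \in T} \E[ALG_T \mid h = x_i]\Pr(h = x_i)$ (\Cref{lem_bound_second_part_of_ALG_T}), where the real-line structure is used.
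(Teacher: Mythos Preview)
Your proposal is correct and essentially identical to the paper's proof: both expand $\E[ALG_T \mid h = x_i]$ as $\sum_{t \in T}\min(a_t, d_{i,t})$, apply the ``fall back to $g_S$'' bound $\min(a_t, d_{i,t}) \le a_t$, pull the resulting $i$-independent sum $\sum_{t \in T} a_t$ out of the outer summation, and finish using $\sum_{t \in T} a_t \le A$ together with $\sum_{i \in S} a_i = OPT_S$. The only cosmetic difference is that you first compute $\sum_{i\in S} p_i = OPT_S/A$ and then multiply, whereas the paper keeps $p_i = a_i/A$ inside and bounds the fraction $\bigl(\sum_{t\in T} a_t\bigr)/A \le 1$; the algebra is the same.
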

\begin{proof}
    \begin{align*}
        &\sum_{i \in S}
            \E[ALG_T \mid h = x_i] Pr \prn*{h = x_i} =
                \sum_{i \in S}
                    \prn*{ \sum_{t \in T} \min \crl*{a_t, d_{i,t}}}
                    \cdot p_i \\
                &\stackrel{(\star)}{\le} 
                \sum_{i \in S}
                    \prn*{ \sum_{t \in T} a_t}
                    \cdot p_i
                =
                \sum_{i \in S}
                    \prn*{\frac{\sum_{t \in T} a_t}{\sum_{t \in T} a_t + \sum_{s \in S} a_s}}
                    \cdot a_i
                \le \sum_{i \in S} a_i = OPT_S
    \end{align*}

Where inequality $(\star)$ is gained by ignoring the second facility.
\end{proof}
\begin{lem}\label{lem_bound_second_part_of_ALG_T}
    $\sum_{i \in T} \E[ALG_T \mid h = x_i] Pr \prn*{h = x_i} \le 3 OPT_T$
\end{lem}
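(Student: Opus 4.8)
The plan is to reduce the claim to a single combinatorial inequality about the points of $T$ and then prove that inequality by exploiting the linear order. First I would fix coordinates, translating so that $g_S = 0$ and assuming WLOG $g_S < g_T$; then every $t \in T$ lies strictly to the right of the midpoint $\tfrac{1}{2}(g_S+g_T)$, so $a_t = d(x_t,g_S) = x_t > \tfrac{1}{2}|g_S-g_T|$, while $b_t = d(x_t,g_T)$, $b_t \le a_t$, and $OPT_T = \sum_{t\in T} b_t$. Writing $f(i) := \sum_{t\in T}\min\{a_t, d_{i,t}\} = \kmed[2](T,\{g_S,x_i\})$ for the cost incurred by $T$ when the chosen second facility is $x_i$, the left-hand side of the lemma is $\frac{1}{\sum_j a_j}\sum_{i\in T} a_i f(i)$. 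Since $\sum_j a_j \ge \sum_{i\in T} a_i =: P_T$, it suffices to prove the normalization-free bound
\[
\Sigma := \sum_{i,t\in T} a_i \min\{a_t, d_{i,t}\} \;\le\; 3\,P_T\,OPT_T .
\]

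Next I would pass from distances on the line to distances from $g_T$. For pairs where the minimum is achieved by $d_{i,t}$, the triangle inequality through $g_T$ gives $d_{i,t} \le b_i + b_t$, so these contribute at most $\min\{a_t,b_i+b_t\}$. I would explicitly \emph{avoid} the naive Lipschitz estimate $f(i) \le OPT_T + |T|\,b_i$ (obtained by moving the facility from $g_T$ to $x_i$): this massively overcounts a far right outlier $x_i$, whose true cost is capped at $\sum_t a_t = P_T$ because for every $t$ the minimum selects $a_t$. Retaining this cap is exactly what the real line buys over a general metric, and it is what makes the constant $3$ attainable rather than $4$. Concretely, for a fixed center $x_i$ I would split $T$ by position into the points to the right of $x_i$ and just to its left (where $\min\{a_t,d_{i,t}\}=d_{i,t}$, handled by $b_i+b_t$) and the far-left points with $x_i > 2x_t$ (where $\min\{a_t,d_{i,t}\}=a_t$, the capped regime).

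To assemble the bound I would use the elementary inequality $\min\{a_t, b_i+b_t\} \le b_t + \min\{a_t,b_i\}$, valid since $b_t \le a_t$. Summing the $b_t$ term over all pairs produces the clean contribution $\sum_{i\in T} a_i \sum_{t\in T} b_t = P_T\,OPT_T$, leaving the residual $\sum_{i,t\in T} a_i\min\{a_t,b_i\}$ (together with the capped far-left contributions), which I aim to bound by $2\,P_T\,OPT_T$. For the residual I would keep the points of $T$ sorted by position so that $a_t = x_t$ is monotone, and charge each pair using the line identities $a_t = |g_S-g_T| \pm b_t$ and $a_t > \tfrac12|g_S-g_T|$ to convert the ambient scale $|g_S-g_T|$ into ratios between $P_T$ and $OPT_T$; the capped pairs are charged directly against $\sum_t a_t = P_T$.

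The main obstacle is precisely this last accounting: driving the residual down to the constant $2$ with no slack. The stated tight instance — $\nf{n}{2}$ clients at $0$, $\nf{n}{2}-1$ at $1$, and one at $2$ — already saturates the factor $3$, with the expected cost splitting as $1$ (from good centers inside $T$) plus $2$ (from the far client, chosen with small probability $\propto a$ yet inducing cost $\approx P_T$ when selected). Any wasteful application of $d_{i,t}\le b_i+b_t$ on the far-left pairs, or of $a_t > \tfrac12|g_S-g_T|$, overshoots $3$, so the positional split into "far-left'' (min $=a_t$) and "near'' (min $=d_{i,t}$) pairs is essential, and I would verify tightness against this saturating instance at each inequality to confirm that the constant cannot be improved below $3$.
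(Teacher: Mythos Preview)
Your plan starts on the same track as the paper: apply the triangle inequality through $g_T$ to replace $d_{i,t}$ by $b_i+b_t$, pull out $b_t$ to produce one clean $P_T\cdot OPT_T$, and then reduce everything to bounding a residual quadratic form. Up to and including the inequality $\min\{a_t,b_i+b_t\}\le b_t+\min\{a_t,b_i\}$ this is essentially the paper's argument (the paper uses the exact identity $\min\{a_t,b_i+b_t\}=b_t+\min\{a_t-b_t,b_i\}$, which is slightly sharper but not essentially different). Your normalization $\sum_j a_j\ge P_T$ is also exactly the bound the paper uses, just written as $|T|D+OPT_{T_b}-OPT_{T_a}$.

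The gap is at the residual step. You assert that $R=\sum_{i,t\in T}a_i\min\{a_t,b_i\}\le 2\,P_T\,OPT_T$ will follow from a positional split into ``far-left'' and ``near'' pairs plus the identities $a_t=D\pm b_t$ and $a_t>D/2$, but you do not say how the charging actually closes. The positional split is in fact a detour: once the triangle inequality is applied, the $a_t$-cap is already inside the $\min$, so there is nothing extra to handle for ``far-left'' pairs. The real work, which your sketch does not contain, is the weight split $a_i=b_i+(a_i-b_i)$. The $b_i$-weighted half is bounded by $\min\le a_t$ to get one $P_T\,OPT_T$; the $(a_i-b_i)$-weighted half uses $\min\le b_i$ together with the line identity $a_i-b_i\in\{D,\,D-2b_i\}$ (splitting $T$ at $g_T$ into $T_a,T_b$) to reduce to $|T|D\cdot OPT_T-2|T|\sum_{i\in T_a}b_i^2$, after which a QM--AM step on $\sum_{i\in T_a}b_i^2$ and a short case analysis on whether $OPT_{T_a}\gtrless OPT_{T_b}$ finish the bound by $P_T\,OPT_T$. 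None of these three ingredients---the $a_i=b_i+(a_i-b_i)$ split, the QM--AM lower bound on $\sum_{T_a}b_i^2$, or the $T_a/T_b$ case split---appears in your plan, and the vague ``charge each pair'' and ``convert $D$ into ratios between $P_T$ and $OPT_T$'' does not substitute for them. Your tight example indeed saturates the final constant, but checking one instance does not replace the missing argument.
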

\begin{proof}
Let $D = d(g_A, g_B)$.
\begin{align*}
    & \sum_{i \in T}
        \E[ALG_T \mid h = x_i] Pr \prn*{h = x_i}
        =
            \sum_{i \in T}
                \brk*{\prn*{ \sum_{t \in T} \min \crl*{a_t, d(x_t, x_i)}}
                \cdot p_i} \\
        &\stackrel{\textit{triangle inequality}}{\le}
            \sum_{i \in T}
                \brk*{\prn*{ \sum_{t \in T} \min \crl*{a_t, b_t + b_i}}
                \cdot p_i}
        =
            \sum_{i \in T}
                \brk*{\prn*{ \sum_{t \in T}b_t + \sum_{t \in T} \min \crl*{a_t - b_t, b_i}}
                \cdot p_i} \\
        &=
            \sum_{i \in T}
                \brk*{\prn*{ \sum_{t \in T}b_t}
                \cdot p_i}
            +
            \sum_{i \in T}
                \brk*{\prn*{ \sum_{t \in T} \min \crl*{a_t -b_t, b_i}}
                \cdot p_i} \\
        &= 
        \sum_{i \in T}
                \brk*{\prn*{ \sum_{t \in T}b_t}
                \cdot p_i}
            +
        \sum_{i \in T}
                \brk*{\prn*{ \sum_{t \in T} \min \crl*{a_t - b_t, b_i}}
                \cdot \frac{ b_i}{\sum_{j \in [n]} a_j}} \\
    & + 
        \sum_{i \in T}
                \brk*{\prn*{ \sum_{t \in T} \min \crl*{a_t -b_t, b_i}}
                \cdot \frac{a_i - b_i}{\sum_{j \in [n]} a_j}}\label{expected_alg_t_bound_one}
\end{align*}

We bound each of these 3 terms individually by $OPT_T$, and thus get the desired bound.

The first term bound:
\begin{equation}
    \sum_{i \in T}
    \brk*{\prn*{ \sum_{t \in T}b_t}
    \cdot p_i} = \prn*{ \sum_{t \in T}b_t} \sum_{i \in T} p_i \stackrel{}{\le} OPT_T
\end{equation} 
The second term bound:
\begin{align*}
    &\sum_{i \in T}
                \brk*{\prn*{ \sum_{t \in T} \min \crl*{a_t - b_t, b_i}}
                \cdot \frac{b_i}{\sum_{j \in [n]} a_j}}
    \le 
        \sum_{i \in T}
                \brk*{\prn*{ \sum_{t \in T} a_t - b_t }
                \cdot \frac{b_i}{\sum_{j \in [n]} a_j}} \\
    & \le \sum_{i \in T}
                \brk*{\prn*{ \sum_{t \in T} a_t }
                \cdot \frac{b_i}{\sum_{j \in [n]} a_j}}
    =  \sum_{i \in T} b_i = OPT_T
\end{align*}

The third term bound:
\begin{equation} \label{expression_to_bound_separately_by_space_type}
    \sum_{i \in T}
                \brk*{\prn*{ \sum_{t \in T} \min \crl*{a_t -b_t, b_i}}
                \cdot \frac{a_i - b_i}{\sum_{j \in [n]} a_j}}
    \le
        \sum_{i \in T}
                \brk*{\prn*{ \sum_{t \in T} b_i }
                \cdot \frac{a_i - b_i}{\sum_{j \in [n]} a_j}}
    =
        \sum_{i \in T}
                \brk*{ \abs{T} b_i
                \cdot \frac{a_i - b_i}{\sum_{j \in [n]} a_j}}
\end{equation}

We finish the third term bound by showing the following claim
\begin{clm} \label{clm_bound_last_part_by_OPT_T}
    \[ \sum_{i \in T}
                \brk*{ \abs{T} b_i
                \cdot \frac{a_i - b_i}{\sum_{j \in [n]} a_j}} \le OPT_T\]
\end{clm}

\begin{proof}
    
Let $T_a$ be the part of $T$ closer to $g_S$ and $T_b$ be the other part of $T$.

So $T_a = \crl*{i \mid i \in T, d(x_i, g_S) \le d(x_i,g_T)}$, and $T_b = T \setminus T_a$. Since the points are all on the line metric:

For all $i \in T_a$: $a_i = d(g_S, x_i) = d(g_S, g_T) - b_i = D - b_i$.

For all $i \in T_b$: $a_i = d(g_S, x_i) = d(g_S, g_T) + b_i = D + b_i$.

Let us use these facts in \cref{expression_to_bound_separately_by_space_type} to get:
\begin{align*}
    &\sum_{i \in T}
                    \brk*{ \abs{T} b_i
                    \cdot \frac{a_i - b_i}{\sum_{j \in [n]} a_j}}
    =
        \sum_{i \in T_a}
                    \brk*{ \abs{T} b_i
                    \cdot \frac{D - b_i - b_i}{\sum_{j \in [n]} a_j}}
        +
        \sum_{i \in T_b}
                    \brk*{ \abs{T} b_i
                    \cdot \frac{D + b_i - b_i}{\sum_{j \in [n]} a_j}} \\
    &= \frac{\abs{T} D}{\sum_{j \in [n]} a_j} \prn*{ \sum_{i \in T} b_i } - \frac{2 \abs{T} \sum_{i \in T_a} {b_i}^2}{\sum_{j \in [n]} a_j}
    = \frac{\abs{T} D \cdot OPT_T - 2 \abs{T} \sum_{i \in T_a} {b_i}^2}{\sum_{j \in [n]} a_j} \\
    & \stackrel{(\star)}{\le} \frac{\abs{T} D \cdot OPT_T - 2 (OPT_{T_a})^2}{\sum_{j \in [n]} a_j} \stackrel{(\star \star)}{\le} \frac{\abs{T} D - 2 \frac{(OPT_{T_a})^2}{OPT_T}}{\abs{T} D + OPT_{T_b} - OPT_{T_a}} \cdot OPT_T \numberthis \label{ineq:second_fac_last_term}
\end{align*}

$(\star)$ is due to $QM-AM$ inequality since 
\[
    \sum_{i \in T_a} {b_i}^2 \ge \frac{\prn*{\sum_{i \in T_a} b_i}^2}{\abs{T_a}} \stackrel{\abs{T_a} \le \abs{T}}{\ge} (OPT_{T_a})^2 \frac{1}{\abs{T}}.
\]

$(\star \star)$ explanation: \begin{align*}
    &\sum_{j \in [n]} a_j = \sum_{j \in S} a_j + \sum_{j \in T_a} a_j + \sum_{j \in T_b} a_j =  OPT_S + \sum_{j \in T_a} D - b_j + \sum_{j \in T_b} D + b_j \\
    &= OPT_S + D\abs{T} + OPT_{T_b} - OPT_{T_a} \ge \abs{T} D + OPT_{T_b} - OPT_{T_a} 
\end{align*}

We now show that \[
    \frac{\abs{T} D - 2 \frac{(OPT_{T_a})^2}{OPT_T}}{\abs{T} D + OPT_{T_b} - OPT_{T_a}} \le 1.
\]

Indeed, if $OPT_{T_b} \ge OPT_{T_a}$ then $\abs{T}D + OPT_{T_b} - OPT_{T_a} \ge \abs{T} D$ and thus:
\[
    \frac{\abs{T} D - 2\frac{(OPT_{T_a})^2}{OPT_T}}{\abs{T} D + OPT_{T_b} - OPT_{T_a}} \le \frac{\abs{T} D - 2\frac{(OPT_{T_a})^2}{OPT_T}}{\abs{T} D} \le \frac{\abs{T}D}{\abs{T}D} = 1.
\]

Otherwise $OPT_{T_a} - OPT_{T_b} > 0$ and thus:
\begin{align*}
    &OPT_{T_a} > OPT_{T_b} \implies \\
    &OPT_T = OPT_{T_a} + OPT_{T_b} < 2 OPT_{T_a} \implies \\
    &(OPT_{T_a} - OPT_{T_b}) OPT_T \le 2 (OPT_{T_a} - OPT_{T_b}) OPT_{T_a} = 2 (OPT_{T_a})^2 - 2 OPT_{T_b} OPT_{T_a} \le 2 \prn{OPT_{T_a}}^2 \implies \\
    &OPT_{T_a} - OPT_{T_b} \le \frac{2 \prn{OPT_{T_a}}^2}{OPT_T} \implies 
    \frac{\abs{T} D - 2\frac{(OPT_{T_a})^2}{OPT_T}}{\abs{T} D + OPT_{T_b} - OPT_{T_a}} \le \frac{{\abs{T} D + OPT_{T_b} - OPT_{T_a}}}{{\abs{T} D + OPT_{T_b} - OPT_{T_a}}} = 1
\end{align*}

So from \cref{ineq:second_fac_last_term}: $\sum_{i \in T} \brk*{ \abs{T} b_i \cdot \frac{a_i - b_i}{\sum_{j \in [n]} a_j}} \le OPT_T$.

\end{proof}

To summarize:
$\sum_{i \in T} \E[ALG_T \mid h = x_i] Pr \prn*{h = x_i} \le OPT_T + OPT_T + OPT_T = 3 OPT_T$.
\end{proof}

From plugging \cref{lem_bound_first_part_of_ALG_T}, \cref{lem_bound_second_part_of_ALG_T} into \cref{eq_expected_ALG_T} we get:

\[
    \E \brk*{ALG} \le 2 OPT_S + 3 OPT_T \le 3 OPT.
\]
\end{proof}
\end{document}